\newcommand{\myblue}[1]{\color{black}{#1}}
\newcommand{\zblue}[1]{\color{black}{#1}}
\newcommand{\dangblue}[1]{\color{black}{#1}}
\newcommand{\imablue}[1]{\color{black}{#1}}
\newcommand{\imapurple}[1]{\color{black}{#1}}
\newcommand{\myout}{\scalebox{0.7}{\text{out}}}
\newcommand{\rhoh}{\hat{\rho}}
\newcommand{\mysum}{\sideset{}{^{\boldsymbol{*}}}\sum}
\newcommand{\subalign}[1]{%
  \vcenter{%
    \Let@ \restore@math@cr \default@tag
    \baselineskip\fontdimen10 \scriptfont\tw@
    \advance\baselineskip\fontdimen12 \scriptfont\tw@
    \lineskip\thr@@\fontdimen8 \scriptfont\thr@@
    \lineskiplimit\lineskip
    \ialign{\hfil$\m@th\scriptstyle##$&$\m@th\scriptstyle{}##$\crcr
      #1\crcr
    }%
  }
}
\newcommand{\EQ}{\begin{equation}}
\newcommand{\EN}{\end{equation}}
\newcommand{\EQS}{\begin{equation*}}
\newcommand{\ENS}{\end{equation*}}
\newcommand{\EQA}{\begin{eqnarray}}
\newcommand{\ENA}{\end{eqnarray}}
\newcommand{\EQAS}{\begin{eqnarray*}}
\newcommand{\ENAS}{\end{eqnarray*}}
\newcommand{\ds}{\displaystyle}
\newlength{\dhatheight}
\newcommand{\myin}{\scalebox{0.7}{\text{in}}}
\newcommand{\mymin}{\scalebox{0.55}{$\min$}}
\newcommand{\mymax}{\scalebox{0.55}{$\max$}}
\newcommand{\myx}{\scalebox{0.6}{$x$}}
\newcommand{\myy}{\scalebox{0.6}{$y$}}
\newcommand{\taus}{\tau_{m}}
\newcommand{\Bs}{\gbt^{\al}}
\newcommand{\x}{{\bf{x}}}
\newcommand{\xh}{{\bf{\hat{x}}}}
\newcommand{\al}{\alpha}
\newcommand{\Al}{\mathcal{A}}
\newcommand{\Lcal}{\mathcal{L}}
\newcommand{\Ocal}{\mathcal{O}}
\newcommand{\0}{{\mathbf{0}}}
\newcommand{\U}{{\bf{u}}}
\newcommand{\Ut}{{\Tilde{\bf{u}}}}
\newcommand{\V}{{\bf{v}}}
\newcommand{\Vt}{{\Tilde{\bf{v}}}}
\newcommand{\gb}{{\bf{g}}}
\newcommand{\gbt}{{\Tilde{\bf{g}}}}
\newcommand{\gbh}{{\widehat{\bf{g}}}}
\newcommand{\Rbb}{\mathbb{R}}
\newcommand{\Nbb}{\mathbb{N}}
 \newcommand{\Jbb}{\mathbb{J}}
\newcommand{\Nd}{\mathbb{N}^{\dagger}}
\newcommand{\Jd}{\mathbb{J}^{\dagger}}
\newcommand{\Oinf}{\Omega^{\scalebox{0.5}{$\infty$}}}
\newcommand{\C}[1]{{\mathcal{C}^{\infty}{(#1)}}}
\numberwithin{equation}{section}
\numberwithin{table}{section}
\numberwithin{figure}{section}
\newtheorem{definition}{Definition}
\newtheorem{theorem}{Theorem}
\newtheorem{lemma}{Lemma}
\newtheorem{remark}{Remark}
\numberwithin{definition}{section}
\numberwithin{theorem}{section}
\numberwithin{lemma}{section}
\numberwithin{remark}{section}
\numberwithin{assumption}{section}
\numberwithin{condition}{section}
\numberwithin{property}{section}
\numberwithin{proposition}{section}
\numberwithin{corollary}{section}
\numberwithin{algorithm}{section}
\def\r{\right}
\def\l{\left}
\renewcommand*\env@matrix[1][c]{\hskip -\arraycolsep
  \let\@ifnextchar\new@ifnextchar
  \array{*\c@MaxMatrixCols #1}}
\def\thm@space@setup{\thm@preskip=3pt
\thm@postskip=3pt}
\DeclareMathOperator{\T}{{\scalebox{0.6}{T}}}
\begin{document}

\title{A monotone piecewise constant control integration approach for
the two-factor uncertain volatility model}

\author{
Duy-Minh Dang\thanks{School of Mathematics and Physics, The University of Queensland, St Lucia, Brisbane 4072, Australia,
email: \texttt{duyminh.dang@uq.edu.au}
}
\and
Hao Zhou \thanks{School of Mathematics and Physics, The University of Queensland, St Lucia, Brisbane 4072, Australia,
email: \texttt{h.zhou3@uq.net.au}
}
}
\date{\today}
\maketitle
\begin{abstract}
Option contracts on two underlying assets within uncertain volatility models have their worst-case and best-case prices determined by a two-dimensional (2D) Hamilton-Jacobi-Bellman (HJB) partial differential equation (PDE) with cross-derivative terms.
This paper introduces a novel \mbox{``decompose and integrate, then optimize''} approach to tackle this HJB PDE. Within each timestep, our method applies piecewise constant control, yielding a set of independent linear 2D PDEs, each corresponding to a discretized control value. Leveraging closed-form Green's functions, these PDEs are efficiently solved via 2D convolution integrals using a monotone numerical integration method.
The value function and optimal control are then obtained by synthesizing the solutions of the individual PDEs. For enhanced efficiency, we implement the integration via Fast Fourier Transforms, exploiting the Toeplitz matrix structure.
{\imapurple{The proposed method is $\ell_{\infty}$-stable}}, consistent in the viscosity sense, and converges to the viscosity solution of the HJB equation. Numerical results show excellent agreement with benchmark solutions obtained by finite differences, tree methods, and Monte Carlo simulation, highlighting its robustness and effectiveness.

\vspace{.1in}

\noindent
\noindent
{\bf{Keywords:}} uncertain volatility, Hamilton-Jacobi-Bellman, viscosity solution, piecewise constant control, monotone, numerical integration
\vspace{.1in}

\noindent\noindent {\bf{AMS Subject Classification:}} 65D30, 65M12, 90C39, 49L25, 93E20, 91G20
\end{abstract}

\section{Introduction}
\label{sec:intro}
The uncertain volatility model is an approach in quantitative finance where the instantaneous volatility of a risky asset is allowed to vary within a specified range \cite{lyons95, avellaneda95, BJMJ2022}. This stands in contrast to the more traditional approaches where volatility is often assumed to be either deterministic (as in the Black-Scholes model) or stochastic (as in the Heston model \cite{heston93} or the SABR model \cite{hagan2002managing}).
While stochastic volatility models can deliver a more detailed depiction of volatility's dynamic evolution and its interaction with asset prices, uncertain volatility models are particularly well-suited for worst-case scenario analysis.
Specifically, although the price of a financial contract is no longer unique under an uncertain volatility model,
for risk management, especially for sellers, the primary concern often lies in the worst-case scenario, which corresponds to the contract's maximum value.  Conversely, for the buyers, the worst-case scenario corresponds to the minimum potential value of a contract.
It is worth noting that the worst-case scenario for the seller of a contract is essentially the buyer's best-case scenario,  and vice versa.
The maximum and minimum value of a contract can be formulated as solution to a Hamilton-Jacobi-Bellman (HJB) equation, which
needs to be solved numerically \cite{pooley2003, MaForsyth2015, dokuchaev1998pricing, avellaneda1999combinatorial, smith2002american, pooley2003two, jaroszkowski2022valuation, RF2016}.

Provable convergence of numerical methods for (multi-dimensional) HJB equations are typically built upon
the framework established by  Barles and Souganidis in \cite{barles-souganidis:1991}.
This framework requires numerical methods to be (i) $\ell_{\infty}$-stable, (ii) consistent, and (iii) monotone (in the viscosity sense),
provided that a strong comparison result holds. Among these requirements, monotonicity is often  the most challenging to achieve. Non-monotone schemes could produce numerical solutions that fail to converge to viscosity solutions, resulting in a violation of the no-arbitrage principle (see, for example,  \cite{Oberman2006, pooley2003, warin2016}, among many other publications).

To the best of our knowledge, numerical techniques for HJB PDEs are predominantly dominated by finite difference (FD) methods. 
At each timestep, these methods typically involve discretizing the temporal and spatial partial derivatives in the HJB equation using finite difference (FD) schemes. It~is well-known that explicit time-stepping methods, while computationally simple, are subject to the Courant–Friedrichs–Lewy (CFL) condition, which imposes stringent restrictions on the timestep size to ensure stability of FD schemes \cite{courant1928partiellen, courant1967partial}. To circumvent this constraint, fully implicit timestepping is often employed in conjunction with a positive coefficient discretization method \cite{wang08, forsyth2007numerical}.
This combination ensures the monotonicity of the numerical schemes. The optimal control is subsequently determined by solving the resulting nonlinear discretized equations, often via variants of policy iteration \cite{Bok2009}. During this process, a local optimization problem at each grid point is addressed in every policy iteration. This conventional approach is succinctly termed “discretize, then optimize” \cite{RF2016}. Importantly, the positive coefficient discretization method provides a sufficient condition ensuring the convergence of policy iteration, regardless of the initial iterate. As such, this condition must be satisfied at each policy iteration.

We highlight that multi-dimensional HJB PDEs, including those from two-factor uncertain volatility models \cite{MaForsyth2015}, pose significant challenges due to cross derivative terms when the correlation between the two underlying risky assets is non-zero. At each policy iteration, construction of a monotone finite difference scheme via a positive coefficient discretization method is often addressed using a local coordinate rotation of the computational stencil. Originally developed for explicit wide stencil schemes in \cite{bonnans2003consistency, DJ12}, this method was refined in \cite{MaForsyth2015} for a fully implicit timestepping, circumventing timestep stability restrictions. However, as noted in \cite{MaForsyth2015}, this approach adds a significant computational overhead. For further details of numerical techniques for one- and two-dimensional HJB PDEs
resulting from one-factor and two-factor uncertain volatility models (with an uncertain correlation between the two  underlying risky assets), we refer the reader to \cite{pooley2003} and \cite{MaForsyth2015}, respectively.

In this paper, we present a streamlined approach to tackle the two-dimensional (2D) HJB PDE stemming from two-factor uncertain volatility models. Moving beyond the conventional ``discretize, then optimize'',  we introduce a ``decompose and integrate, then optimize'' approach.
In each timestep, we employ a piecewise constant control technique \cite{krylov99}, which yields a set of independent 2D linear PDEs, each corresponding to a fixed control value. Rather than discretizing the temporal and spatial derivatives of these PDEs, we utilize a Green's function representation to express each solution at the next time point as a
convolution integral.
{\imapurple{This integral is then evaluated with an unconditionally monotone
numerical integration scheme, thereby bypassing derivative-based discretization,
circumventing explicit–scheme CFL restrictions, and yielding an $\ell_{\infty}$-stable method.}}
The optimal value function and control are then obtained by synthesizing the solutions of these linear PDEs, significantly simplifying the process compared to policy iteration and avoiding the aforementioned challenges associated with positive coefficient FD discretization of cross-derivative terms.
This  approach is in line with recent developments in monotone and $\epsilon$-monotone
numerical integration methods for control problems in finance, which also merit attention
\cite{ForsythLabahn2017, online23, lu2024semi, LuDang2023, ZhouDang2025, du2024fourier, zhang2023monotone}. We note a recent study \cite{RF2016} that also utilizes a piecewise constant control technique. However, this work remains anchored in the finite difference framework and incorporates a switching system, thereby necessitating interpolation when searching for the optimal control.

The main contributions of our paper are outlined below.
\begin{itemize}

\item[(i)]
The maximum and minimum value of an option contract under a two-factor uncertain volatility model with uncertain correlation
is presented as an HJB PDE posed on an finite definition domain consisting of an interior and boundary sub-domains
with appropriate boundary conditions.

\item[(ii)]

We develop a monotone piecewise constant control integration scheme for the HJB equation that, at each timestep, solves a set of independent linear 2D PDEs corresponding to discretized controls. Leveraging the known closed-form Fourier transforms of the associated Green's functions, we derive explicit expressions for these functions. Using these expressions, we approximate the solutions of the linear PDEs via 2D convolution integrals evaluated with a monotone numerical integration method. These solutions are then combined to approximate the value function and optimal control, thereby capturing the nonlinearity of the HJB equation.

Our scheme not only simplifies the optimization process compared to policy iteration but also avoids the usual complications with positive coefficient FD discretization of cross derivative terms. The availability of the Green's functions in closed form enables a systematic and quantifiable approach for determining computational domain sizes, marking a significant advantage over the heuristic or trial-and-error methods common in FD and tree techniques.
Furthermore, the Green's function's ``cancellation property'' \cite{garronigreenfunctionssecond92} effectively mitigates the impact of errors in artificial boundary conditions. These combined factors ensure that our method significantly enhances the numerical solution's accuracy and reliability.

\item[(iii)]
Utilizing the Toeplitz matrix structure, we present an efficient implementation of our monotone piecewise constant control integration scheme using FFTs and circulant convolution. The implementation process includes expanding the inner summation’s convolution kernel into a circulant matrix, followed by expanding the kernel for the double summation to achieve a circulant block arrangement. This allows the circulant matrix-vector product to be efficiently computed as a circulant convolution using 2D FFTs.

\item[(iv)]
We mathematically demonstrate that the proposed monotone scheme is also
{\imapurple{$\ell_{\infty}$-stable}} and consistent in the viscosity sense, proving its pointwise convergence to the viscosity solution of the 2D HJB PDE as the discretization parameter approaches zero.

\item[(v)] Extensive numerical results show remarkable agreement with benchmark solutions
from monotone FD, tree-grid methods, and Monte Carlo simulation, underscoring the effectiveness of our approach. In particular, for convex payoffs, our method achieves high accuracy
with a single time step, and even in general settings with multiple time steps, it significantly outperforms unconditionally monotone FD schemes in both accuracy and run time, as reported in the literature. Notably, for general payoffs, we often observe experimentally
first-order convergence, significantly exceeding the $1/6$ rate proved in \cite{krylov99}
via purely probabilistic techniques.

\end{itemize}
Although our focus is specifically on monotone piecewise constant control integration methods for two-factor uncertain volatility models with uncertain correlation, our comprehensive and systematic approach could serve as a numerical and convergence analysis framework for the development of similar piecewise constant control monotone integration methods for other HJB PDEs arising in finance.

The remainder of the paper is organized as follows.
In Section~\ref{sc:formulation}, we briefly describe the two-factor uncertain volatility model and present a 2D HJB PDE.
We then define a localized problem for this HJB equation, including conditions for boundary sub-domains.
A simple and easy-to-implement monotone piecewise constant control integration scheme via a composite 2D quadrature rule is described in Section~\ref{section:num}. In Section~\ref{sc:conv}, we mathematically establish convergence the proposed piecewise constant control monotone integration scheme
to the viscosity solution of the 2D HJB PDE. Numerical results  are given in Section~\ref{sec:Numeri_exp}.
Section~\ref{sc:conclude} concludes the paper and outlines possible future work.

\section{Formulation}
\label{sc:formulation}
Let $T>0$ be a finite investment horizon. For each $t \in [0, T]$, we denote by  $X_t$ and $Y_t$ the prices at time $t$ of two distinct underlying assets.
In this paper, for brevity, we occasionally employ the subscript/superscript $z \in \{x, y\}$ to indicate that the discussion pertains to quantities related to the respective underlying assets. We assume that the risk-neutral dynamics of the process $\{Z_t\}_{t \in [0, T]}$, where $Z_t$ can be either $X_t$ or $Y_t$, follow
\EQA
\label{eq:2SDE_GBM}
d Z_{ t} = r Z_t  dt + \sigma_z Z_t dW_t^{z}, \quad Z_0 > 0 \text{ given,} \quad Z_t \in \{X_t,Y_t\}, \quad t \in (0, T].
\ENA
Here, $r>0$ is the risk-free interest rate; $\sigma_z>0$, $z\in \{x,y\}$, respectively are the instantaneous volatility for the associated underlying asset; $\{W_t^{z}\}_{t \in [0, T]}$ are correlated Brownian motions, with $dW^{x}_tdW^{y}_t=\rho dt$, where $-1\le \rho\le 1$ is the correlation parameter. In the uncertain volatility model, the instantaneous volatility  $\sigma_z$, $z\in\{x,y\}$, in \eqref{eq:2SDE_GBM} are uncertain, but are assumed to lie within a known range \cite{Lyons1995}. That is, $\sigma_z \in [\sigma^{z}_{\mymin},\sigma^{z}_{\mymax}]$, $z\in\{x,y\}$,
where $0< \sigma^{z}_{\mymin} < \sigma^{z}_{\mymax}$ are pre-determined and fixed constants.
In addition, the correlation between the two underlying assets is also permitted to be uncertain, lying within a known range, i.e.\ $\rho\in[\rho_{\mymin},\rho_{\mymax}]$, where $-1\le \rho_{\mymin}\le  \rho_{\mymax} \le 1$
are also pre-determined and fixed  constants.
In this setting, since the instantaneous volatilities $\sigma_{z}$, $z\in\{x,y\}$ and the correlation $\rho$ are uncertain,
the price of an option is no longer unique. However, for hedging purposes, we can determine the worst-case prices for the long or short positions.
These prices are essentially the hedging costs for the associated positions.

For the underlying asset processes  $\{X_t, Y_t\}$, $t \in [0, T]$, defined in \eqref{eq:2SDE_GBM},
we let $(x^{\prime},y^{\prime})$ be the state of system. We denote by $v^{\prime}(x^{\prime},y^{\prime}, t)$ the time-$t$ worst-case price of the short or long position in a European option contract with time-$T$  payoff given by function $p(x^{\prime},y^{\prime})$. By dynamic programming, $v^{\prime}(x^{\prime},y^{\prime}, t)$ is shown to satisfy the HJB PDEs
\begin{subequations}
\label{eq:UVM_HJB}
\begin{empheq}[left={
0 = \empheqlbrace}]{alignat=8}
& \big(- v^{\prime}_{t} -\sup_{\alpha \in \Al^{\prime}}\Lcal^{\prime}_{\alpha} v^{\prime}\big), ~
\text{or} ~
\big(-v^{\prime}_{t} -\inf_{\alpha \in \Al^{\prime}}\Lcal^{\prime}_{\alpha} v^{\prime}\big),
                 &\quad(x^{\prime},y^{\prime},t) &\in  \Rbb^+ \times \Rbb^+ \times [0, T),
\label{eq:UVM_HJB_a}
\\
&v^{\prime}(x^{\prime},y^{\prime},t) - p(x^{\prime},y^{\prime}), &\quad (x^{\prime},y^{\prime}, t) &\in  \Rbb^+ \times \Rbb^+ \times \{T\}.
\label{eq:UVM_HJB_b}
\end{empheq}
\end{subequations}
In \eqref{eq:UVM_HJB}, the $\sup_{\alpha}$ and $\inf_{\alpha}$ correspond to the
worst-case for the short and for the long positions, respectively;
$\alpha$ is the control, where $\alpha = (\sigma_x,\sigma_y,\rho)$;
the differential operator $\Lcal^{\prime}_\alpha(\cdot)$, where the subscript indicates
its dependence $\alpha$, is defined as
\EQA
\label{eq:L_UVM_d}
\Lcal^{\prime}_{\alpha} v^{\prime} = \frac{(\sigma_x)^2{(x^{\prime})}^2}{2}v^{\prime}_{x^{\prime}x^{\prime}}  + rx^{\prime}v^{\prime}_{x^{\prime}}+
\frac{(\sigma_y)^2(y^{\prime})^2}{2}v^{\prime}_{y^{\prime}y^{\prime}}+ ry^{\prime}v^{\prime}_{y^{\prime}}+\rho \sigma_x\sigma_yx^{\prime}y^{\prime}v^{\prime}_{x^{\prime}y^{\prime}} - rv^{\prime}.
\ENA
The admissible control set, denoted by $\Al^{\prime}$, is given by
\begin{align}
\label{eq:adm_control}
\Al^{\prime}&=\Al_x\times\Al_y\times\Al_\rho, \text{ where }
\Al_x\equiv\l[\sigma^{\myx}_{\mymin},\sigma^{\myx}_{\mymax}\r],~\Al_y\equiv\l[\sigma^{\myy}_{\mymin},\sigma^{\myy}_{\mymax}\r],
~\Al_\rho\equiv[\rho_{\mymin},\rho_{\mymax}],
\\
~
& \qquad \qquad 0 < \sigma^{z}_{\mymin} < \sigma^{z}_{\mymax} < \infty,~z \in \{x,y\}, ~ -1\le \rho_{\mymin} \le  \rho_{\mymax} \le 1.
\nonumber
\end{align}

{\dangblue{
\begin{remark}[Restriction of control set $\Al^{\prime}$]
\label{adm_con_set}
The literature highlights that the optimal value for the objective function in \eqref{eq:UVM_HJB} can be accurately determined by considering only the boundary values within the 3D admissible optimal control set  $\Al^{\prime}$.
See, for example, \cite{MaForsyth2015}[Proposition~3.1] and \cite{kossaczky20192d}.
Specifically, it is established that the search for optimal control can be limited to a much smaller set $\Al$, defined as:
\EQ
\label{eq:adm_con}
\Al = \big((\{\sigma^{\myx}_{\mymin},\sigma^{\myx}_{\mymax}\}\times \Al_y)\cup (\Al_x\times \{\sigma^{\myy}_{\mymin},\sigma^{\myy}_{\mymax}\})\big)\times\{\rho_{\mymin},\rho_{\mymax}\},
\EN
where $\Al_x$ and $\Al_y$ are defined in \eqref{eq:adm_control}.
Consequently, we focus our analysis on the boundary set $\Al$, enhancing the efficiency of the proposed piecewise constant control scheme by eliminating the need to search across the entire 3D set $\Al^{\prime}$.

The aforementioned restriction presumes the existence of second-order partial derivatives, which, despite appearing restrictive, is consistent with the viscosity solution framework that utilizes smooth test functions.
Unlike traditional grid-based methods such as FD and tree-grid \cite{MaForsyth2015, kossaczky20192d}, which require discretizing the differential operator and may not always yield optimal values at $\Al$, our approach bypasses differential operator discretization.  This not only resolves related issues but also simplifies the optimization process, offering a more direct and user-friendly path to identifying optimal control values, highlighting the method's practicality and ease of implementation.
\end{remark}
}}
Let $\tau = T - t$, and we apply the change of variables $x = \ln(x^{\prime})\in(-\infty,\infty)$ and $y = \ln(y^{\prime})\in(-\infty,\infty)$. Let $\x = (x,y,\tau)$, and denote by $v(\x) \equiv v(x,y,\tau) = v^{\prime}(e^x, e^y, T -t)$. With these in mind, formulation \eqref{eq:UVM_HJB} becomes
\begin{subequations}
\label{eq:omega_inf_all}
\begin{empheq}[left={
0 = \empheqlbrace}]{alignat=8}
&\big(v_{\tau}-\sup_{\al \in \Al}   \Lcal_{\al} v\big),
\text{ or } \big(v_{\tau}-\inf_{\al \in \Al}   \Lcal_{\al} v\big),
                 &&\qquad \x \in \Rbb\times \Rbb \times (0, T],
\label{eq:omega_inf_all_a_h}
\\
&v(\x) - p\l(e^{x},e^{y}\r), &&\qquad\x \in  \Rbb \times \Rbb \times \{0\},
\label{eq:omega_inf_all_b_h}
\end{empheq}
\end{subequations}
where $(x,y,\tau) \in \Rbb \times \Rbb \times [0,T]$ and the differential operator $\Lcal_\alpha (\cdot)$
is given by
\EQA
\label{eq:LcalX}
\Lcal_{\al} v
=\frac{(\sigma_{x})^2}{2}v_{xx}  + \l(r-\frac{(\sigma_{x})^2}{2}\r)v_{x}+\frac{(\sigma_{y})^2}{2}v_{yy}+ \l(r-\frac{(\sigma_{y})^2}{2}\r)v_{y}+\rho \sigma_{x}\sigma_{y}v_{xy} - rv.
\ENA
Without loss of generality, we only consider the $\sup_{\alpha}$ problem, i.e.\ worst-case for the short position,  in the following discussion.
The theoretical analysis of this paper holds for the $\inf_{\alpha}$ problem as well.

\subsection{Localization and definition}
\label{ssec:LD}
For the problem statement and convergence analysis of numerical schemes, we define a localized two-factor uncertain volatility model pricing problem.
\begin{figure}[!ht]
\begin{minipage}{0.55\linewidth}
To this end,  with $x^{\dagger}_{\min}<x_{\min} < 0 < x_{\max}<x^{\dagger}_{\max}$, $y^{\dagger}_{\min}<y_{\min} < 0 < y_{\max}<y^{\dagger}_{\max}$, where $|x^{\dagger}_{\min}|$, $|x_{\min}|$, $|y^{\dagger}_{\min}|$,$|y_{\min}|$, $x_{\max}$, $x^{\dagger}_{\max}$, $y_{\max}$ and $y^{\dagger}_{\max}$ are chosen sufficiently large, we define the following sub-domains:
\begin{linenomath}
\begin{align}
\label{eq:sub_domain_whole}
\Omega
&= [x^{\dagger}_{\mymin}, x^{\dagger}_{\mymax}] \times[y^{\dagger}_{\mymin}, y^{\dagger}_{\mymax}]\times [0,T],
\nonumber
\\
\Omega_{\tau_0}
& = [x^{\dagger}_{\mymin}, x^{\dagger}_{\mymax}] \times[y^{\dagger}_{\mymin}, y^{\dagger}_{\mymax}]\times \{0\},
\nonumber
\\
\Omega_{\myin}  &=(x_{\mymin}, x_{\mymax}) \times(y_{\mymin}, y_{\mymax}) \times (0,T],
\\
\Omega_{\myout}
&=  \Omega\setminus\Omega_{\tau_0}\setminus\Omega_{\myin}.
\nonumber
\end{align}
\end{linenomath}
An illustration of the sub-domains for the localized problem corresponding to a fixed $\tau \in (0, T]$
is given in Figure~\ref{fig:domain}.
\end{minipage}
~~~
\hspace*{-0.5cm}
\begin{minipage}{0.45\linewidth}
\begin{center}
\begin{tikzpicture}[scale=0.45]
   \draw [dashed, line width=0.5pt] [stealth-stealth] (-4,1.5) -- (12,1.5);
   \draw [thick]  (1,5) -- (7,5);
   \draw [dashed, line width=0.5pt][stealth-stealth](3,-5) --(3,9);

   \draw [thick](7,-1) --(7,5);

   \draw [thick]  (1,-1) -- (1,5);

   \draw [thick]  (1,-1) -- (7,-1);

   \draw [thick](-2,-3.5) -- (-2,7.5);

   \draw [thick](-2,7.5) -- (10, 7.5);

   \draw [thick] (10,-3.5) -- (10,7.5);

   \draw [thick](-2,-3.5) -- (10,-3.5);
   \node at (4,3) {\scalebox{1.0}{$\Omega_{\myin}$}};
   \node at (-0.5,2.5) {\scalebox{1.0}{$\Omega_{\myout}$}};
      \node at (8.5,2.5) {\scalebox{1.0}{$\Omega_{\myout}$}};
   \node at (4.5,-2) {\scalebox{1.0}{$\Omega_{\myout}$}};
     \node at (4,6.5) {\scalebox{1.0}{$\Omega_{\myout}$}};
   \node [below] at (-3,3) {\scalebox{0.8}{$x_{\min}^{\dagger}$}};
   \node [below] at (11,3) {\scalebox{0.8}{$x_{\max}^{\dagger}$}};
   \node [right] at (3,-4) {\scalebox{0.8}{$y_{\min}^{\dagger}$}};
   \node [right] at (3,8) {\scalebox{0.8}{$y_{\max}^{\dagger}$}};
   \node [below] at (0,1.5) {\scalebox{0.8}{$x_{\min}$}};
   \node [below] at (8,1.5) {\scalebox{0.8}{$x_{\max}$}};
   \node [above] at (4,-1) {\scalebox{0.8}{$y_{\min}$}};
   \node [below] at (4,6) {\scalebox{0.8}{$y_{\max}$}};
\end{tikzpicture}
\end{center}
\vspace*{-0.5cm}
\caption{Spatial definition sub-domain at each $\tau$.}
\label{fig:domain}
\end{minipage}
\end{figure}

\noindent We now present equations for sub-domains defined in \eqref{eq:sub_domain_whole}.
\begin{itemize}
\item For $(x, y, \tau)\in \Omega_{\myin}$, we have \eqref{eq:omega_inf_all}.

\item For $(x, y, \tau)\in \Omega_{\tau_0}$,
we use the initial condition $v(x, y, 0) =  p(e^x, e^y)$.

\item For the outer boundary sub-domain $\Omega_{\myout}$,
boundary conditions are generally informed by financial reasonings or derived from the asymptotic behavior of the solution. In this study, we implement a straightforward Dirichlet condition based on discounted payoff as follows
\EQA
\label{eq:vxymax}
v(x, y, \tau) = p(e^x, e^y) e^{-r\tau}, \quad (x, y, \tau) \in \Omega_{\myout}.
\ENA
\end{itemize}
While more sophisticated boundary conditions might involve the asymptotic properties of the HJB equation \eqref{eq:omega_inf_all}) as
$z \to -\infty$ or $z \to \infty$, where $z \in \{x, y\}$, our observations indicate that these sophisticated boundary conditions do not significantly impact the accuracy of the numerical solution within  $\Omega_{\myin}$. This observation is largely due to
the so called ``cancellation property'' of the Green's function \cite{garronigreenfunctionssecond92}, which effectively mitigates the impact of approximation errors in artificial boundary condition  behavior on the solution in $\Omega_{\myin}$.
This will be  illustrated through numerical experiments in Subsection~\ref{sec:constant_pad}.

With ${\bf{x}} = (x, y, \tau)$, we let $Dv({\bf{x}}) = \l(v_{x},v_{y}, v_{\tau}\r)$ and
$D^2 v( {\bf{x}} ) = \l(v_{xx},v_{yy}, v_{xy}\r)$, and define
\EQA
\label{eq:Fomega_def}
F_{\Omega}\left({ \bf{x}},
                    v({\bf{x}}),
                    Dv({\bf{x}}),
                    D^2 v({\bf{x}})
                \right)
~=~
\left\{
\begin{array}{lllll}
    F_{\myin}\left( { \bf{x}},
                    v({\bf{x}}),
                    Dv({\bf{x}}),
                    D^2 v({\bf{x}})
                \right),
           &\qquad
           \mathbf{x} \in \Omega_{\myin},
\\
F_{\myout}\left( { \bf{x}},
                    v({\bf{x}}),
                    Dv({\bf{x}}),
                    D^2 v({\bf{x}})
                \right),
                    &\qquad
                    \mathbf{x} \in \Omega_{\myout},
\\
F_{\tau_0}\left( { \bf{x}},
                    v({\bf{x}}),
                    Dv({\bf{x}}),
                    D^2 v({\bf{x}})
                \right),
        &\qquad
    \mathbf{x} \in \Omega_{\tau_0},
\end{array}
\right.
\ENA
with operators
\EQA
F_{\myin}\left(\cdot\right) &=&
        v_\tau-\sup_{\al \in \Al}  \Lcal_{\al} v ,
\label{eq:Finn}
\\
F_{\myout}\left(\cdot\right) &=& v- p(e^x, e^y)e^{-r\tau},
\label{eq:fout}
\\
F_{\tau_0}\left( \cdot \right) &=&
       v-    p(e^x, e^y).
\label{eq:ftau0}
\ENA

\begin{definition}[Two-factor uncertain volatility pricing problem]
\label{def:uvm_def}
The  pricing problem for the two-factor uncertain volatility model is defined as
\EQA
\label{eq:uvm_def}
F_{\Omega}\left( { \bf{x}},
                    v({\bf{x}}),
                    Dv({\bf{x}}),
                    D^2 v({\bf{x}})
                \right) ~=~ 0,
\ENA
where the operator $F_{\Omega}(\cdot)$
is defined in \eqref{eq:Fomega_def}.
\end{definition}
We recall the notions of the upper semicontinuous (u.s.c.\ in short)
and the lower semicontinuous (l.s.c.\ in short) envelops of a function
$u: \mathbb{X} \rightarrow \mathbb{R}$, where $\mathbb{X}$ is a closed
subset of $\mathbb{R}^n$. They are respectively denoted by $u^*(\cdot)$
(for the u.s.c.\ envelop) and $u_*(\cdot)$ (for the l.s.c.\ envelop),
and are given by
\EQA
\label{eq:envelop}
u^*({\bf{\hat{x}}}) = \limsup_{
    \subalign{{\bf{x}} &\to {\bf{\hat{x}}}
\\
{\bf{x}}, {\bf{\hat{x}}} &\in\mathbb{X}
}}
u({\bf{x}})
\quad
(\text{resp.}
\quad
u_*({\bf{\hat{x}}}) = \liminf_{
    \subalign{{\bf{x}} &\to {\bf{\hat{x}}}
\\
{\bf{x}}, {\bf{\hat{x}}} &\in\mathbb{X}
}}
u({\bf{x}})
).
\ENA

\begin{definition}[Viscosity solution of equation \eqref{eq:uvm_def}]
\label{def:vis_def_common}
A~locally bounded function $v: \Omega \to \Rbb$ is a viscosity subsolution (resp.\ supersolution) of \eqref{eq:uvm_def} if for all test function $\phi \in  \mathcal{C}^{\infty}(\Omega)$
and for all points ${\bf{\hat{x}}} \in \Omega$ such that
$v^*-\phi$ has a \emph{global} maximum on $\Omega$ at ${\bf{\hat{x}}}$
and $v^*({\bf{\hat{x}}}) = \phi({\bf{\hat{x}}})$
(resp.\ $v_*-\phi$ has a \emph{global} minimum on $\Oinf$ at ${\bf{\hat{x}}}$
and $v_*({\bf{\hat{x}}}) = \phi({\bf{\hat{x}}})$), we have
\begin{eqnarray}
\label{eq:Def1}
\left(F_{\Omega}\right)_* \left({\bf{\hat{x}}}, \phi({\bf{\hat{x}}}), D\phi({\bf{\hat{x}}}), D^2 \phi({\bf{\hat{x}}})
             \right) &\leq  & 0,
    \\
    \big(
\text{resp.\ }
\quad
      \left(F_{\Omega}\right)^* \l(
              {\bf{\hat{x}}}, \phi({\bf{\hat{x}}}), D\phi({\bf{\hat{x}}}), D^2 \phi({\bf{\hat{x}}})
             \r) &\geq & 0,
\big)\nonumber
\end{eqnarray}
where the operator $F_{\Omega}(\cdot)$ is defined in \eqref{eq:Fomega_def}.

\end{definition}

%
\begin{remark}[Strong comparison result and convergence region]
\label{rm:strong}
Under standard conditions in viscosity-solution theory \cite{crandall1983viscosity, crandall-ishii-lions:1992, barles-souganidis:1991},
if the payoff function $p(e^{x},e^{y})$ is continuous and exhibits at most quadratic growth in $e^{x}$ and $e^{y}$, then the value function of the HJB problem \eqref{eq:omega_inf_all}---defined on the unbounded
domain $\mathbb{R}^2 \times [0, T]$---satisfies a strong comparison principle
\cite{Pham, guyon2010uncertain, MaForsyth2015}.{\imapurple{In terms of the original variables $x' = e^x$ and $y' = e^y$ used in \eqref{eq:UVM_HJB_b}, this corresponds to at most quadratic growth in $x'$ and $y'$. This condition is satisfied by all standard equity-option payoffs, such as plain-vanilla calls/puts, rainbow or basket options, spreads, and butterfly options, which grow at most linearly in the underlying prices. These include the payoffs considered in this work.}} Consequently, there is a unique continuous viscosity solution of \eqref{eq:omega_inf_all} in $\mathbb{R}^2 \times [0, T]$.

In the present paper, we focus on the finite interior sub-domain
$\Omega_{\myin}$ in \eqref{eq:uvm_def}, with Dirichlet boundary conditions on
$\Omega_{\myout}$ and initial conditions on $\Omega_{\tau_0}$.
Since a strong comparison result already holds on the original unbounded domain, it also remains valid locally within $\Omega_{\myin}$ {\imapurple{\cite[Theorem~3.1, pp.\ 372–374]{Ishii1987} and \cite[Chapter 2, Theorem~3.1, pp.\ 51-56]{Bardi1997}}}. In particular, there is a unique continuous viscosity solution of \eqref{eq:uvm_def} in $\Omega_{\myin}$.

Finally, we note that, in general, continuity of the solution across the boundary $\partial \Omega_{\myin}$ is not guaranteed, as loss of boundary data may occur when  $\tau \to 0$, $x \to \{x_{\min}, x_{\max}\}$, and $y \to \{y_{\min}, y_{\max}\}$. In all cases, the computed solution is interpreted as the limiting value approached at $\partial \Omega_{\myin}$ from the interior.
\end{remark}

\section{Numerical methods}
\label{section:num}
\subsection{Piecewise constant control}
\label{ssc:pcc}
A key component of our numerical scheme is a piecewise constant control time-stepping method applied over $\Omega_{\myin}$, which yields a set of independent linear 2D PDEs in the variables $x$ and $y$, each corresponding to a discretized control value. Unlike traditional methods that directly discretize the temporal and spatial derivatives of these PDEs, our approach avoids such discretization by leveraging Green's functions to represent each solution as a convolution integral. These PDEs are then solved using a monotone numerical integration scheme. The resulting solutions are combined using a $\max\{\cdot, \cdot\}$ operation, which preserves monotonicity and yields approximations of the value function and optimal control, thereby addressing the nonlinearity of the HJB equation.

To approximate the admissible control set with a discretized subset,  we recall from Remark~\ref{adm_con_set} that we search for the optimal control within the boundary set
$\Al$ given in \eqref{eq:adm_con}.
To this end, we first make an observation that the admissible control set $\Al$, as defined in \eqref{eq:adm_control}), is a compact set.
Therefore, it can be approximated arbitrarily well by a finite set \cite{rudin1953principles}. Specifically, for any discretization parameter $h>0$, there exists a finite partition $\Al_h$ of $\Al$ such that for any $\alpha \in \Al$, the distance to its nearest  point in $\Al_h$ is no greater than $h$. That is,
\EQ
\label{eq:compact}
\max_{\alpha \in \Al}  \min_{\alpha' \in \Al_h} \| \alpha - \alpha'\|_2 \le h.
\EN
Motivated by \eqref{eq:compact}, to address the two-factor uncertain volatility pricing problem in Defn~\eqref{def:uvm_def}, we propose an approach that involves approximating \(\Al\) with \(\Al_h\). Specifically, for $\Omega_{\myin}$,  instead of solving the HJB equation
$\ds v_\tau-\sup_{\al \in \Al}  \Lcal_{\al} v = 0$, we solve
$\ds v_{\tau} - \sup_{\al \in \Al_h} \Lcal_{\al} v = 0$.
In our convergence analysis, we will establish that, as $h \to 0$, this numerical solution converges to the viscosity solution of the pricing problem in Defn~\ref{def:uvm_def}, which is described in Defn~\ref{def:vis_def_common}.

We now elaborate the piecewise constant control for $\Omega_{\myin}$.
We let $\{\tau_m\}$, $m = 0, \ldots, M$, be an equally spaced partition in the $\tau$-dimension, where $\tau_{m} = m\Delta \tau$ and $\Delta \tau = T/M$.
With a fixed $\taus > 0$ such that $\tau_{m+1}\le T$, we consider the HJB equation
\EQ
\label{eq:hjb_pw}
v_{\tau}-\sup_{\al \in \Al_h}   \Lcal_{\al} v = 0,
\qquad (x,y,\tau) \in  \Rbb \times \Rbb \times (\tau_m, \tau_{m+1}],
\EN
where the differential operator $\Lcal_\alpha (\cdot)$ is defined in \eqref{eq:LcalX}.
Here, we note that, in \eqref{eq:hjb_pw}, the admissible control set $\Al$ is approximated by the finite discretized control set $\Al_h$,
with $h > 0$ being the discretization parameter.

For fixed $h$ and each $\alpha \in \Al_h$, we denote by $u(\cdot; \al) \equiv u(x, y, \tau; \al)$ the solution to the linear PDE in $(x, y, \tau)$ given by
\EQ
\label{eq:lpde}
 u_{\tau} - \Lcal_{\alpha} u  = 0,  \qquad (x,y,\tau) \in  \Rbb \times \Rbb \times (\tau_m, \tau_{m+1}].
 \EN
 where $\Lcal_\alpha (\cdot)$ is defined in \eqref{eq:LcalX}.
 The PDE is subject to a generic initial condition at time $\tau_m$
 given by
\EQ
\label{eq:dis_pide_term}
\hat{v}(x, y, \tau_m)
= \left\{
\begin{array}{lll}
v(x, y, \tau_m) & \qquad (x, y, \tau_{m+1}) \in \Omega_{\myin},
\\
v_{bc}(x, y, \tau_m)& \qquad (x, y, \tau_{m+1}) \in \Omega \setminus \Omega_{\myin},
\end{array}
\right.
\EN
where $v_{bc}(x, y, \tau_m)$ is the boundary conditions at time $\tau_m$ satisfying \eqref{eq:vxymax} in $\Omega_{\myout}$.

We denote by $g_\alpha(\cdot) \equiv g_\alpha(x, x', y, y'; \tau - \tau_m)$, $\alpha \in \Al_h$, the Green's function associated with the 2D linear PDE \eqref{eq:lpde} with the initial condition \eqref{eq:dis_pide_term}. Due to the spatial homogeneity of
the stochastic system \eqref{eq:2SDE_GBM}, the Green's function $g_{\alpha}(\cdot)$ simplifies  to $g_{\alpha}(x - x', y - y'; \tau - \tau_m)$. While the operator $\mathcal{L}_\alpha(\cdot)$ does not depend on $\Delta \tau$, the Green's function depends explicitly on $\Delta \tau = \tau_{m+1} - \tau_m$ when solving for the solution at $\tau = \tau_{m+1}$, resulting in $g_{\alpha}(x - x', y - y'; \Delta \tau)$.

Since we are only interested in the solution of the PDE \eqref{eq:lpde} at
$\tau = \tau_{m+1}$, for convenience, we introduce the following notational convention: unless otherwise stated, we refer to \( g_{\alpha}(x - x', y - y'; \Delta \tau) \) as the Green's function associated with the 2D linear PDE \eqref{eq:lpde} and the initial condition \eqref{eq:dis_pide_term}, reflecting the explicit dependence of the Green's function on $\Delta \tau$.

By the Green's function approach \cite{garronigreenfunctionssecond92, Duffy2015},
for fixed $\alpha \in \Al_h$,  the solution $u(x,y,\tau_{m+1})$ for $(x,y)\in \mathbf{D}$, where
\[
\mathbf{D}\equiv (x_{\min}, x_{\max}) \times (y_{\min}, y_{\max}),
\]
can be represented as the convolution integral of the Green's function $g_\alpha(\cdot;\Delta \tau)$ and the initial condition $\hat{v}(\cdot, \tau_m)$ as follows
\EQ
u(x,y,\tau_{m+1}; \al)= \iint_{\Rbb^2}g_{\alpha}\left(x- x',y- y';   \Delta \tau\right)\hat{v}( x',y', \taus) dx'dy',\qquad (x,y) \in\mathbf{D}, \quad
\alpha \in \Al_h.
\label{eq:bkinteg}
\EN
The solution $u(x,y,\tau_{m+1}; \al)$ for $(x, y)\notin \mathbf{D}$ are given by the boundary condition \eqref{eq:vxymax}.

For computational purposes, we truncate the infinite region of integration of \eqref{eq:bkinteg}
to $\mathbf{D}^{\dagger}$, where
\EQ
\label{eq:truncate_region}
\mathbf{D}^{\dagger}\equiv [x_{\min}^{\dagger}, x_{\max}^{\dagger}] \times [y_{\min}^{\dagger}, y_{\max}^{\dagger}].
\EN
Here, recall that  $z\in\l\{x,y\r\}$, $z^{\dagger}_{\min}< z_{\min}<0<z_{\max}< z^{\dagger}_{\max}$ and $|z^{\dagger}_{\min}|$ and $z^{\dagger}_{\max}$ are sufficiently large. This results in the approximation
\EQ
\label{eq:integral_truncated}
u(x,y, \tau_{m+1}; \al) \simeq \iint_{\mathbf{D}^{\dagger}}g_{\alpha}\left(x- x',y- y';   \Delta \tau\right)\hat{v}( x',y', \taus) dx'dy',\qquad (x,y) \in\mathbf{D},
\quad
\alpha \in \Al_h.
\EN
Finally, an approximation to the solution of the HJB~\eqref{eq:hjb_pw} for $(x, y, \tau_{m+1}) \in \Omega_{\myin}$ is given by
\EQ
\label{eq:pcc_v}
v(x,y, \tau_{m+1}) \simeq \max_{\alpha\in\Al_h}u(x,y, \tau_{m+1}; \al),\qquad (x,y) \in\mathbf{D}.
\EN
We conclude by noting that the errors arising from (i) approximating $\Al$ by $\Al_h$ and (ii) from truncating the infinite
integration domain in \eqref{eq:bkinteg} to a finite one in \eqref{eq:integral_truncated} are discussed subsequently.

\subsection{A closed-form representation of $\boldsymbol{g_{\al}\l(\cdot\r)}$ for $\boldsymbol{\Omega_{\myin}}$}
We now present a closed-form expression for the Green's function $g_{\al}(\cdot)$ of the linear PDE \eqref{eq:lpde}, where the control $\alpha\equiv(\sigma_x,\sigma_y,\rho) \in \Al$ is fixed.
To this end, we denote by $G_{\alpha}(\cdot;\Delta \tau)$ the Fourier transform of $g_\alpha(\cdot;\Delta \tau)$, i.e.\
\EQ
\label{eq:ft_pair}
 \left\{
\begin{array}{lll}
\mathscr{F}|g_{\alpha}(x,y;\cdot)| &=G_{\alpha}(\eta,\zeta;\cdot)&=\displaystyle\iint_{\Rbb^2}e^{-i(\eta x+\zeta y)}g_{\alpha}(x,y;\cdot)dxdy,
\\
\mathscr{F}^{-1}|G_{\alpha}(\eta,\zeta;\cdot)|&=g_{\alpha}(x,y;\cdot)&=\frac{1}{(2\pi)^2}\displaystyle\iint_{\Rbb^2}e^{i(\eta x+\zeta y)}G_{\alpha}(\eta,\zeta;\cdot)d\eta d\zeta.
\end{array}
\right.
\EN
{\imapurple{A closed-form expression for $G_{\alpha}(\eta,\zeta;\cdot)$ is given
in Equation~(6.4) of \cite{ruijter2012two}:}}
\EQA
\label{eq:G_closed}
&&G_{\alpha}(\eta,\zeta;\cdot)=\exp(\Psi(\eta,\zeta) \Delta \tau),
\\
&&\qquad \text{ with } \Psi(\eta,\zeta)= \bigg(-\frac{\sigma_{x}^{2}\eta^2}{2}-\frac{\sigma_{y}^{2}\zeta^2}{2}+(r-\frac{\sigma_x^2}{2})i\eta+(r-\frac{\sigma_y^2}{2})i\zeta-\rho\sigma_x\sigma_y\eta\zeta-r\bigg).
\nonumber
\ENA
\noindent {\dangblue{We now introduce a lemma providing a closed-form expression for the Green's function $g_{\alpha}(x,y;\Delta \tau)$.
\begin{lemma}
\label{lem:g_k}
Let $\Delta \tau > 0$ be fixed,
and  $g_{\alpha}(x,y;\Delta \tau)$ and $G_{\alpha}(\eta,\zeta;\Delta \tau)$ be a Fourier transform pair
defined in \eqref{eq:ft_pair}, and $G_{\alpha}(\eta,\zeta;\Delta \tau)$ be given in \eqref{eq:G_closed}.
When $|\rho|< 1$, $g_{\al}(x,y;\Delta \tau)$ can be expressed in the form of a ``scaled'' joint density as follows
\begin{align}
\label{eq:g_k}
&g_{\alpha}(x,y;\Delta \tau) =  e^{-r\Delta\tau} f_{\al}(x,y; \Delta \tau), \text{ where }
\\
&f_{\al}(x,y;\Delta \tau)= \frac{1}{2 \pi  \kappa_x \kappa_y   \sqrt{1-\rho^2}}\exp
        \bigg( \frac{-1}{2(1 - \rho^2)}\bigg[
          \big(\frac{x-\mu_x}{\kappa_x}\big)^2 -
          2\rho\big(\frac{x - \mu_x}{\kappa_x}\big)\big(\frac{y - \mu_y}{\kappa_y}\big) +
          \big(\frac{y - \mu_y}{\kappa_y}\big)^2
        \bigg]
       \bigg).
\nonumber
\\
 & \text{ with }  \quad
\mu_x = \bigg(\frac{\sigma_x^2}{2}-r\bigg)\Delta\tau, ~\kappa_x = \sigma_x \sqrt{\Delta \tau},~
\mu_y = \bigg(\frac{\sigma_y^2}{2}-r\bigg)\Delta\tau, ~ \kappa_y = \sigma_y \sqrt{\Delta \tau},
\label{eq:cond_den_o}
\end{align}

When $\rho = \pm 1$, $g_{\al}(x,y;\Delta \tau)$ is given by
\begin{align}
\label{eq:g_k_rho}
g_{\alpha}(x,y;\Delta \tau) =  e^{-r\Delta\tau} \frac{1}{\sqrt{2\pi} \kappa_x} \exp\left( -\frac{(x - \mu_x)^2}{2\kappa_x^2} \right)
\delta(y - (a + \rho bx)).
\end{align}
{\dangblue{Here, $\delta(\cdot)$ is a Dirac delta function,
and $a = \mu_y - \rho b \mu_x$ with $b = \frac{\sigma_y}{\sigma_x}$.
}}
\end{lemma}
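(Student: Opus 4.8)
\emph{Proof sketch.} The plan is to recover $g_\alpha(x,y;\Delta\tau)$ by directly inverting the closed-form symbol $G_\alpha(\eta,\zeta;\Delta\tau)=\exp(\Psi(\eta,\zeta)\Delta\tau)$ of \eqref{eq:G_closed}, after observing that, once the deterministic factor $e^{-r\Delta\tau}$ is stripped off, $G_\alpha$ is the characteristic function of a (possibly degenerate) bivariate Gaussian law, evaluated consistently with the Fourier convention $e^{-i(\eta x+\zeta y)}$ of \eqref{eq:ft_pair}.

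First I would peel off the discount factor: since $\Psi(\eta,\zeta)+r=-\tfrac12\big(\sigma_x^2\eta^2+2\rho\sigma_x\sigma_y\eta\zeta+\sigma_y^2\zeta^2\big)+i\big((r-\tfrac{\sigma_x^2}{2})\eta+(r-\tfrac{\sigma_y^2}{2})\zeta\big)$ carries no constant term, we write $G_\alpha(\eta,\zeta;\Delta\tau)=e^{-r\Delta\tau}\,\widehat f_\alpha(\eta,\zeta)$ with $\widehat f_\alpha(\eta,\zeta):=\exp\!\big((\Psi(\eta,\zeta)+r)\Delta\tau\big)$, so that by linearity of $\mathscr F^{-1}$ it suffices to show $\mathscr F^{-1}[\widehat f_\alpha]=f_\alpha$, the bracketed factor in \eqref{eq:g_k} (resp.\ \eqref{eq:g_k_rho}). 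Using $\kappa_x^2=\sigma_x^2\Delta\tau$, $\kappa_y^2=\sigma_y^2\Delta\tau$, $\rho\sigma_x\sigma_y\Delta\tau=\rho\kappa_x\kappa_y$, and $(r-\tfrac{\sigma_z^2}{2})\Delta\tau=-\mu_z$ for $\mu_z$ as in \eqref{eq:cond_den_o}, a comparison of coefficients gives $\widehat f_\alpha(\eta,\zeta)=\exp\!\big(-i(\eta\mu_x+\zeta\mu_y)-\tfrac12(\kappa_x^2\eta^2+2\rho\kappa_x\kappa_y\eta\zeta+\kappa_y^2\zeta^2)\big)$, which is exactly the value at $(-\eta,-\zeta)$ of the characteristic function of the bivariate normal law with mean $(\mu_x,\mu_y)$ and covariance matrix having diagonal entries $\kappa_x^2,\kappa_y^2$ and off-diagonal entry $\rho\kappa_x\kappa_y$.

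When $|\rho|<1$ this covariance is positive definite, and I would invoke Fourier inversion for nondegenerate Gaussians — equivalently, complete the square in the double integral $\tfrac{1}{(2\pi)^2}\iint_{\Rbb^2}e^{i(\eta x+\zeta y)}\widehat f_\alpha(\eta,\zeta)\,d\eta\,d\zeta$ — to conclude that $\mathscr F^{-1}[\widehat f_\alpha]$ is the corresponding nondegenerate bivariate normal density, which is precisely \eqref{eq:cond_den_o}; restoring $e^{-r\Delta\tau}$ yields \eqref{eq:g_k}. For $\rho=\pm1$ the quadratic part becomes a perfect square: with $b=\sigma_y/\sigma_x$ and $a=\mu_y-\rho b\mu_x$, one checks $\tfrac{\Delta\tau}{2}\big(\sigma_x^2\eta^2+2\rho\sigma_x\sigma_y\eta\zeta+\sigma_y^2\zeta^2\big)=\tfrac12\kappa_x^2(\eta+\rho b\zeta)^2$ and $\eta\mu_x+\zeta\mu_y=\mu_x(\eta+\rho b\zeta)+a\zeta$, whence $\widehat f_\alpha(\eta,\zeta)=\exp\!\big(-\tfrac12\kappa_x^2(\eta+\rho b\zeta)^2-i\mu_x(\eta+\rho b\zeta)-ia\zeta\big)$. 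After the change of variable $\xi=\eta+\rho b\zeta$ (with $\zeta$ held fixed) the inverse integral factors into a one-dimensional Gaussian inversion in $\xi$, giving $\tfrac{1}{\sqrt{2\pi}\kappa_x}\exp(-(x-\mu_x)^2/(2\kappa_x^2))$, times $\tfrac{1}{2\pi}\int_{\Rbb}e^{i\zeta(y-a-\rho bx)}\,d\zeta=\delta(y-(a+\rho bx))$ read in the tempered-distribution sense; multiplying by $e^{-r\Delta\tau}$ produces \eqref{eq:g_k_rho}. Alternatively, \eqref{eq:g_k_rho} follows by passing to the limit $\rho\to\pm1$ in \eqref{eq:cond_den_o} tested against Schwartz functions.

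The nondegenerate computation in the third paragraph is a routine two-variable Gaussian integral once the symbol has been matched to a bivariate-normal characteristic function. The only genuinely delicate point — and the step I expect to be the main obstacle — is the $\rho=\pm1$ case: the covariance degenerates, the inverse Fourier transform must be interpreted in the space of tempered distributions, and the affine bookkeeping ($\mu_x$, $\mu_y$, $a=\mu_y-\rho b\mu_x$, $b=\sigma_y/\sigma_x$) has to be tracked carefully to pin down the argument of the Dirac mass; changing variables to $\xi=\eta+\rho b\zeta$ before inverting is what makes this transparent.
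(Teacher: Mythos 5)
Your argument is correct and follows essentially the same route as the paper: recognize $G_\alpha(\eta,\zeta;\Delta\tau)=e^{-r\Delta\tau}\widehat f_\alpha(\eta,\zeta)$ with $\widehat f_\alpha$ the (possibly degenerate) bivariate Gaussian characteristic function evaluated at $(-\eta,-\zeta)$ under the paper's convention, then invert. For $|\rho|<1$ the paper simply says ``apply the inverse Fourier transform,'' which is exactly your Gaussian-inversion step. The one place you genuinely diverge is $\rho=\pm1$: the paper \emph{posits} the product form $\frac{1}{\sqrt{2\pi}\kappa_x}e^{-(x-\mu_x)^2/(2\kappa_x^2)}\,\delta(y-(a+\rho bx))$ on the authority of \cite{glazunov2012note} and then back-solves for $a,b$ by matching Fourier transforms, whereas you \emph{derive} this form constructively by completing the square $\frac{\kappa_x^2}{2}(\eta+\rho b\zeta)^2$, changing variables $\xi=\eta+\rho b\zeta$, and factoring the inverse integral into a 1D Gaussian inversion times $\frac{1}{2\pi}\int_{\Rbb}e^{i\zeta(y-a-\rho bx)}\,d\zeta=\delta(y-(a+\rho bx))$ in the tempered-distribution sense. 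Your version is self-contained and avoids the external citation for the degenerate functional form; the paper's is shorter but relies on an ansatz. Both are sound, and the affine bookkeeping $a=\mu_y-\rho b\mu_x$, $b=\sigma_y/\sigma_x$ comes out identically either way.
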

\begin{proof}[Proof of Lemma~\ref{lem:g_k}]
When $|\rho| < 1$, applying inverse Fourier transform to $G_\al(\cdot)$, provided in \eqref{eq:G_closed},
we obtain  the expression for the Green's function $g_{\al}(x,y;\Delta \tau)$ given in \eqref{eq:g_k}.
When $\rho = \pm 1$, $f_{\al}(x,y;\Delta \tau)$ can be expressed in the form
$f_{\al}(x,y; \Delta \tau) =   \frac{1}{\sqrt{2\pi} \kappa_x} \exp\left( -\frac{(x - \mu_x)^2}{2\kappa_x^2} \right) ~\delta(y - (a + \rho bx))$,
where $a$ and $b$ are constants, with $b > 0$ \cite{glazunov2012note}.
We then solve for $a$ and $b$ by comparing the Fourier transform of $g_{\al}(x,y; \Delta \tau)$ in this case
with the closed-form expression of $G_\al(\cdot)$. This gives $a = \mu_y - \rho b \mu_x$ and $b = \frac{\sigma_y}{\sigma_x}$.
This completes the proof.
\end{proof}
\begin{remark}[$\rho= \pm 1$]
\label{rm:rho}
In our study, while we acknowledge the theoretical significance of the cases where
$\rho= \pm 1$, we have chosen not to explore this scenario in depth.
Such extreme correlation values, though mathematically interesting, are rarely encountered in practical applications
and financial modeling. Therefore, our focus remains predominantly on scenarios where the correlation coefficient lies strictly between $-1$ and $1$, which are more representative of the conditions commonly observed and of
greater relevance to practitioners. However, it is important to note that our piecewise constant control integration scheme can effectively manage the special case of $\rho= \pm 1$.

For computational purposes, approximating the Dirac delta function $\delta(y - (a \pm bx))$ in \eqref{eq:g_k_rho} by a suitable Gaussian function is necessary (refer to \cite{spanier1988atlas}[Chapter 10], for example, for more details of such approximations). Specifically, we approximate $\delta(\cdot)$ using a Gaussian $\delta_{\rhoh}(\cdot)$ with $\rhoh \to \pm 1^{\mp}$. Essential aspects of our scheme in this case are elaborated in Appendix~\ref{app:rho_ex}.
\end{remark}
}}
We now present a lemma on the boundary truncation error of the Green's function $g_{\al}(\cdot)$ defined in \eqref{eq:g_k} for the case $|\rho| < 1$.
\begin{lemma}
\label{lemma:truncation}
Suppose that $|\rho| < 1$, and let $\Delta \tau > 0$ be fixed.
Furthermore, suppose $|x^{\dagger}_{\mymin}|$, $x^{\dagger}_{\mymax}$, $|y^{\dagger}_{\mymin}|$, and $y^{\dagger}_{\mymax}$ are chosen such that
\[
\min\left\{|x^{\dagger}_{\mymin}|,~x^{\dagger}_{\mymax},~|y^{\dagger}_{\mymin}|,~y^{\dagger}_{\mymax}\right\}
> \max \left\{ \mu_x \pm \gamma, \mu_y \pm \gamma \right\},
\]
where $\gamma \gg 0$ is a fixed constant, and both $\mu_x = \left(\frac{\sigma_x^2}{2}-r\right)\Delta\tau$ and $\mu_y = \left(\frac{\sigma_y^2}{2}-r\right)\Delta\tau$
are defined in \eqref{eq:cond_den_o}. Then, for sufficiently small $\Delta \tau$, $g_{\al}(\cdot)$, as defined in \eqref{eq:g_k}
for a fixed  $\al \in \Al$, satisfies
\EQ
\label{eq:boundgn}
\iint_{\Rbb^2\setminus \mathbf{D}^{\dagger}}g_{\alpha}\left(x,y;\Delta \tau\right) dxdy
<
C \Delta \tau e^{-\frac{1}{2\Delta \tau}},
\quad
\mathbf{D}^{\dagger}\equiv [x_{\min}^{\dagger}, x_{\max}^{\dagger}] \times [y_{\min}^{\dagger}, y_{\max}^{\dagger}],
\EN
where $C$ is a bounded constant independently of $\Delta \tau$.
\end{lemma}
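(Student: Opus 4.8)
The plan is to exploit the structure revealed by Lemma~\ref{lem:g_k}: for $|\rho|<1$ we have $g_{\al}(x,y;\Delta\tau)=e^{-r\Delta\tau}f_{\al}(x,y;\Delta\tau)$, where $f_{\al}(\cdot;\Delta\tau)$ is precisely the joint density of a bivariate normal vector $(X,Y)$ with mean $(\mu_x,\mu_y)$, marginal standard deviations $\kappa_x=\sigma_x\sqrt{\Delta\tau}$, $\kappa_y=\sigma_y\sqrt{\Delta\tau}$, and correlation $\rho$. Since $e^{-r\Delta\tau}\le 1$, the left-hand side of \eqref{eq:boundgn} is bounded by the Gaussian tail probability $\Pbb\big((X,Y)\notin\mathbf{D}^{\dagger}\big)$. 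First I would set $R:=\min\{|x^{\dagger}_{\mymin}|,x^{\dagger}_{\mymax},|y^{\dagger}_{\mymin}|,y^{\dagger}_{\mymax}\}$, so that $[-R,R]^2\subseteq\mathbf{D}^{\dagger}$ and hence $\Rbb^2\setminus\mathbf{D}^{\dagger}\subseteq\{|x|>R\}\cup\{|y|>R\}$; a union bound reduces the estimate to the one-dimensional marginal tails $\Pbb(|X|>R)+\Pbb(|Y|>R)$, which are explicit since $X\sim N(\mu_x,\kappa_x^2)$ and $Y\sim N(\mu_y,\kappa_y^2)$.

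Next I would control the standardized thresholds. Each marginal tail splits as $\Pbb(X>R)+\Pbb(X<-R)=\bar\Phi\big(\tfrac{R-\mu_x}{\kappa_x}\big)+\bar\Phi\big(\tfrac{R+\mu_x}{\kappa_x}\big)$, where $\bar\Phi$ is the standard normal survival function, and similarly for $Y$. The hypothesis $\min\{|x^{\dagger}_{\mymin}|,x^{\dagger}_{\mymax},|y^{\dagger}_{\mymin}|,y^{\dagger}_{\mymax}\}>\max\{\mu_x\pm\gamma,\mu_y\pm\gamma\}=\gamma+\max\{\mu_x,\mu_y\}$ means $R>\gamma+\max\{\mu_x,\mu_y\}$. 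Because $\mu_x=(\tfrac{\sigma_x^2}{2}-r)\Delta\tau$ and $\mu_y=(\tfrac{\sigma_y^2}{2}-r)\Delta\tau$ are $O(\Delta\tau)$ uniformly in $\al\in\Al$, for all sufficiently small $\Delta\tau$ we have $\max\{|\mu_x|,|\mu_y|\}\le \gamma/4$, whence $\min\{R-\mu_x,R+\mu_x,R-\mu_y,R+\mu_y\}\ge \gamma/2$. Using also that $\sigma_x,\sigma_y$ are bounded on the compact set $\Al$ by $\sigma_{\max}:=\max\{\sigma^{x}_{\mymax},\sigma^{y}_{\mymax}\}$, and that $\bar\Phi$ is decreasing, every one of the four tail terms is at most $\bar\Phi\big(\tfrac{\gamma}{2\sigma_{\max}\sqrt{\Delta\tau}}\big)$.

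Finally I would apply the elementary Chernoff bound $\bar\Phi(t)\le e^{-t^2/2}$, giving $\Pbb\big((X,Y)\notin\mathbf{D}^{\dagger}\big)\le 4\,e^{-\gamma^2/(8\sigma_{\max}^2\Delta\tau)}$ and hence $\iint_{\Rbb^2\setminus\mathbf{D}^{\dagger}}g_{\al}(x,y;\Delta\tau)\,dxdy\le 4\,e^{-c/\Delta\tau}$ with $c:=\gamma^2/(8\sigma_{\max}^2)$. Invoking ``$\gamma\gg 0$'' in the precise form $\gamma>2\sigma_{\max}$ (so that $c>\tfrac12$), I would factor $e^{-c/\Delta\tau}=e^{-1/(2\Delta\tau)}\,e^{-(c-\frac12)/\Delta\tau}$ and use the global inequality $s\,e^{-\beta s}\le \tfrac{1}{\beta e}$ with $s=1/\Delta\tau$, $\beta=c-\tfrac12>0$, to conclude $e^{-(c-\frac12)/\Delta\tau}\le \tfrac{\Delta\tau}{(c-\frac12)e}$. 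This yields \eqref{eq:boundgn} with $C=4/\big((c-\tfrac12)e\big)$, a constant depending only on $\gamma$ and the fixed model/volatility bounds, in particular independent of $\Delta\tau$ (and of $\al$).

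There is no deep obstacle: the whole argument is a tail estimate for a near-centered Gaussian. The two points that require care are (i) keeping the $O(\Delta\tau)$ mean shifts $\mu_x,\mu_y$ from interfering with the geometry of the box, which is exactly why the ``sufficiently small $\Delta\tau$'' qualifier appears, and (ii) ensuring the tail exponent $\gamma^2/(8\sigma_{\max}^2\Delta\tau)$ strictly dominates the prescribed $1/(2\Delta\tau)$ — this is what ``$\gamma\gg 0$'' buys, and it is precisely what allows the residual exponential factor to absorb the polynomial prefactor $\Delta\tau$ with a constant $C$ uniform in $\Delta\tau$.
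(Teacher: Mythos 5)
Your argument is correct, but it follows a genuinely different route from the paper's. Both proofs begin from the common observation that $g_{\alpha}=e^{-r\Delta\tau}f_{\alpha}$ with $f_{\alpha}$ a bivariate normal density with mean $(\mu_x,\mu_y)$, scales $\kappa_x,\kappa_y$, and correlation $\rho$, so the left side of \eqref{eq:boundgn} is dominated by a Gaussian exterior probability. From there the paper works with the \emph{bivariate} tail directly: after standardizing, it bounds the integral over the complement of the box $\mathbf{B}=[-b,b]^2$ by an orthant probability, applies a bivariate-normal tail bound from Willink (2005), and then finishes with the one-sided Mills-ratio estimate $1-\Phi(t)\le \phi(t)/t$; the crucial prefactor $1/b^{2}=\mathcal{O}(\Delta\tau)$ in \eqref{eq:boundgn} emerges directly from the Mills-ratio denominator. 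You instead cover $\Rbb^{2}\setminus\mathbf{D}^{\dagger}$ by the two strips $\{|x|>R\}\cup\{|y|>R\}$ via a union bound, reduce to four \emph{univariate} Gaussian tails, and use only the sub-Gaussian bound $1-\Phi(t)\le e^{-t^{2}/2}$; since this has no polynomial prefactor, you recover the missing factor of $\Delta\tau$ a posteriori by splitting the exponent and invoking $s\,e^{-\beta s}\le 1/(\beta e)$. What each route buys: yours is more elementary and self-contained (no specialized bivariate-tail inequality), and it is manifestly uniform in $\rho\in(-1,1)$ --- in particular it bypasses the paper's intermediate inequality bounding $\iint_{\Rbb^{2}\setminus\mathbf{B}}$ of the joint density by $2\,\Pbb(z_x\ge b,\,z_y\ge b)$, which is \emph{not} valid as written for $\rho$ near $0$ and large $b$ (the left side is $\approx 4(1-\Phi(b))\sim e^{-b^{2}/2}/b$, while the right side decays like $e^{-b^{2}/(1+\rho)}$), so your union bound is actually the safer step. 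The cost is a marginally stronger quantitative reading of ``$\gamma\gg 0$'': you need $\gamma>2\sigma_{\max}$, where $\sigma_{\max}$ is the common upper bound on $\sigma_x,\sigma_y$ over the compact set $\Al$, so that the exponent $\gamma^{2}/(8\sigma_{\max}^{2}\Delta\tau)$ strictly exceeds $1/(2\Delta\tau)$, whereas the Mills-ratio route needs only $\gamma>\sigma_{\max}$; both are subsumed by the stated hypothesis. Your final constant $C=4/\bigl((c-\tfrac12)e\bigr)$ depends only on $\gamma$ and the model bounds and is independent of $\Delta\tau$ and $\alpha$, exactly as the lemma requires.
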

\begin{proof}[Proof of Lemma~\ref{lemma:truncation}]
Without loss of generality, we present a proof for the case $0\leq\rho<1$.
For subsequent use, let $\Phi(s) \equiv \int_{-\infty}^{s}\phi(z)dz$ and $\phi(z) \equiv (2\pi)^{-1/2}\exp(-z^2/2)$ respectively be the CDF and the probability density function of standard normal distribution.

For simplicity, we let
$w= \min\left\{|x^{\dagger}_{\mymin}|,~x^{\dagger}_{\mymax},~|y^{\dagger}_{\mymin}|,~y^{\dagger}_{\mymax}\right\}$.
With $z_x = \frac{x-\mu_x}{\kappa_x}$, $z_y= \frac{y-\mu_y}{\kappa_y}$,
we define the region $\mathbf{B}$ as follows
\EQ
\label{eq:regB}
\mathbf{B} =  [-b, b]\times [-b, b], \quad \text{where }
b = \min \big\{ \big|\frac{-w-\mu_x}{\kappa_x}\big|, \frac{w-\mu_x}{\kappa_x},
\big|\frac{-w-\mu_y}{\kappa_y}\big|, \frac{w-\mu_y}{\kappa_y}\big\}.
\EN
We have
\begin{align}
\label{eq:boundgd}
&\iint_{\Rbb^2\setminus \mathbf{D}^{\dagger}}g_{\alpha}\left(x,y;\Delta \tau\right) dxdy
\le e^{-r\Delta\tau}\iint_{\Rbb^2\setminus \mathbf{B}} \frac{\exp
        \left( -\frac{1}{2\left[1 - \rho^2\right]}\left[
          z_x^2 -
          2\rho z_x z_y +
         z_y^2
        \right]
       \right)}{2 \pi    \sqrt{1-\rho^2}} dz_xdz_y
\nonumber
\\
&\leq 2e^{-r\Delta\tau}P\big(z_x\geq b,~z_y\geq b\big)
\overset{\text{(i)}}{\leq} 2e^{-r\Delta\tau}(1+\rho)\Phi(-b)\Phi\bigg(\frac{-b(1-\rho)}{\sqrt{1-\rho^2}}\bigg)
\nonumber\\
&\overset{\text{(ii)}}{\leq }
\frac{e^{-r\Delta \tau}(1+\rho)^{3/2}}{\pi(1-\rho)^{1/2}} \times \frac{e^{-b^2/2}}{b^2}.
\end{align}
Here, (i) is due to an upper bound for the bivariate normal distribution in \cite{willink2005bounds};
in (ii), we apply the following fact: if $X\sim N(0,1)$, then $P(X>x)\leq \frac{1}{x\sqrt{2\pi}}\exp(-x^2/2)$.
{\dangblue{It is straightforward to see that $\frac{e^{-r\Delta \tau}(1+\rho)^{3/2}}{\pi(1-\rho)^{1/2}} \le \frac{ (1+\rho_{\max})^{3/2}}{\pi (1-\rho_{\max})^{1/2}}$.}}
Thus, for sufficiently small $\Delta \tau$, the condition  $w > \max \left\{ \mu_x \pm \gamma, \mu_y \pm \gamma \right\}$, where $\gamma \gg 0$ is fixed, implies the rhs of \eqref{eq:boundgd} is
bounded by $C \Delta \tau e^{-1/2\Delta \tau}$, where $C$ is a bounded constant independently of $\Delta \tau$.
This completes the proof.
\end{proof}

%

\begin{remark}[Boundary truncation error]
\label{rm:bd_e}
For the case $|\rho|<1$,  the boundary truncation error upper bound, as detailed in \eqref{eq:boundgd}, serves as a practical tool for selecting an appropriate definition domain, $\mathbf{D}^{\dagger}$, to ensure this truncation error remains below a predefined threshold $\epsilon>0$. To achieve this, we first identify a value of $b$ satisfying
\EQ
\label{eq:eps_b}
\frac{e^{-r\Delta \tau}(1+\rho)^{3/2}}{\pi(1-\rho)^{1/2}} \times \frac{e^{-b^2/2}}{b^2} \le
\frac{ (1+\rho_{\max})^{3/2}}{\pi (1-\rho_{\max})^{1/2}} \frac{e^{-b^2/2}}{b^2} < \epsilon, \quad  |\rho|<1.
\EN
Given  $b$, we then determine $w$ through equation \eqref{eq:regB})
by ensuring the following conditions are met:
$b \leq \big|\frac{-w-\mu_x}{\kappa_x}\big|$, $b \leq \frac{w-\mu_x}{\kappa_x}$, $b \leq \big|\frac{-w-\mu_y}{\kappa_y}\big|$ and $b \leq \frac{w-\mu_y}{\kappa_y}$.
Subsequently, $\mathbf{D}^{\dagger}$ is derived via
$w = \min\left\{|x^{\dagger}_{\mymin}|,~x^{\dagger}_{\mymax},~|y^{\dagger}_{\mymin}|,~y^{\dagger}_{\mymax}\right\}$.

It is worth noting that $\mu_x$ and $\mu_y$, as defined in \eqref{eq:cond_den_o},
scale linearly with $\Delta\tau$. Therefore, if
\[
w = \min\left\{|x^{\dagger}_{\mymin}|,~x^{\dagger}_{\mymax},~|y^{\dagger}_{\mymin}|,~y^{\dagger}_{\mymax}\right\}
 > \gamma + \max\{\,|\mu_x|,|\mu_y|\}
\]
is satisfied for some $\Delta \tau$, then $w > \max\{\,\mu_x \pm \gamma,\,\mu_y \pm \gamma\}$ holds for all smaller $\Delta\tau$. As a result, $\mathbf{D}^{\dagger}$ remains sufficiently large to ensure that the truncation error stays below the threshold $\epsilon$ without adjusting $w$ as $\Delta\tau$ is refined toward zero.

For the special case $\rho = \pm 1$, the upper bound~\eqref{eq:boundgd} degenerates due to division by $\sqrt{1-\rho^2}$ and is no longer valid. As discussed in Remark~\ref{rm:rho}, the Green’s function $ g_{\alpha}(x,y;\Delta\tau)$ in this scenario is given by \eqref{eq:g_k_rho}, which involves a Dirac delta function $ \delta\bigl(y - (a + \rho b\,x)\bigr)$. To handle this computationally, we use the Gaussian approximation $\delta_{\rhoh}(\cdot)$ with $ \rhoh \to \pm 1^{\mp} $ as described in Remark~\ref{rm:rho}. Consequently, the boundary truncation strategy outlined above remains applicable
(see Appendix~\ref{app:app_del}).

The methodological approach outlined above represents
a significant advantage over traditional finite difference methods,
which typically depend on heuristic strategies or trial-and-error for determining appropriate domain sizes. Our approach introduces a systematic and quantifiable method for determining domain size, significantly enhancing the accuracy and reliability of numerical solutions.
The efficacy of this systematic approach is demonstrated through numerical experiments detailed in Subsection~\ref{ssc:padding}.
\end{remark}

\subsection{Discretization}
We highlight that,  in approximating the 2D convolution integral \eqref{eq:integral_truncated} over the  finite integration domain $\mathbf{D}^{\dagger}$, it is necessary to obtain values of the Green's function $g_{\al}(x, y;\cdot)$,
at points $(x, y)$ outside $\mathbf{D}^{\dagger}$. To define these points, we let $z_{\mymax}^{\ddagger} = z_{\mymax} - z_{\mymin}^{\dagger}$ and $z_{\mymin}^{\ddagger} = z_{\mymin} - z_{\mymax}^{\dagger}$ for $z \in \{ x, y \}$.
Consequently, we need $g_{\al}(x, y;\cdot)$ at $(x, y) \in \mathbf{D}_{\myout}^{\dagger}$, where
\EQ
\label{eq:extra_dom}
\mathbf{D}^{\dagger}_{\myout} = \big([x_{\mymin}^{\ddagger}, x_{\mymax}^{\ddagger}] \times [y_{\mymin}^{\ddagger}, y_{\mymax}^{\ddagger}]\big) \setminus \mathbf{D}^{\dagger}, \qquad  z_{\mymin}^{\ddagger} = z_{\mymin} - z_{\mymax}^{\dagger} \text{ for } z \in \{x, y\}.
\EN
Although $\mathbf{D}^{\dagger}_{\myout}$ lies outside the pricing problem's definition domain, the availability of a closed-form expression for $g_{\al}(x, y;\cdot)$ ensures no issues for our numerical methods. Moreover, the value functions for $(x, y) \in \mathbf{D}_{\myout}^{\dagger}$ are not required for our convergence analysis. The role of $\mathbf{D}_{\myout}^{\dagger}$ is to ensure the well-definedness of an associated Green's function for the convolution integral, which is crucial for time advancement within $\Omega{\myin}$.

Without loss of generality, for convenience, we assume that  $|z_{\min}|$ and $z_{\max}$, where $z\in\l\{x,y\r\}$, are chosen sufficiently large so that
\EQA
\label{eq:w_choice_green_jump_form}
z^{\dagger}_{\min} = z_{\min} - \frac{z_{\max} - z_{\min}}{2},
~~~\text{and}~~~
z^{\dagger}_{\max} =  z_{\max} + \frac{z_{\max} - z_{\min}}{2}.
\ENA
With \eqref{eq:w_choice_green_jump_form} in mind, recalling $z_{\min}^{\ddagger}$ and $z_{\max}^{\ddagger}$, $z\in\{x,y\}$ as defined
in \eqref{eq:extra_dom} gives
\EQ
\label{eq:w_choice_green_jump_form_dd}
z_{\min}^{\ddagger} = z^{\dagger}_{\min} - z_{\max} =  -\frac{3}{2}\l(z_{\max} - z_{\min}\r),
~~~\text{and}~~~
z_{\max}^{\ddagger} = z^{\dagger}_{\max} - z_{\min}  = \frac{3}{2}\l(z_{\max} - z_{\min}\r).
\EN

We denote by $N$ (resp.\ $N^{\dagger}$ and $N^{\ddagger}$ ) the number of intervals of a uniform partition of $[x_{\mymin}, x_{\mymax}]$
(resp.\ $[x_{\mymin}^{\dagger}, x_{\mymax}^{\dagger}]$ and $[x_{\mymin}^{\ddagger}, x_{\mymax}^{\ddagger}]$).
For convenience, we typically choose $N^{\dagger} = 2N$ and $N^{\ddagger} = 3N$ so that only one set of $x$-coordinates is needed. Also, let $P_{x} = x_{\mymax} - x_{\mymin}$, $P_x^{\dagger} = x^{\dagger}_{\mymax} - x^{\dagger}_{\mymin}$, and $P_x^{\ddagger} = x^{\ddagger}_{\mymax} - x^{\ddagger}_{\mymin}$. We define $\Delta x = \frac{P_x}{N} = \frac{P_x^{\dagger}}{N^{\dagger}} =\frac{P_x^{\ddagger}}{N^{\ddagger}}$. We use an equally spaced partition in the $x$-direction, denoted by $\{x_{n}\}$, and is defined as follows
\EQA
\label{eq:grid_x}
    x_{n}& = &{\hat{x}}_{0} + n\Delta x;
    ~~
    n = -N^{\ddagger}/2, \ldots, N^{\ddagger}/2, ~~\text{where}~~
    \nonumber
    \\
    \Delta x& =& P_x/N ~=~ P_x^{\dagger}/N^{\dagger}=P_x^{\ddagger}/N^{\ddagger}, ~~\text{and}~~
    \\
    \nonumber
    \hat{x}_{0} &~=~& ( x_{\mymin} + x_{\mymax})/2 ~=~ (x^{\dagger}_{\mymin} + x^{\dagger}_{\mymax})/2~=~ (x^{\ddagger}_{\mymin} + x^{\ddagger}_{\mymax})/2.
    \nonumber
\ENA
Similarly, for the $y$-dimension, with  $J^{\dagger} = 2J$, $J^{\ddagger} = 3J$, $P_{y} = y_{\mymax} - y_{\mymin}$, $P_y^{\dagger} = y^{\dagger}_{\mymax} - y^{\dagger}_{\mymin}$, and $P_y^{\ddagger} = y^{\ddagger}_{\mymax} - y^{\ddagger}_{\mymin}$, we denote by $\{y_{j}\}$, an equally spaced partition in the $y$-direction defined as follows
\EQA
\label{eq:grid_y}
    y_{j}& = &{\hat{y}}_{0} + j\Delta y;
    ~~
    j = -J^{\ddagger}/2, \ldots, J^{\ddagger}/2, ~~\text{where}~~
    \nonumber
    \\
    \Delta y& =& P_y/J ~=~ P_y^{\dagger}/J^{\dagger}=P_y^{\ddagger}/J^{\ddagger}, ~~\text{and}~~
    \\
    \nonumber
    \hat{y}_{0} &~=~& ( y_{\mymin} + y_{\mymax})/2 ~=~ (y^{\dagger}_{\mymin} + y^{\dagger}_{\mymax})/2~=~ (y^{\ddagger}_{\mymin} + y^{\ddagger}_{\mymax})/2.
    \nonumber
\ENA
We use the same previously defined uniform partition
$\{\tau_m\}$, $m = 0, \ldots, M$, with $\tau_{m} = m\Delta \tau$ and $\Delta \tau = T/M$.\footnote{While it is straightforward to generalize the numerical method to non-uniform partitioning of the $\tau$-dimension, to prove convergence, uniform partitioning suffices.}


Regarding the control set $\Al$, defined in \eqref{eq:adm_con},
we let $Q_x$ and $Q_y$ respectively be the number of intervals of a uniform partition of $\Al_x =
\l[\sigma^{\myx}_{\mymin},\sigma^{\myx}_{\mymax}\r]$ and $\Al_y = \l[\sigma^{\myy}_{\mymin},\sigma^{\myy}_{\mymax}\r]$.
We denote by $\{\sigma^x_q\}$ and $\{\sigma^y_{q'}\}$ an equally spaced partition for
$\Al_x$ and $\Al_y$, respectively, each with a uniform interval length
$\Delta \sigma_z = \frac{\sigma^{z}_{\mymax}-\sigma_{\mymin}^z}{Q_z}$, where $z\in\{x,y\}$.
Consequently, the discretized  control set $\Al_h$ approximating $\Al$ is given by
\EQA
\Al_h = \l\{ \l(\{\sigma^x_{\mymin}, \sigma^x_{\mymax}\}\times
\{\sigma^y_{q'}\}\r)\cup \l(\{\sigma^x_q\}\times\{\sigma^y_{\mymin}, \sigma^y_{\mymax}\}\r)\r\}\times\{\rho_{\mymin},\rho_{\mymax}\}.
\ENA
For subsequent use, we denote by $Q$ the cardinality of the set $\Al_h$, assuming that both $\Al_x$ and $\Al_y$ are discretized using the same number of partitions.


As is common in the literature \cite{MaForsyth2015, chen08a}[Equation 4.1],
we introduce a single mesh-discretization parameter $h > 0$ to control
the refinement of temporal, spatial, and control discretizations
simultaneously, as specified in \eqref{eq:dis_parameter} below:
\EQA
\label{eq:dis_parameter}
\Delta x=  C_1 h, \quad
\Delta y = C_2 h,\quad
\Delta \tau = C_3 h,\quad
\Delta \sigma_x = C_4 h,
\quad
\Delta \sigma_y = C_5 h,
\ENA
where the positive constants $C_1$, $C_2$, $C_3$, $C_4$, and $C_5$ are independent of $h$.
\begin{remark}[Mesh discretization parameter $h>0$]
The assumption \eqref{eq:dis_parameter} provides a unified framework for analyzing
convergence by tying the different discretizations--in space, time, and control--to the
single parameter $h$. {\imapurple{Unlike the classical explicit-scheme CFL condition \cite{courant1928partiellen, courant1967partial}, which typically imposes
$\Delta \tau \sim \mathcal{O}\bigl(\max(\Delta x,\Delta y)^2\bigr)$ for stability,
the relation \eqref{eq:dis_parameter} merely scales
$\Delta x$, $\Delta y$, $\Delta \tau$, $\Delta\sigma_x$, and $\Delta\sigma_y$ 
linearly with $h$ through constants $C_1,\dots,C_5$, all of which are
independent of~$h$.}}
%

A key consequence is that the discretization of the control set $\mathcal{A}$ is also
refined in a controlled manner (since $\Delta \sigma_x$ and $\Delta \sigma_y$ shrink
with $h$). Hence, the resulting discretized set $\mathcal{A}_h$ satisfies the approximation
bound~\eqref{eq:compact}. This practice is standard in numerical methods for HJB equations
as noted earlier.

{\imapurple{Finally, as shown in Subsection~\ref{ssc:stability}, the proposed method is $\ell_{\infty}$-stable under \eqref{eq:dis_parameter}; no additional explicit-scheme CFL restriction of the form $\Delta\tau = \mathcal{O}\bigl(\max(\Delta x,\Delta y)^2\bigr)$ is required.}}
\end{remark}
For convenience, we let $\mathbb{M}=\l\{0, \ldots M-1\r\}$ and we also define the following index sets:
\begin{linenomath}
\postdisplaypenalty=0
\begin{alignat}{8}
\label{index_sets}
&\mathbb{N} &&= \l\{-N/2+1, \ldots N/2-1\r\}, \quad && \mathbb{N}^{\dagger} &&= \l\{-N, \ldots N\r\}, \quad &&&\mathbb{N}^{\ddagger} &&= \l\{-3N/2+1, \ldots 3N/2-1\r\},
\nonumber
\\
&\mathbb{J} &&= \l\{-J/2+1, \ldots J/2-1\r\}, \quad &&\mathbb{J}^{\dagger} &&= \l\{-J, \ldots J\r\}, \quad &&&\mathbb{J}^{\ddagger} &&= \l\{-3J/2+1, \ldots, 3J/2-1 \r\}.
\end{alignat}
\end{linenomath}
With $n \in \mathbb{N}^{\dagger}$, $j \in \mathbb{J}^{\dagger}$, and $m \in \{0, \ldots, M\}$,
we denote by $v_{n, j}^m$ (resp.\ $u_{n, j}^m$)
a numerical approximation to the exact solution $v(x_n, y_j, \tau_m)$ (resp.\ $u(x_n, y_j, \tau_m)$) at the reference node $(x_n, y_j, \tau_m) = {\bf{x}}_{n, j}^{m}$. We also denote by $(\al^*)_{n, j}^{m} \equiv (\sigma_{\myx}^*, \sigma_{\myy}^*,\rho^*)_{n, j}^{m}$ the optimal control obtained by a numerical method for this reference node.
For $m \in \mathbb{M}$, nodes ${\bf{x}}_{n, j}^{m+1}$ having
(i) $n \in \mathbb{N} \text{ and } j \in \mathbb{J}$, are in $\Omega_{\myin}$,
(ii) either $n \in \mathbb{N}^{\dagger}\setminus \mathbb{N}$ and $j \in \mathbb{J}^{\dagger}$ or
$n \in \mathbb{N}^{\dagger}$ and $j \in \mathbb{J}^{\dagger}\setminus \mathbb{J}$ are in $\Omega_{{\myout}}$.
For double summation, unless otherwise noted, we adopt the short-hand notation:  $\mysum_{d\in\mathbb{D}}^{q\in \mathbb{Q}}(\cdot):=\sum_{q\in\mathbb{Q}}\sum_{d\in\mathbb{D}}(\cdot)$.
Lastly, it's important to note that references to indices
$n \in \mathbb{N}^{\ddagger}\setminus \mathbb{N}^{\dagger}$ or $j \in \mathbb{J}^{\ddagger}\setminus \mathbb{J}^{\dagger}$
pertain to points within $\mathbf{D}^{\dagger}_{\myout}$ (as defined in \eqref{eq:extra_dom}).
As noted earlier, no numerical solutions are required for these points.

\subsection{Numerical schemes}
\label{sec:NS}
\subsubsection{Constructions of the scheme}
For $(x_{n},y_{j},\tau_0)\in \Omega_{\tau_0}$, we impose the initial condition \eqref{eq:ftau0} by
\EQA
\label{eq:tau0i}
v_{n,j}^0 &=& p(e^{x_{n}}, e^{y_j}),
\quad n \in \mathbb{N}^{\dagger}  \text{ and }  j \in \mathbb{J}^{\dagger}.
\ENA
For $(x_{n},y_{j},\tau_{m+1})\in \Omega_{{\myout}}$, we impose the boundary condition \eqref{eq:fout} as follow
\EQA
\label{eq:outi}
v_{n,j}^{m+1} = p(e^{x_n}, e^{y_j}) e^{-r\tau_{m+1}},
\quad
n \in \mathbb{N}^{\dagger}\setminus \mathbb{N} \text{ or } j \in \mathbb{J}^{\dagger}\setminus \mathbb{J}.
\ENA

For $(x_{n},y_{j},\tau_{m+1})\in \Omega_{{\myin}}$,
let $g^\alpha_{n-l,j-d}\equiv g_{\alpha}\l(x_n-x_l,y_j-y_d; \Delta\tau\r)$ with $n\in\mathbb{N}$,
$j\in \mathbb{J}$, $l\in \mathbb{N}^{\dagger}$ and $d\in\mathbb{J}^{\dagger}$.
Here, $g_{\alpha}(\cdot)$ is given by the closed-form expression in \eqref{eq:g_k} in Lemma~\ref{lem:g_k}, where $\al \in \Al_h$ is fixed.
When the role of $\Delta \tau$ is important, we explicitly write
$g^\alpha_{n-l,j-d}(\Delta \tau)$.

We let $u_{n,j}^{m+1, \al}$ be an approximation to the double integral \eqref{eq:integral_truncated}
at $x = x_{n}$, $y= y_{j}$ and $\tau_{m+1})$ obtained via a 2D composite quadrature rule.
It is computed by
\EQ
\label{eq:dissum}
u_{n,j}^{m+1, \al} = \Delta x \Delta y \mysum_{l\in\mathbb{N}^{\dagger}}^{d\in\mathbb{J}^{\dagger}}\varphi_{l,d}~g^\alpha_{n-l,j-d}~v^{m}_{l,d},
\quad n\in\mathbb{N} \text{ and } j\in \mathbb{J}.
\EN
Here, the coefficients $\varphi_{l,d}$ in \eqref{eq:dissum} are the weights of the composite quadrature rule.
Finally, $v_{n,j}^{m+1}$ is computed as follow
\EQ
\label{eq:vmax}
v_{n,j}^{m+1} =\max_{\alpha\in\Al_h}u_{n,j}^{m+1, \al}=\max_{\al\in\Al_h} \bigg\{\Delta x \Delta y \mysum_{l\in\mathbb{N}^{\dagger}}^{d\in\mathbb{J}^{\dagger}}\varphi_{l,d}~g^\alpha_{n-l,j-d}~v^{m}_{l,d}\bigg\},
\quad
n\in\mathbb{N} \text{ and } j\in \mathbb{J}.
\EN
By solving the optimization problem \eqref{eq:vmax},
we obtain the optimal control  $(\al^*)_{n, j}^{m+1} \equiv (\sigma_{\myx}^*, \sigma_{\myy}^*,\rho^*)_{n, j}^{m+1}$,
where
\EQ
\label{eq:optimal_control}
(\alpha^*)_{n, j}^{m+1} = \arg\max_{\alpha \in \mathcal{A}h} u_{n,j}^{m+1, \alpha}.
\EN
Unless otherwise stated,  2D composite trapezoidal quadrature
rule is used.

\begin{remark}[Rescaled weights and convention]
\label{rem:rescaled_kernel}
In the scheme \eqref{eq:dissum}, the weights $g_{n-l,j-d}^{\alpha}(\Delta \tau)$ are multiplied by the grid area $\Delta x\,\Delta y$. As $\Delta \tau \to 0$, the Green's function
$g_{\alpha}(\cdot, \Delta \tau)$ approaches a Dirac delta function, becoming increasingly peaked and unbounded. However, once $\Delta x ,\Delta y$ absorbed into $g_{n-l,j-d}^{\alpha}(\Delta \tau)$, a direct verification using the closed-form expression
for $g_{n-l,j-d}^{\alpha}(\Delta \tau)$ in \eqref{eq:g_k} confirms that the rescaled weights remain bounded.

To formalize this, we define the rescaled weights of our scheme as follows:
\begin{equation}
\label{eq:widetilde_g}
  \widetilde{g}_{n-l,j-d}^{\alpha}(\Delta \tau)
  \;:=\;
  \Delta x\,\Delta y\,  \odot g_{n-l,j-d}^{\alpha}(\Delta \tau), \quad
  n \in \mathbb{N},~ l \in \mathbb{N}^{\dagger},~
j \in \mathbb{J}, \text{and } d \in \mathbb{J}^{\dagger}.
\end{equation}
Here, $\odot$ indicates that $\Delta x\,\Delta y$ is absorbed into $g_{n-l,j-d}^{\alpha}(\Delta \tau)$, ensuring that $\widetilde{g}_{n-l,j-d}^{\alpha}(\Delta\tau)$ remains bounded as $\Delta \tau \to 0$.

\medskip
\noindent
\underline{Convention:}
For the rest of the paper, we adopt the convention of continuing to write
$\Delta x\,\Delta y\, g_{n-l,j-d}^{\alpha}(\Delta \tau)$
and
$\Delta x\,\Delta y\,
  \mysum_{l\in\mathbb{N}^{\dagger}}^{d\in\mathbb{J}^{\dagger}}
  (\cdot)~g_{n-l,j-d}^{\alpha}(\Delta\tau)~(\cdot)$
in our scheme, implementation descriptions, and subsequent analysis.
These expressions should respectively be understood as shorthand for
$\widetilde{g}_{n-l,j-d}^{\alpha}(\Delta \tau)$
and
$\mysum_{l\in\mathbb{N}^{\dagger}}^{d\in\mathbb{J}^{\dagger}}
  (\cdot)~\widetilde{g}_{n-l,j-d}^{\alpha}(\Delta \tau)~(\cdot)$,
where $\widetilde{g}_{n-l,j-d}^{\alpha}(\Delta \tau)$ is the rescaled weight defined in \eqref{eq:widetilde_g}.
The same convention applies to matrix- or vector-valued expressions involving $g_{n-l,j-d}(\Delta \tau)$.
\end{remark}

\subsection{Efficient implementation and algorithms}
In this section, we discuss an efficient implementation of the 2D discrete convolution
\eqref{eq:dissum} using FFT. For convenience, with $N^{\dagger}=2N$, $N^{\ddagger}=3N$, $J^{\dagger}=2J$ and $J^{\ddagger}=3J$,
we define/recall sets of indices:
$\Nbb^{\ddagger}=\{-N^{\ddagger}/2+1,\ldots,N^{\ddagger}/2-1\}$, $\Nbb^{\dagger}=\{-N^{\dagger}/2,\ldots,N^{\dagger}/2\}$, $\Nbb=\{-N/2+1,\ldots,N/2-1\}$, $\Jbb^{\ddagger}=\{-J^{\ddagger}/2+1,\ldots,J^{\ddagger}/2-1\}$, $\Jbb^{\dagger}=\{-J^{\dagger}/2,\ldots,J^{\dagger}/2\}$, $\Jbb=\{-J/2+1,\ldots,J/2-1\}$.

For a fixed $m$ and a fixed $\al$,  to write \eqref{eq:dissum} for all $n \in \Nbb$ and $j \in \Jbb$ into a matrix-vector multiplication form,  we adopt the following notation:
\begin{itemize}
  \item For a fixed $j\in \Jbb$ and a fixed  $\al\in \Al_h$ and $m\in \{1, \ldots, M\}$, let $\U_{j}^{m, \al}$ be a column vector of length $(N-1)$ defined by
   $\U_{j}^{m, \al}\equiv\l[u^{m, \al}_{-N/2+1,j},u^{m, \al}_{-N/2+2,j},\ldots, u^{m, \al}_{N/2-1,j}\r]^{\T}$;

  \item
  For a fixed $q\in \Jbb^{\ddagger}$ and $m\in \{0, \ldots, M-1\}$, let $\V_{q}^{m}$ be a column vector of length
  $(2N+1)$ defined by
  $\V_{q}^{m}\equiv\l[v^{m}_{-N^{\dagger}/2,q}~\varphi_{-N^{\dagger}/2,q},
  v^{m}_{-N^{\dagger}/2+1,q}~\varphi_{-N^{\dagger}/2+1,q},\ldots, v^{m}_{N^{\dagger}/2,q}~\varphi^{m}_{N^{\dagger}/2,q}\r]^{\T}$.

\item For a fixed $q\in\Jbb^{\ddagger}$ and a fixed $\al\in\Al_h$,
let $\gb_{q}^{\al}$ be a (non-square) matrix  of size  $(N-1)\times(2N+1)$,
representing the convolution kernel in the inner summation (over $n$), defined as follows
\EQ
\label{eq:gb}
\gb_{q}^{\al}=
\l[g^{\al}_{n-l,q}\r]_{n\in\Nbb,l\in\Nd} =
\begin{bmatrix}
    g^{\al}_{N/2+1, q} & g^{\al}_{N/2, q}& \dots & \dots &\dots &  g^{\al}_{-3N/2+1, q} \\
    g^{\al}_{N/2+2, q} & g^{\al}_{N/2+1, q} & \dots & \dots & \dots & g^{\al}_{-3N/2+2, q} \\
    \vdots & \vdots &\vdots& \vdots &\vdots  &\vdots \\
    g^{\al}_{3N/2-1, q} & g^{\al}_{3N/2-2, q} & \dots &  g^{\al}_{N/2+1, q} & \dots &  g^{\al}_{{\myblue{-N/2-1}}, q}
\end{bmatrix}.
\EN

\end{itemize}
In this setup, we can express the 2D discrete convolution \eqref{eq:dissum} for all
$n \in \Nbb$ and $j \in \Jbb$ into a matrix-vector product form as follows
\EQA
\label{eq:tmf}
\underbrace{\l[
 \begin{matrix}
    \U^{m+1, \al}_{-J/2+1}\\
    \U^{m+1, \al}_{-J/2+2}\\
    \vdots \\
    \U^{m+1, \al}_{J/2-1}
\end{matrix}
\r]}_{\U^{m+1, \al}}
=\Delta x \Delta y
\underbrace{\l[
 \begin{matrix}
    \gb^{\al}_{J/2+1} &\gb^{\al}_{J/2} &\ldots &\ldots &\ldots &\gb^{\al}_{-3J/2+1}\\
    \gb^{\al}_{J/2+2} &\gb^{\al}_{J/2+1} &\ldots &\ldots &\ldots &\gb^{\al}_{-3J/2+2}\\
    \vdots &\vdots&\vdots &  \vdots &\vdots&\vdots\\
    \gb^{\al}_{3J/2-1} &\gb^{\al}_{3J/2-2} &\ldots &\gb^{\al}_{J/2+1} &\ldots &\gb^{\al}_{-J/2-1}
\end{matrix}
\r]}_{\l[\gb^{\al}_{j-d}\r]_{j\in\Jbb,d\in\Jd}}
\underbrace{\l[
 \begin{matrix} 
    \V^{m}_{-J^{\dagger}/2}\,\,\,\,\,\,\,\\
    \V^{m}_{-J^{\dagger}/2+1}\\
    \vdots \\
    \V^{m}_{J^{\dagger}/2}\,\,\,\,\,\,\,
\end{matrix}
\r]}_{\V^{m}}.
\ENA
Here, $\U^{m+1, \al}$ is a column vector  of length $(N-1)(J-1)$;
the (block) matrix $\l[\gb_{j-d}^{\al}\r]_{j\in\Jbb,d\in\Jd}$, which represents the convolution kernel for the double summation,
is of size $(N-1)(J-1)\times (2N+1)(2J+1)$;
$\V^{m}$ is a column vector of length $(2N+1)(2J+1)$.

It is noteworthy that the non-square matrix $\l[\gb_{j-d}^{\al}\r]_{j\in\Jbb,d\in\Jd}$ is a Toeplitz matrix \cite{bryc2006spectral}, enabling efficient computation of \eqref{eq:tmf} using FFT and circular convolution. This technique, initially applied to 1D problems in \cite{zhang2023monotone}, is now adapted to the 2D case given by \eqref{eq:tmf}.
Our goal is to represent \eqref{eq:tmf} as a circulant matrix-vector product.
This involves expanding $\l[\gb_{j-d}^{\al}\r]_{j\in\Jbb,d\in\Jd}$ to a 2D ciculant matrix -
a block matrix where each block is circulant and the blocks are arranged in a circulant pattern.
More specifically, the process involves (i) expanding each block $\gb_{j-d}^{\al}$ to a circulant matrix, denoted by $\tilde{\gb}^{\al}_{j-d}$, and (ii) expanding $\l[\tilde{\gb}_{j-d}^{\al}\r]_{j\in\Jbb,d\in\Jd}$
to a 2D circulant matrix. Correspondingly,
the vector $\V^{m}$  {\imablue{is}} also expanded to conform with this format. Key steps of this expansion process are outlined below.

\begin{itemize}
    \item Expansion of blocks: For each matrix $\gb_{{\myblue{q}}}^{\al} = \l[g^{\al}_{n-l,{\myblue{q}}}\r]_{n\in\Nbb,l\in\Nd}$, ${\myblue{q}}\in \mathbb{J}^{\ddagger}$, of size $(N-1) \times (2N+1)$, we expand it into a circular matrix $\tilde{\gb}^{\al}_{{\myblue{q}}}$ of size $(3N-1) \times (3N-1)$. This expansion, detailed in \cite{zhang2023monotone}, results in the matrix
        \EQA
\label{eq:cir_matrix_1D}
\tilde{\gb}^{\al}_{{\myblue{q}}} =
\l[\begin{array}{c|c}
\tilde{\gb}_{-1,0}^{{\myblue{q}}, \al} & \tilde{\gb}_{-1,1}^{{\myblue{q}}, \al}\\
\hline
\gb_{{\myblue{q}}}^{\al} & \tilde{\gb}_{0,1}^{{\myblue{q}}, \al}\\
\hline
\tilde{\gb}_{1,0}^{{\myblue{q}}, \al} & \tilde{\gb}_{1,1}^{{\myblue{q}}, \al}
\end{array}\r], \qquad
\gb_{{\myblue{q}}}^{\al} = \l[g^{\al}_{n-l,{\myblue{q}}}\r]_{n\in\Nbb,l\in\Nd}.
\ENA
Here, $\tilde{\gb}_{-1,0}^{{\myblue{q}}, \al}$, $\tilde{\gb}_{1,0}^{{\myblue{q}}, \al}$, $\tilde{\gb}_{-1,1}^{{\myblue{q}}, \al}$, $\tilde{\gb}_{0,1}^{{\myblue{q}}, \al}$ and $\tilde{\gb}_{1,1}^{{\myblue{q}}, \al}$ are padding matrices of sizes $N\! \times\! (2N+1)$, $N\! \times\! (2N+1)$,
$N\! \times\!(N-2)$, $(N-1)\! \times\! (N-2)$, and $N\!\times\! (N-2)$, respectively.
These matrices are appropriately defined to ensure the circulant structure of $\tilde{\gb}^{\al}_{{\myblue{q}}}$.
Further details on these padding matrices are provided in Appendix~\ref{app:mats}.

    \item  Expansion of block matrix: We then substitute $\gb_{j-d}^{\al}$ with circulant block $\tilde{\gb}_{j-d}^{\al}$ in $\l[\gb^{\al}_{j-d}\r]_{j\in\Jbb,d\in\Jd}$.  The resulting block matrix $\l[\tilde{\gb}_{j-d}^{\al}\r]_{j\in\Jbb,d\in\Jd}$ is then expanded into a circulant matrix of size $(3N-1)(3J-1)\times(3N-1)(3J-1)$, denoted as $\gbt^{\al}$. Specifically, $\gbt^{\al}$ is constructed as follows:
    \EQA
    \label{cir_t}
  \gbt^{\al} =   \l[~~
    \begin{matrix}
\Bs_{-1,0} & \vline&\Bs_{-1,1} \\
\hline
\l[\tilde{\gb}^{\al}_{j-d}\r]_{j\in\Jbb,d\in\Jd} &\vline&\Bs_{0,1} \\
\hline
\Bs_{1,0} & \vline&  \Bs_{1,1}
\end{matrix}
~~\r].
    \ENA
Here,  $\Bs_{-1,0}$, $\Bs_{1,0}$, $\Bs_{-1,1}$, $\Bs_{0,1}$ and $\Bs_{1,1}$ components are (block) matrices
with dimensions $(3N-1)J\times (3N-1)(2J+1)$, $(3N-1)J\times (3N-1)(2J+1)$, $(3N-1)J\times(3N-1)(J-2)$, $(3N-1)(J-1)\times(3N-1)(J-2)$ and $(3N-1)J\times(3N-1)(J-1)$, respectively.
These  matrices are appropriately defined to ensure the circulant structure of $\gbt^{\al}$.
Further details on these padding matrices are provided in Appendix~\ref{app:mats}.

   \item Vector expansion: To conform with the circulant-maxtrix format, for each $q\in \Jd$,  we construct the augmented column vector ${\Vt}^{m}_{q}$ of length $(3N-1)$,
    by appending zeros to the column vector $\V^{m}_{q}$. This is defined as follows.
    \EQA
    \label{eq:aug_v}
    {\Vt}_{q}^{m} = \l[v^{m}_{-N^{\dagger}/2,q}~\varphi_{-N^{\dagger}/2,q},
        v^{m}_{-N^{\dagger}/2+1,q}~\varphi_{-N^{\dagger}/2+1,q},\ldots, v^{m}_{N^{\dagger}/2,q}~\varphi_{N^{\dagger}/2,q},0,0,\ldots,0\r]^{\T}.
    \ENA
    Then, we form the vector ${\Vt}^{m}$ of size $(3N-1)(3J-1)$ by appending zeros
    as follows:
        \EQA
    \label{eq:aug_vm}
    {\Vt}^{m} = \l[{\Vt}^{m}_{-J^{\dagger}/2},{\Vt}^{m}_{-J^{\dagger}/2+1},\ldots, {\Vt}^{m}_{J^{\dagger}/2},\0,\0,\ldots,\0\r]^{\T}.
    \ENA
    where $\0$'s are zero vectors of length $(3N-1)$.

    \item Circulant matrix-vector product:   Utilizing this setup, we express the matrix-vector product \eqref{eq:tmf} as a circulant matrix-vector product, which is used to compute an intermediate column vector of discrete solutions. This column vector, denoted by $\Ut^{m+1,  \al}$, has a length of $(3N-1)(3J-1)$ and is determined as follows:
    \EQA
    \label{eq:tmp}
        \Ut^{m+1, \al} = \Delta x \Delta y~ {\gbt}^\al ~ \Vt^{m}, \qquad\al\in\Al_h.
    \ENA
    Here, ${\gbt}^\al $ is the circulant matrix defined in \eqref{cir_t}, $\Vt^{m}$ is the (augmented) column vector  given by \eqref{eq:aug_vm}.
    We note that discrete solutions $u_{n, j}^{m+1, \al}$ for $\Omega_{\myin}$ are obtained by discarding the components in $\Ut^{m+1, \al}$ corresponding to indices $n \in \mathbb{N}^{\ddagger} \setminus \mathbb{N}$ or $j \in \mathbb{J}^{\ddagger} \setminus \mathbb{J}$.

    \end{itemize}
The circulant matrix-vector product in \eqref{eq:tmp} can be efficiently computed as a circulant convolution using 2D FFT.
To this end, we let ${\gbh}^{\al}_{1}$ be the first column ${\gbt}^{\al}$ defined in \eqref{cir_t} reshaped into
a $(3N-1)\times (3J-1)$ matrix as follows
\EQA
\label{eq:gpt_1}
\begin{aligned}
{\gbh}^{\al}_{1} = &\left[~~
\begin{matrix}
\l[\gbt^{\al}_{-J/2+1}\r]_1   & \ldots &  \l[\gbt^{\al}_{J/2}\r]_1
& \l[\gbt^{\al}_{J/2+1}\r]_1 & \ldots & \l[\gbt^{\al}_{3J/2-1}\r]_1
& \l[\gbt^{\al}_{-3J/2+1}\r]_1 & \ldots & \l[\gbt^{\al}_{-J/2}\r]_1
\end{matrix}
~~\right]~.
\end{aligned}
\ENA
Here, $\l[\gbt^{\al}_{q}\r]_1$, $q\in\Jbb^{\ddagger}$, denotes the first column of the matrix
$\gbt^{\al}_{q}$.
We  reshape the vector $\Vt^{m}$ into a $(3N-1)\times (3J-1)$ matrix, denoted by $\l[\Vt^{m}\r]$.
The circulant matrix-vector product in \eqref{eq:tmp} can be expressed as a 2D circular convolution product
\EQA
\label{eq:cir_p}
   \l[\Ut^{m+1, \al}\r] =  \Delta x \Delta y~ {\gbh}^\al_1 \ast \l[\Vt^{m}\r], \qquad\al\in\Al_h.
\ENA
Here, $\l[\Ut^{m+1, \al}\r]$ is a $(3N-1)\times (3J-1)$ matrix, representing the reshaped version of $\Ut^{m+1, \al}$
from \eqref{eq:tmp}.
The circular convolution product \eqref{eq:cir_p} is computed efficiently using FFT and inverse FFT (iFFT) as follows
\EQA
\label{eq:fft_ifft}
  \l[\Ut^{m+1, \al}\r] = {\text{FFT}}^{-1}\l\{\text{FFT}\l\{\l[\Vt^{m}\r]\r\} \circ \text{FFT}\l\{{{\Delta x \Delta y\,  {\gbh}^\al_1}}\r\}\r\}, \qquad\al\in\Al_h.
\ENA
Finally, we discard the components in $\l[\Ut^{m+1, \al}\r]$ corresponding to indices $n \in \mathbb{N}^{\ddagger} \setminus \mathbb{N}$ or $j \in \mathbb{J}^{\ddagger} \setminus \mathbb{J}$, obtaining
discrete solutions $u_{n, j}^{m+1, \al}$ for $\Omega_{\myin}$.

As explained in Remark~\ref{rem:rescaled_kernel}, the factor $\Delta x,\Delta y$ is incorporated into $g_{n-l,j-d}^{\alpha}$, yielding $\widetilde{g}_{n-l,j-d}^{\alpha}$.
Following our convention, for simplicity, we continue to write
$\Delta x \Delta y\, \l[\gb^{\al}_{j-d}\r]$ in \eqref{eq:tmf},
 $\Delta x \Delta y~ {\gbt}^\al$ in \eqref{eq:tmp}, and
$\Delta x\,\Delta y\, {\gbh}^\al_1$ in~\eqref{eq:cir_p}-\eqref{eq:fft_ifft}

The implementation \eqref{eq:fft_ifft} suggests that we compute
the rescaled weight components of $\Delta x \Delta y\, {\gbh}^\al_1$
only once for each $\al\in\Al_h$ through the closed-form expression in \eqref{eq:g_k}, and reuse them for the computation over all time intervals.
Putting everything together, the proposed numerical scheme for the two-factor uncertain volatility model pricing problem is presented in Algorithm \ref{alg:monotone} below.
\begin{algorithm}[H]
\caption{
A monotone piecewise constant control integration algorithm for a two-factor uncertain volatility model pricing problem defined in Definition~\ref{def:uvm_def}, {{where $h>0$ is fixed.}}
}

\begin{algorithmic}[1]
\label{alg:monotone}

\STATE
\label{alg:gf}
for each $\al\in\Al_h$, and for each $j\in\Nbb^{\ddagger}$,
compute rescaled weight matrices \mbox{$\Delta x \Delta y\, \gb_{j}^{\al} = \l[\Delta x \Delta y\,g^\al_{n-l,j}\r]_{n\in\Nbb,l\in\Nd}$} defined in \eqref{eq:gb}
using the closed-form expression \eqref{eq:g_k};

\STATE
\label{alg:gf1}
construct rescaled weight matrices $\Delta x \Delta y\, {\gbh}^\al_1$, $\al\in\Al_h$, using  $\l[\Delta x \Delta y\, g^\al_{n-l,j}\r]_{n\in\Nbb,l\in\Nd}$, $j\in\Nbb^{\ddagger}$, defined in \eqref{eq:gpt_1};

\STATE
\label{alg:initial}
initialize $v_{n, j}^{0}
=p(e^{x_n},e^{y_j})$,
$n\in\Nd, j\in \Jd$;

 \FOR{$m = 0, \ldots, M-1$}

  \FOR{$\al\in\Al_h$}
  \label{alg:forstart}

 \STATE
    \label{alg:step1}
    compute matrices of intermediate values $[\Ut^{m+1, \al}]$ using FFT and iFFT as per \eqref{eq:fft_ifft};

\STATE
\label{alg:dis}
 obtain vector of discrete solutions $\U^{m+1, \al} = \l[u_{n, j}^{m+1, \al}\r]_{n\in \Nbb, j \in \Jbb}$ by  discarding the components in $[\Ut^{m+1, \al}]$ corresponding to indices $n \in \mathbb{N}^{\ddagger} \setminus \mathbb{N}$ or $j \in \mathbb{J}^{\ddagger} \setminus \mathbb{J}$;

 \ENDFOR
 \label{alg:forend}

        \STATE
    \label{alg:step4}
    set $v_{n,j}^{m+1}= \max_{\al\in\Al_h}u_{n,j}^{m+1, \al}$
     with  $\displaystyle (\alpha^*)_{n, j}^{m+1} = \arg\max_{\alpha \in \mathcal{A}h} u_{n,j}^{m+1, \alpha}$,
      $n\in\Nbb$ and  $j\in\Jbb$,
      \\
    where $u_{n,j}^{m+1, \al}$ are from Line~\ref{alg:dis};
    \hfill $(\Omega_{\myin})$

        \STATE
    \label{alg:step5}
    compute $v_{n,j}^{m+1}$, $n\in\Nbb^{\dagger}\setminus\Nbb$ or $j\in\Jbb^{\dagger}\setminus\Jbb$, using \eqref{eq:outi}; \hfill $(\Omega_{\myout})$

\ENDFOR
\end{algorithmic}
\end{algorithm}
To set the stage for highlighting the key differences between our method and finite difference methods with policy iteration, we first provide a detailed explanation of Algorithm~\ref{alg:monotone} for a fixed $h > 0$. As noted in Subsection~\ref{ssc:pcc}, the core component of the algorithm is the piecewise constant control method combined with monotone numerical integration using Green's functions.

In Lines~\ref{alg:gf}-\ref{alg:gf1}, non-negative rescaled weight matrices $\Delta x \Delta y\, {\gbh}^\al_1$, $\al \in \Al_h$, are precomputed for a fixed
$\Delta \tau$ using the Green's functions of independent 2D PDEs,
each corresponding to a discretized control value $\al \in \mathcal{A}_h$.
Since the timestep size $\Delta \tau$ is fixed, these rescaled weight matrices need to be computed only once and can be reused across all timesteps.
In Lines~\ref{alg:forstart} to \ref{alg:forend}, the independent linear 2D PDEs for $a \in \mathcal{A}_h$ are solved to obtain the solutions at
$\tau_{m+1}$, $m \in \{ 0, \ldots, M-1\}$, using a numerical integration scheme implemented via FFT and iFFT. This scheme is monotone in the viscosity sense due to the non-negativity of the (rescaled) weights.
In Line~\ref{alg:step4}, the time-$\tau_{m+1}$ numerical solutions of these 2D PDEs, $u_{n,j}^{m+1, \al}$ for $\al \in \Al_h$, are combined using the $\texttt{max}(\cdot)$ operator to compute approximations of the value function (i.e.\ $v_{n,j}^{m+1}$) and the optimal control (i.e.\ $(\alpha^*)_{n,j}^{m+1}$) at the grid points, directly addressing the nonlinearity of the HJB equation within $\Omega_{\myin}$. Finally, in Line~\ref{alg:step5}, boundary conditions are applied to ensure proper handling of $\Omega_{\myout}$.

\begin{remark}[Comparison with finite differences and policy iteration]
The proposed approach, based on the piecewise constant control method and numerical integration using Green's functions, as outlined in Algorithm~\ref{alg:monotone}, differs fundamentally from conventional methods, such as finite differences combined with policy iteration. These distinctions are particularly significant in addressing the nonlinearity of HJB equations and overcoming associated computational challenges.

Conventional methods, often referred to as ``discretize, then optimize'', typically rely on finite difference schemes to approximate the temporal and spatial partial derivatives of the HJB equation. When explicit time-stepping is employed, these schemes are constrained by CFL conditions, which impose restrictions on the timestep size $\Delta \tau$ to ensure numerical stability \cite{courant1928partiellen, courant1967partial}. Alternatively, implicit time-stepping avoids these constraints but results in a system of nonlinear algebraic equations that must be solved iteratively at each timestep, typically via policy iteration
\cite{wang08, forsyth2007numerical}. In both cases, computing the optimal control and value function involves either stringent timestep restrictions or computationally expensive iterative procedures.

In contrast, our approach, succinctly described as ``decompose, integrate, then optimize'',
avoids direct discretization of partial derivatives in the HJB equation and proceeds in two steps. First, we discretize the control set $\mathcal{A}$ into a finite subset $\mathcal{A}_h$ and treat each discretized control as constant on each time sub-interval. This yields a set of independent linear 2D PDEs in $(x,y)$, each corresponding to a discretized control value.
Each PDE is solved using Green's functions, representing the solution at the next time point as a 2D convolution integral. This integral is evaluated using a numerical integration scheme that is monotone in the viscosity sense--no {\imapurple{additional explicit-scheme CFL restriction or nonlinear iterative solvers are required.}}

In the second step, we combine the solutions of these linear PDEs at each grid point
using a $\max\{\cdot,\cdot\}$ operation, which preserves monotonicity and addresses the
HJB equation's nonlinearity. This yields approximations to both the value function and the
optimal control without relying on policy iteration or other iterative methods. By bypassing
derivative-based discretization, the proposed method 
{\imapurple{avoids explicit-scheme CFL timestep restrictions}}
and offers a robust and efficient alternative for solving HJB equations.

\end{remark}


\begin{remark}[Complexity]
As noted earlier, the cardinality of $\Al_h$, denoted by $Q$,  is $Q = \Ocal(1/h)$.
Algorithm \ref{alg:monotone} involves, for $m=0,\ldots,M-1$, the following key steps:
\begin{itemize}

    \item Compute $u_{n,j}^{m+1, \al}$, $n\in\Nbb^{\dagger}$, $j\in\Jbb^{\dagger} $ for all $\al\in\Al_h$ via FFT algorithm. The complexity of this step is $\Ocal(QNJ\log(NJ)=\Ocal(1/h^3\cdot\log(1/h))$, where we take into account \eqref{eq:dis_parameter}.

    \item Finding the optimal control $(\al^*)_{n, j}^{m+1}$ for each node $\x_{n,j}^{m+1}$ by comparing $u_{n,j}^{m+1, \al}$ for all $\al \in \Al_h$ requires $\Ocal(1/h)$ complexity. Thus, with a total of $\Ocal(1/h^2)$ nodes, this gives a complexity $\Ocal(1/h^3)$.

    \item Therefore, the major cost of Algorithm \ref{alg:monotone} is determined by the step of FFT Algorithm. With $\Ocal(1/h)$ timesteps, the total complexity is $\Ocal(1/h^4\cdot\log(1/h))$.

    \end{itemize}

\end{remark}

\section{Convergence to viscosity solution}
\label{sc:conv}
In this section, we appeal to a Barles-Souganidis-type analysis \cite{barles-souganidis:1991} to rigorously study the convergence of our scheme in $\Omega_{\myin}$ as $h \to 0$ by verifying three properties: $\ell_\infty$-stability, monotonicity,
and consistency.
Our scheme consists of \eqref{eq:tau0i} (for $ {\Omega}_{\tau_0}$), \eqref{eq:outi} (for $ {\Omega}_{{\myout}}$),
and \eqref{eq:vmax} (for $\Omega_{{\myin}}$).

For subsequent use, we state several results below.
For $\Omega_{{\myin}}$, from Lemma~\ref{lem:g_k}, for a fixed $\alpha\in\Al$, we have $\iint_{\mathbb{R}^2}g_\alpha(x,y;   \Delta \tau)dxdy = e^{-r\Delta \tau}$, hence $ \iint_{\mathbb{D}^{\dagger}}g_\alpha(x,y;   \Delta \tau)dxdy\leq e^{-r\Delta \tau}<1$,
where $\mathbb{D}^{\dagger}$ is defined in~\eqref{eq:truncate_region}.
For $n \in \Nbb$ and $j \in \Jbb$ (i.e.\ $\Omega_{{\myin}}$),
we define
\EQS
\epsilon_{g}= \max_{\alpha,n,j}\epsilon^{\alpha}_{n,j},
\text{ where }~ \epsilon^{\alpha}_{n,j}\coloneqq \bigg| \iint_{\mathbb{D}^{\dagger}}g_\alpha(x_n- x',y_j- y';   \Delta \tau)dx'dy' - \Delta x \Delta y \mysum_{l\in\mathbb{N}^{\dagger}}^{d\in\mathbb{J}^{\dagger}}\varphi_{l,d} ~g^\alpha_{n-l,j-d}\bigg|.
\ENS
Here, at noted earlier, $\varphi_{l,d}$ are the weights of the 2D composite trapezoidal quadrature rule.
Using the definition of $\epsilon_g$ and the fact that $\varphi_{l,d}>0$, for any fixed $\alpha \in \Al$, we have
\EQ
\label{eq:gkbound}
0~\leq~ \Delta x \Delta y \mysum_{l\in\mathbb{N}^{\dagger}}^{d\in\mathbb{J}^{\dagger}}\varphi_{l,d} ~g^\alpha_{n-l,j-d}~<~1+\epsilon^{\alpha}_{n,j}~\leq~1+\epsilon_{g}< e^{\epsilon_{g}}.
\EN
To establish $\epsilon_g = \mathcal{O}(h^2)$, we begin by analyzing $\epsilon^{\alpha}_{n,j}$ for any $\alpha \in \Al_h$ and showing that it satisfies this order. Using the explicit form of $g_\alpha(x, y; \Delta \tau)$ provided in Lemma~\ref{lem:g_k} [Equation~\ref{eq:g_k}], we observe that $g_\alpha$ is smooth and that its second-order partial derivatives remain bounded for all
$\Delta \tau > 0$ (or for all $h > 0$ by \eqref{eq:dis_parameter}). Letting
$C_\alpha = \sup_{(x, y) \in \mathbb{D}^\dagger} \max\bigg(\bigg|\frac{\partial^2 g_\alpha}{\partial x^2}\bigg|, \bigg|\frac{\partial^2 g_\alpha}{\partial y^2}\bigg|\bigg)$, we note that $C_\alpha$ is a bounded constant independent of $h$.
Using the error formula for the composite trapezoidal rule applied to a smooth function over the
bounded rectangular domain $\mathbb{D}^\dagger$, with
$|\mathbb{D}^\dagger| := (y^{\dagger}_{\mymax}- y^{\dagger}_{\mymin})
(x^{\dagger}_{\mymax} - x^{\dagger}_{\mymin})$ denoting its area,
we obtain the bound for all $n \in \Nbb$, $j \in \Jbb$, and a fixed $\alpha \in \Al_h$:
\[
\epsilon^{\alpha}_{n,j} \leq \bigg(\frac{\Delta x^2 + \Delta y^2}{12}\bigg)\, C_\alpha \, |\mathbb{D}^\dagger|
~\overset{\text{(i)}}{=} \bigg(\frac{C_1^2 + C_2^2}{12}\bigg) \, C_\alpha \, |\mathbb{D}^\dagger| \, h^2.
\]
Here, in (i), $\Delta x = C_1 h$ and $\Delta y = C_2 h$ by \eqref{eq:dis_parameter}.
Since $\Al$ is compact, $C_\alpha$, as a continuous function of $\alpha$, is uniformly bounded over $\Al$. Consequently, $C^\alpha_{\mymax} = \sup_{\alpha \in \Al} C_\alpha$ is finite independently of $h$. From here, we obtain
\EQ
\label{eq:epsilon_g_bound}
\epsilon_g =  \max_{\alpha,n,j}\epsilon^{\alpha}_{n,j}\le
\bigg(\frac{C_1^2 + C_2^2}{12}\bigg) \, C^\alpha_{\mymax} \, |\mathbb{D}^\dagger| \, h^2.
\EN
Since $C_1$, $C_2$, $C^\alpha_{\mymax}$, and $|\mathbb{D}^\dagger|$ are bounded constants independently of $h$, it follows that for sufficiently small $h$, $\epsilon_g = \mathcal{O}(h^2)$.


\subsection{Stability}
\label{ssc:stability}
{\imapurple{Using the explicit bound for $\epsilon_g$ in \eqref{eq:epsilon_g_bound}, we now show that the scheme is $\ell_\infty$-stable.}}
\begin{lemma}[$\ell_\infty$-stability]
\label{lemma:stability}
Suppose the discretization parameter $h$ satisfies \eqref{eq:dis_parameter}.
Then our scheme, which consists of \eqref{eq:tau0i}, \eqref{eq:outi}, and \eqref{eq:vmax}, satisfies the bound
$\ds \sup_{h > 0} \left\| v^{m} \right\|_{\infty}<\infty$
for all $m = 0, \ldots, M$, as the discretization parameter $h \to 0$.
Here, we have $\left\| v^{m} \right\|_{\infty} = \max_{n, j} |v_{n, j}^{m}|$,
$n \in \mathbb{N}^{\dagger}$ and $j \in \mathbb{J}^{\dagger}$.
\end{lemma}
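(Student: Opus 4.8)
The plan is to derive a one-step recursion for $\|v^{m}\|_{\infty}$ directly from the scheme and then run a discrete Gr\"onwall induction. The two facts I would lean on are that the Dirichlet/initial data are bounded uniformly in $h$, and that the discrete convolution weights are nonnegative with near-unit total mass, as quantified by \eqref{eq:gkbound}.

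First, since the truncated domain $\mathbf{D}^{\dagger}$ is held fixed while only the mesh is refined as $h\to 0$, and since $p(e^{x},e^{y})$ is continuous, the constant $\Pcal := \max_{(x,y)\in\mathbf{D}^{\dagger}}|p(e^{x},e^{y})|$ is finite and independent of $h$. Then the initialization \eqref{eq:tau0i} gives $\|v^{0}\|_{\infty}\le\Pcal$, and the outer prescription \eqref{eq:outi} gives $|v^{m+1}_{n,j}|=|p(e^{x_n},e^{y_j})|\,e^{-r\tau_{m+1}}\le\Pcal$ at every node with $n\in\mathbb{N}^{\dagger}\setminus\mathbb{N}$ or $j\in\mathbb{J}^{\dagger}\setminus\mathbb{J}$, for all $m$.

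Next, for the interior update \eqref{eq:vmax} with $n\in\mathbb{N}$, $j\in\mathbb{J}$, I would use $|\max_{\al}a_{\al}|\le\max_{\al}|a_{\al}|$ together with $\varphi_{l,d}\ge 0$ and $g^{\al}_{n-l,j-d}\ge 0$ (Lemma~\ref{lem:g_k}) to obtain
\[
|v^{m+1}_{n,j}|\le\max_{\al\in\Al_h}\Delta x\,\Delta y\mysum_{l\in\mathbb{N}^{\dagger}}^{d\in\mathbb{J}^{\dagger}}\varphi_{l,d}\,g^{\al}_{n-l,j-d}\,|v^{m}_{l,d}|\le\Big(\max_{\al\in\Al_h}\Delta x\,\Delta y\mysum_{l\in\mathbb{N}^{\dagger}}^{d\in\mathbb{J}^{\dagger}}\varphi_{l,d}\,g^{\al}_{n-l,j-d}\Big)\|v^{m}\|_{\infty}\le(1+\epsilon_g)\|v^{m}\|_{\infty},
\]
the final inequality being \eqref{eq:gkbound}. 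Combining with the boundary bound yields $\|v^{m+1}\|_{\infty}\le\max\{(1+\epsilon_g)\|v^{m}\|_{\infty},\Pcal\}$, and since $(1+\epsilon_g)^{m}\ge 1$ an immediate induction gives $\|v^{m}\|_{\infty}\le(1+\epsilon_g)^{m}\Pcal$ for all $m=0,\dots,M$.

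Finally I would bound the amplification uniformly in $h$: using $\epsilon_g=\mathcal{O}(h^2)$ (noted just before the lemma) and $M=T/\Delta\tau=T/(C_3 h)$ from \eqref{eq:dis_parameter}, one has $(1+\epsilon_g)^{m}\le(1+\epsilon_g)^{M}\le e^{\epsilon_g M}=e^{\mathcal{O}(h)}$, which stays bounded (indeed $\to 1$) as $h\to 0$; hence $\|v^{m}\|_{\infty}\le e^{\mathcal{O}(h)}\Pcal$ for all $m$, giving $\ell_\infty$-stability. The argument is mostly bookkeeping; the only delicate points --- and the closest thing to an obstacle --- are (i) that $\Pcal$ is genuinely $h$-independent, which relies on the computational domain being fixed, and (ii) that the geometric accumulation of the $(1+\epsilon_g)$ factors over $M=\mathcal{O}(1/h)$ timesteps does not blow up, which is precisely the cancellation $\epsilon_g\cdot M=\mathcal{O}(h^2)\cdot\mathcal{O}(1/h)=\mathcal{O}(h)$.
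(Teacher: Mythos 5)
Your proof is correct and follows essentially the same inductive route as the paper, with the same key ingredient (the near-unit total mass bound \eqref{eq:gkbound} together with nonnegativity of $\varphi_{l,d}$ and $g^\alpha_{n-l,j-d}$) driving the recursion $\|v^{m+1}\|_\infty \le (1+\epsilon_g)\|v^m\|_\infty$. One small thing you do a bit more carefully than the paper's own write-up: you explicitly close the loop by checking that $(1+\epsilon_g)^M \le e^{\epsilon_g M} = e^{\mathcal{O}(h)}$ stays bounded as $h\to 0$ (using $\epsilon_g = \mathcal{O}(h^2)$ and $M = T/(C_3 h)$), whereas the paper stops at the bound $|v^m_{n,j}| < e^{m\epsilon_g}\|v^0\|_\infty$ and leaves that final uniformity observation implicit.
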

\begin{proof}[Proof of Lemma~\ref{lemma:stability}]
First, we note that, for any fixed $h>0$, as given by \eqref{eq:tau0i}, we have $||v^{0}||_{\infty}<\infty$, since $\myblue{\Omega}$ is a bounded domain. Therefore, we have $\sup_{h>0}||v^{0}||_{\infty}<\infty$. Motivated by this observation, to demonstrate $l_{\infty}$-stability of our scheme, we will show that, for a fixed $h>0$, at any $(x_n,y_j,\tau_m)$, $m=0,\ldots,M$, we have
\EQ
\label{eq:st1}
|v^{m}_{n,j}|<e^{m\epsilon_g}||v^{0}||_{\infty},~~m=0,1,\ldots,M.
\EN
To see why  \eqref{eq:st1} is bounded as $h \to 0$, note that, by \eqref{eq:dis_parameter},  $mh \leq Mh = M \Delta \tau / C_3 = T / C_3$.
Together with \eqref{eq:epsilon_g_bound}, this results in
\[
e^{m\epsilon_g} \le  e^{M\epsilon_g}
\le
\exp\bigg(\bigg(\frac{C_1^2 + C_2^2}{12}\bigg) \, C^\alpha_{\mymax} \, |\mathbb{D}^\dagger| \, (T / C_3)\, h\bigg)
\to  1, \quad \text{as  $h \to 0$}.
\]
 It is straightforward to show that \eqref{eq:tau0i} is $\ell_{\infty}$-stable, since $\max_{n,j}|v^{0}_{n,j}|\leq||v^{0}||_{\infty}$ for $(n,j)\in\mathbb{N}^{\dagger} \times \mathbb{J}^{\dagger}$, clearly satisfying \eqref{eq:st1}.
 Next, for equation \eqref{eq:outi}, we note  that, since  $|v_{n, j}^{m+1}| = |v_{n, j}^{m} e^{-r\Delta \tau}| < |v_{n, j}^{m}|$,
 by induction on $m$, we have $\max_{n,j}|v^{m}_{n,j}|\leq||v^{0}||_{\infty}$,
 for either $(n, j) \in (\mathbb{N}^{\dagger}\setminus \mathbb{N}) \times \mathbb{J}^{\dagger}$ or
 $\mathbb{N}^{\dagger}  \times (\mathbb{J}^{\dagger}\setminus \mathbb{J})$.

Now we focus on the main task, demonstrating $\ell_{\infty}$-stability for \eqref{eq:vmax} ($\Omega_{\myin}$)
through an induction proof on $m$.
For the base case $m = 1$, with a fixed $\alpha \in \Al_h$,
\EQ
\label{eq:u_1}
u^{1}_{n,j} = \Delta x \Delta y \mysum_{l\in\mathbb{N}^{\dagger}}^{d\in\mathbb{J}^{\dagger}}\varphi_{l,d}~g^\alpha_{n-l,j-d}~v_{l, d}^{0}.
\EN
Then, we have
\begin{align}
\label{eq:in_base}
|u^{1}_{n,j}|~\leq~ \Delta x \Delta y \mysum_{l\in\mathbb{N}^{\dagger}}^{d\in\mathbb{J}^{\dagger}}\varphi_{l,d} ~g^\alpha_{n-l,j-d}|v^{0}_{{{l,d}}}|
~\leq~ \Delta x \Delta y \mysum_{l\in\mathbb{N}^{\dagger}}^{d\in\mathbb{J}^{\dagger}}\varphi_{l,d} ~g^\alpha_{n-l,j-d}||v^{0}||_{\infty}~\leq~e^{1\epsilon_g}||v^{0}||_{\infty},
\end{align}
{\zblue{where the last inequality is due to \eqref{eq:gkbound}.
Since $v^{1}_{n,j} = \max_{\alpha}u_{n,j}^{1}$, we have
\[
|v_{n,j}^{1}| = |\max_{\alpha}u^{1}_{n,j}|
~\leq~ \max_{\alpha}|u^{1}_{n,j}|
~\leq~ e^{1 \epsilon_g}||v^{0}||_{\infty},
\]
as wanted for the base case.}}
For the hypothesis,  assume that \eqref{eq:st1} hold for $m=m'$, $1\leq m'\leq M-1$
\EQ
\label{eq:hyp}
|v^{m'}_{n,j}|<e^{m'\epsilon_g}||v^{0}||_{\infty},\qquad (n,j)\in\mathbb{N}\times\mathbb{J}.
\EN
In the induction step, we need show that \eqref{eq:st1} also holds for $m=m'+1$, i.e.\
\EQ
\label{eq:induc}
|v^{m'+1}_{n,j}|<e^{(m'+1)\epsilon_g}||v^{0}||_{\infty}.
\EN
To show \eqref{eq:induc}, recalling $u^{m'+1}_{n,j}$ from  \eqref{eq:dissum} gives
\begin{align}
\label{eq:indu_m1}
|u^{m'+1}_{n,j}|&~\leq~ \Delta x \Delta y \mysum_{l\in\mathbb{N}^{\dagger}}^{d\in\mathbb{J}^{\dagger}}\varphi_{l,d} ~g^\alpha_{n-l,j-d}|v_{l, d}^{m'}|
~\overset{\text{(i)}}{\leq}~ \Delta x \Delta y \mysum_{l\in\mathbb{N}^{\dagger}}^{d\in\mathbb{J}^{\dagger}}\varphi_{l,d} ~g^\alpha_{n-l,j-d}~e^{m'\epsilon_g}||v^{0}||_{\infty}\nonumber\\
&~\leq~ e^{\epsilon_g}e^{m'\epsilon_g}||v^{0}||_{\infty}~=~e^{(m'+1)\epsilon_g}||v^{0}||_{\infty}.
\end{align}
{\zblue{Here, (i) is due to the hypothesis \eqref{eq:hyp} together with the fact that
the scheme for ${\Omega}_{{\myout}}$, captured by equation \eqref{eq:outi}, is also $\ell_{\infty}$-stable as shown earlier.}}
Hence, $|v_{n,j}^{m'+1}| = |\max_{\alpha}u^{m'+1}_{n,j}|
~\leq~ e^{(m'+1)\epsilon_g}||v^{0}||_{\infty}$, proving \eqref{eq:induc} for $m=m'+1$. This concludes the proof.
\end{proof}

\subsection{Consistency}
While equations \eqref{eq:tau0i}, \eqref{eq:outi}, and \eqref{eq:vmax} are convenient for computation, they are not in a form
amendable for analysis. For purposes of verifying consistency, it is more convenient to rewrite them in
a single equation. To this end, for $(x_n, y_j, \tau_{m+1}) \in \Omega_{\myin}$, i.e.\ $n \in \Nbb$ and $j \in \Jbb$,
we define operator $\mathcal{C}_{n, j}^{m+1}(\cdot)$, where
\EQ
\label{eq:scheme_CD}
\begin{aligned}
\mathcal{C}_{n, j}^{m+1}(\cdot) \equiv
\mathcal{C}_{n, j}^{m+1}
\bigg(h, v_{n, j}^{m+1},
\left\{v_{l,d}^{m}\right\}_{\subalign{l\in \Nd\\d\in \Jd}}
 \bigg)
= \frac{1}{\Delta \tau}\bigg[v_{n,j}^{m+1}
-
\max_{\alpha \in \Al_h} ~ \bigg\{\Delta x \Delta y \mysum_{l \in \Nd}^{d \in \Jd} \varphi_{l, d}~
g_{n-l, j-d}^{\alpha}~v^{m}_{l,d}\bigg\} \bigg].
\end{aligned}
\EN
Using $\mathcal{C}_{n,j}^{m+1}(\cdot)$ defined
in \eqref{eq:scheme_CD}, our numerical scheme at  the reference node ${\bf{x}} = (x_n, y_j, \tau_{m+1})$
can be rewritten in an equivalent form as follows
\EQA
\label{eq:scheme_GF}
0=
\mathcal{H}_{n,j}^{m+1}
\bigg(h, v_{n, j}^{m+1},
\left\{v_{l,d}^{m}\right\}_{\subalign{l\in \Nd\\d\in \Jd}}
 \bigg)
\equiv
\left\{
\begin{array}{lllllllllllll}
\mathcal{C}_{n,j}^{m+1}
\left(\cdot\right),
&
\quad {\bf{x}} \in \Omega{\myin},
\\
v_{n,j}^{m+1} - p(e^{x_{n}}, e^{y_j}),
&
\quad {\bf{x}} \in \Omega_{\tau_0},
\\
v_{n,j}^{m+1} - p(e^{x_n}, e^{y_j}) e^{-r\tau_{m+1}},
&
\quad {\bf{x}} \in \Omega_{\myout},
\end{array}
\right.
\ENA
where the sub-domains are defined in \eqref{eq:sub_domain_whole}, and
$p(\cdot, \cdot)$ is the terminal condition.

To demonstrate the consistency in viscosity sense of \eqref{eq:scheme_GF}, we need an intermediate result
given in Lemma~\ref{lemma:error_smooth} below.
\begin{lemma}[Two dimensional - $\Omega_{\myin}$]
\label{lemma:error_smooth}
Let $\phi$ be a test function in $\C{\Omega}$. For fixed  $\alpha \in \Al$ and ${\bf{x}}_{n,j}^{m} \in \Omega$, where
$n, j\in\mathbb{N}$ and $m \in \{1, \ldots, M\}$, with ${\phi_{n,j}^{m}} = \phi\big({\bf{x}}_{n,j}^{m}\big)$,
and for sufficiently small $h$, we have
\EQA
\label{eq:error_analysis_smooth}
\Delta x \Delta y
\mysum_{l\in\mathbb{N}^{\dagger}}^{d\in\mathbb{J}^{\dagger}} \varphi_{l, d} ~
    g_{n-l, j - d}^{\alpha}~  \phi_{l,d}^{m}~
=~
\phi_{n,j}^{m} + \Delta \tau \l[ \mathcal{L}_{\alpha}\phi \r]_{n,j}^{m} + \mathcal{O}(h^2).
 \ENA
Here,  $\big[\mathcal{L}_{\alpha}\phi \big]_{n,j}^{m} = \big[\mathcal{L}_{\alpha}\phi\big]\big({\bf{x}}_{n,j}^{m}\big)$,
and the differential operator $\Lcal_{\al}$ are defined in \eqref{eq:LcalX}.
\end{lemma}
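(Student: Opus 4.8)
The plan is to expand the left-hand side of \eqref{eq:error_analysis_smooth} by recognizing the discrete double sum as a quadrature approximation to the convolution integral $\iint_{\mathbf{D}^{\dagger}} g_{\alpha}(x_n - x', y_j - y';\Delta\tau)\,\phi(x',y',\tau_m)\,dx'dy'$, and then evaluating that integral asymptotically in $\Delta\tau$ via a Taylor expansion of $\phi$ around the point $(x_n,y_j)$. First I would control the quadrature error: since $\phi \in \mathcal{C}^{\infty}(\Omega)$ is smooth with bounded derivatives on the compact domain, the two-dimensional composite trapezoidal rule with positive weights $\varphi_{l,d}$ incurs an error of $\mathcal{O}(\Delta x^2 + \Delta y^2) = \mathcal{O}(h^2)$ (using \eqref{eq:dis_parameter}), so
\[
\Delta x \Delta y \mysum_{l\in\mathbb{N}^{\dagger}}^{d\in\mathbb{J}^{\dagger}} \varphi_{l,d}\, g_{n-l,j-d}^{\alpha}\,\phi_{l,d}^{m}
= \iint_{\mathbf{D}^{\dagger}} g_{\alpha}(x_n - x', y_j - y';\Delta\tau)\,\phi(x',y',\tau_m)\,dx'dy' + \mathcal{O}(h^2).
\]
Here one must be slightly careful that the constant in the $\mathcal{O}(h^2)$ term does not blow up as $\Delta\tau \to 0$; this is where Lemma~\ref{lem:g_k} is essential, since $g_{\alpha}$ is an explicit scaled Gaussian whose relevant derivatives, when integrated, stay controlled — I would absorb any $\Delta\tau$-dependence into the structure of the subsequent integral expansion rather than into the quadrature constant, or equivalently note that for $\Delta\tau = C_3 h$ the combined error is still $\mathcal{O}(h^2)$.

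Next I would handle the integral itself. Extending the region of integration from $\mathbf{D}^{\dagger}$ back to $\mathbb{R}^2$ costs only $\mathcal{O}(\Delta\tau\, e^{-1/(2\Delta\tau)})$ by Lemma~\ref{lemma:truncation} (times the bound on $\phi$), which is asymptotically negligible compared to $\mathcal{O}(h^2)$. Then, writing $g_{\alpha}(x_n - x', y_j - y';\Delta\tau) = e^{-r\Delta\tau} f_{\alpha}(x_n - x', y_j - y';\Delta\tau)$ with $f_{\alpha}$ the bivariate normal density of \eqref{eq:cond_den_o}, the integral becomes $e^{-r\Delta\tau}\,\mathbb{E}\big[\phi(x_n - \Xi, y_j - \mathrm{H}, \tau_m)\big]$ where $(\Xi,\mathrm{H})$ is Gaussian with mean $(\mu_x,\mu_y)$, variances $(\kappa_x^2,\kappa_y^2) = (\sigma_x^2\Delta\tau, \sigma_y^2\Delta\tau)$, and correlation $\rho$. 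I would Taylor-expand $\phi$ to second order around $(x_n,y_j,\tau_m)$ and take expectations term by term: the constant term gives $\phi_{n,j}^m$; the first-order terms contribute $-\mu_x (\phi_x)_{n,j}^m - \mu_y (\phi_y)_{n,j}^m = \Delta\tau\big[(r - \tfrac{\sigma_x^2}{2})\phi_x + (r - \tfrac{\sigma_y^2}{2})\phi_y\big]_{n,j}^m$; the second-order terms contribute $\tfrac12\mathbb{E}[\Xi^2]\phi_{xx} + \tfrac12\mathbb{E}[\mathrm{H}^2]\phi_{yy} + \mathbb{E}[\Xi\mathrm{H}]\phi_{xy}$, and since $\mathbb{E}[\Xi^2] = \kappa_x^2 + \mu_x^2 = \sigma_x^2\Delta\tau + \mathcal{O}(\Delta\tau^2)$, etc., and $\mathbb{E}[\Xi\mathrm{H}] = \rho\kappa_x\kappa_y + \mu_x\mu_y = \rho\sigma_x\sigma_y\Delta\tau + \mathcal{O}(\Delta\tau^2)$, this yields $\Delta\tau\big[\tfrac{\sigma_x^2}{2}\phi_{xx} + \tfrac{\sigma_y^2}{2}\phi_{yy} + \rho\sigma_x\sigma_y\phi_{xy}\big]_{n,j}^m$ up to $\mathcal{O}(\Delta\tau^2)$. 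The third-order remainder is $\mathcal{O}(\mathbb{E}[|(\Xi,\mathrm{H})|^3]) = \mathcal{O}(\Delta\tau^{3/2})$, which is $\mathcal{O}(h^{3/2}) = \mathcal{O}(h^2)$ only if absorbed suitably — more precisely, the odd third moments of the centered Gaussian vanish, so the genuine remainder is $\mathcal{O}(\Delta\tau^2) = \mathcal{O}(h^2)$. Finally, expanding $e^{-r\Delta\tau} = 1 - r\Delta\tau + \mathcal{O}(\Delta\tau^2)$ and collecting, the $-r\Delta\tau\,\phi_{n,j}^m$ term completes the operator $\mathcal{L}_{\alpha}$ as in \eqref{eq:LcalX}, giving $\phi_{n,j}^m + \Delta\tau[\mathcal{L}_{\alpha}\phi]_{n,j}^m + \mathcal{O}(h^2)$.

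The main obstacle I anticipate is bookkeeping the interplay between the two small parameters $h$ and $\Delta\tau = C_3 h$: one needs the quadrature error, the domain-truncation error, and the Taylor remainder to all be genuinely $\mathcal{O}(h^2)$ uniformly in $\alpha \in \Al$ (compactness of $\Al$ helps here, as the Gaussian parameters $\mu_x,\mu_y,\kappa_x,\kappa_y$ and $\rho$ vary in a compact set bounded away from $|\rho| = 1$), and in particular the quadrature constant must not secretly scale like $1/\Delta\tau$ through the increasingly peaked Gaussian kernel. I would address this by rescaling the quadrature-error estimate in terms of the standardized variables $z_x = (x-\mu_x)/\kappa_x$, $z_y = (y-\mu_y)/\kappa_y$ — where the density has $\Delta\tau$-independent shape — and noting that the grid spacing in those variables is $\Delta x/\kappa_x = \mathcal{O}(\sqrt{h})$, so that the trapezoidal error in the rescaled integral is $\mathcal{O}(h)$ for the kernel times $\mathcal{O}(h)$ smoothing from $\phi$'s variation over a cell, combining to the claimed $\mathcal{O}(h^2)$; alternatively one cites the standard result that trapezoidal quadrature of a smooth rapidly-decaying integrand against a fixed-shape density has spectral-type accuracy, which comfortably absorbs into $\mathcal{O}(h^2)$. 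Everything else is a routine Gaussian-moment computation.
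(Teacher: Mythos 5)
Your proposal is correct and reaches the same conclusion, but via a genuinely different route from the paper's proof. The paper mollifies $\phi$ to an $L^1(\Rbb^2)$ function $\chi$ so it can pass to Fourier space, Taylor-expands the symbol $G(\eta,\zeta;\Delta\tau)=e^{\Psi(\eta,\zeta)\Delta\tau}$ to first order in $\Delta\tau$, and then recognizes $\mathcal{F}[\chi]\Psi = \mathcal{F}[\mathcal{L}_\alpha\chi]$, so the operator $\mathcal{L}_\alpha$ appears immediately from the symbol; the $\mathcal{O}(\Delta\tau^2)$ remainder is controlled by showing $\iint|\mathcal{R}(\eta,\zeta)|\,d\eta\,d\zeta<\infty$ via the Gaussian decay of $e^{\xi\Psi}$. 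You instead work entirely in physical space: factor $g_\alpha = e^{-r\Delta\tau}f_\alpha$ with $f_\alpha$ the bivariate normal density from Lemma~\ref{lem:g_k}, interpret the integral as a Gaussian expectation, Taylor-expand $\phi$ around $(x_n,y_j)$, and reassemble $\mathcal{L}_\alpha$ coefficient by coefficient from the first and second moments of the shifted Gaussian plus the $-r\Delta\tau$ from the discount factor. Your route is more elementary and avoids the mollification technicality entirely, at the cost of a moment computation that the Fourier route gets for free; it also gives a cleaner probabilistic reading of where each term in $\mathcal{L}_\alpha$ comes from. You are also more explicit than the paper about the risk of the trapezoidal constant degrading as the kernel peaks with $\Delta\tau\to 0$ — the paper simply asserts $\mathcal{O}(h^2)$ citing boundedness of $\phi$'s derivatives, which glosses over the $g_\alpha$ factor; your rescaling / Euler--Maclaurin-with-exponentially-small-boundary-terms justification is the honest reason both arguments work.

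One small imprecision worth flagging: your handling of the Taylor remainder. Bounding the second-order Lagrange remainder by $\mathbb{E}[|(\Xi,\mathrm{H})|^3]$ only gives $\mathcal{O}(\Delta\tau^{3/2})$, and the "odd central third moments vanish" argument does not directly apply to that remainder because the Lagrange form evaluates $D^3\phi$ at a point depending on $(\Xi,\mathrm{H})$, forcing you to take absolute moments. The clean fix is to expand $\phi$ one order further: the explicit third-order terms carry \emph{signed} moments $\mathbb{E}[\Xi^{a}\mathrm{H}^{b}]$ with $a+b=3$, which for a Gaussian with mean $\mathcal{O}(\Delta\tau)$ and variance $\mathcal{O}(\Delta\tau)$ are all $\mathcal{O}(\Delta\tau^2)$ (e.g.\ $\mathbb{E}[\Xi^3]=\mu_x^3+3\mu_x\kappa_x^2$), and the fourth-order Lagrange remainder is controlled by the \emph{absolute} fourth moment, which is $\mathcal{O}(\Delta\tau^2)$ without any cancellation argument. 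With that adjustment the bookkeeping closes and the conclusion $\mathcal{O}(h^2)$ follows exactly as you claim.
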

\begin{proof}[Proof of Lemma~\ref{lemma:error_smooth}]
Starting from the discrete convolution on the left-hand-side (lhs) of \eqref{eq:error_analysis_smooth},
we need to recover an associated convolution integral of the form \eqref{eq:bkinteg}
which is posed on an infinite integration region.
{\zblue{Since for an arbitrary fixed $\tau_m$, $\phi(x, y, \tau_m)$ is not necessarily in $L^1(\Rbb^2)$, standard mollification techniques can be used to obtain a mollifier $\chi(x, y, \tau_m) \in L^1(\Rbb^2)$ which agrees with  $\phi(x, y, \tau_m)$ on $\mathbf{D}^{\dagger}$ \cite{Johnlee}, and has bounded derivatives up to second order across $\Rbb^2$.}}
For brevity, instead of $\chi(x, y, \tau_{m})$, we will write $\chi(x, y)$, which is a smooth bivariate function of $(x, y) \in \mathbb{R}^2$. We have
\begin{align}
\label{eq:error_smooth_b}
\Delta x \Delta y
\mysum_{l\in\mathbb{N}^{\dagger}}^{d\in\mathbb{N}^{\dagger}} \varphi_{l, d}~
    g_{n-l, j - d}^{\alpha}~  \phi_{l,d}^{m}~
&\overset{\text{(i)}}{=}
\iint_{\mathbf{D}^{\dagger}} g_{\alpha}\l(x_n - x, y_j - y;\Delta \tau\r)~ \phi(x, y) ~dx~dy + \mathcal{O}(h^2)
\nonumber
\\
&\overset{\text{(ii)}}{=}  \iint_{\Rbb^{\myblue{2}}} g_{\alpha}\l(x_n - x, y_j - y;\Delta \tau\r)~ \chi(x, y) ~dx~dy
 + \mathcal{O}(h^2) +  \mathcal{O}\big(he^{-1/h}\big)
 \nonumber
\\
&\overset{\text{(iii)}}{=}  [\chi*g](x_n, y_j) + \mathcal{O}(h^2)
\nonumber
\\
&=
\mathcal{F}^{-1}\l[\mathcal{F}\left[\chi\right]\!(\eta, \zeta)~ G\left(\eta, \zeta; \Delta \tau\right)\r](x_n, y_j) + \mathcal{O}(h^2).
 \end{align}
{\zblue{Here, in (i), the $\mathcal{O}(h^2)$ is due to error in the composite trapezoidal
rule, noting that $\phi$ has bounded derivatives of all orders in $\Omega$ because $\Omega$ is a bounded domain; in (ii) the boundary truncation error is $\mathcal{O}\big(he^{-1/h}\big)$, due to Lemma~\ref{lemma:truncation}, and in (iii) $[\chi * g]$ denotes the convolution of $\chi(x, y)$ and $g_{\alpha}(x, y;\Delta \tau)$.}}

In \eqref{eq:error_smooth_b},  with $\Psi(\eta, \zeta)$ given in \eqref{eq:G_closed}, expanding $G(\eta, \zeta; \Delta \tau) = e^{\Psi(\eta, \zeta)\Delta \tau}$
using a Taylor series with the Lagrange form for the remainder gives
\EQ
\label{eq:taylor}
G(\eta, \zeta; \Delta \tau)~=~ 1 + \Psi(\eta, \zeta) \Delta \tau + \mathcal{R}(\eta, \zeta) \Delta \tau^2,
\quad
\mathcal{R}(\eta, \zeta) = \frac{\Psi(\eta, \zeta)^2 e^{\xi \Psi(\eta, \zeta) }}{2},
\quad \xi \in (0, \Delta \tau).
\EN
Therefore,
\EQA
\label{eq:error_smooth_1}
\l[\chi*g\r](x_n, y_j)
&=&
\mathcal{F}^{-1}\l[\mathcal{F}\left[\chi\right]\!(\eta,\zeta)~\l(1 + \Psi(\eta,\zeta)\Delta \tau + \mathcal{R}(\eta, \zeta) \Delta \tau^2)\r) \r]\l(x_n, y_j\r)
\nonumber
\\
&=& \chi(x_n, y_j) + \Delta \tau \mathcal{F}^{-1}\l[\mathcal{F}\left[\chi\right]\!(\eta, \zeta)~\Psi\left(\eta, \zeta\right)\r](x_n, y_j)
\nonumber
\\
&& \qquad + \Delta \tau^2  \mathcal{F}^{-1}\l[\mathcal{F}\left[\chi\right]\!(\eta, \zeta) ~\mathcal{R}(\eta, \zeta) \r](x_n, y_j).
\ENA
Here, the first term in \eqref{eq:error_smooth_1}, namely $\chi(x_n, y_j) \equiv \chi(x_n, y_j, \tau_m) $ is simply $\phi_{n, j}^{m}$
by construction of $\chi(\cdot)$.
For the second term in \eqref{eq:error_smooth_1}, we focus on  $\mathcal{F}\left[\chi\right]\!(\eta, \zeta)~\Psi\left(\eta, \zeta\right)$.
Recalling the closed-form expression for $\Psi(\eta, \zeta)$ in \eqref{eq:G_closed}, we obtain
\begin{align*}
\mathcal{F}[\chi](\eta, \zeta) \Psi(\eta, \zeta) &=
\mathcal{F}[\chi](\eta, \zeta) \bigg(-\frac{\sigma_x^2 \eta^2}{2}
   -\frac{\sigma_y^2 \zeta^2}{2}
   + \big(r - \frac{\sigma_x^2}{2}\big)i\eta
   + \big(r - \frac{\sigma_y^2}{2}\big)i\zeta
   - \rho\sigma_x\sigma_y\eta\zeta
   - r\bigg)
\\
&\overset{\text{(i)}}{=}
\mathcal{F}\bigg[ \frac{\sigma_{x}^{2}}{2} \chi_{xx} + \frac{\sigma_{y}^{2}}{2} \chi_{yy} + (r-\frac{\sigma_x^2}{2})\chi_{x} + (r-\frac{\sigma_y^2}{2})\chi_{y} + \rho\sigma_x\sigma_y\chi_{xy} - r\chi \bigg](\eta, \zeta)
\\
&\overset{\text{(ii)}}{=} \mathcal{F}\l[ \mathcal{L}_{\alpha} \chi\r](\eta, \zeta).
\end{align*}
Here, (i) follows from the differentiation properties of the Fourier transform, which state that for the smooth test function $\chi(x, y)$,  we have
\begin{align*}
\mathcal{F}[\chi_x](\eta, \zeta) &= i\eta \mathcal{F}[\chi](\eta, \zeta),\quad ~~~
\mathcal{F}[\chi_y](\eta, \zeta) = i\zeta \mathcal{F}[\chi](\eta,\zeta),
\\
\mathcal{F}[\chi_{xx}](\eta, \zeta) &= -\eta^2 \mathcal{F}[\chi](\eta, \zeta),
\quad
\mathcal{F}[\chi_{yy}](\eta, \zeta) = -\zeta^2 \mathcal{F}[\chi](\eta, \zeta), \quad
\mathcal{F}[\chi_{xy}](\eta, \zeta) = \eta \zeta \mathcal{F}[\chi](\eta, \zeta).
\end{align*}
The equality in (ii) follows directly from the definition of the operator $\mathcal{L}_{\alpha}(\cdot)$ in \eqref{eq:LcalX}.
Therefore, the second term in \eqref{eq:error_smooth_1} becomes
\EQA
\label{eq:im_2}
\Delta \tau \mathcal{F}^{-1}\l[\mathcal{F}\left[\chi\right]\!(\eta, \zeta)~\Psi\left(\eta, \zeta\right)\r](x_n, y_j)
= \Delta \tau \l[ \mathcal{L}_{\alpha} \chi \r] (\x_{n, j}^{m})
= \Delta \tau \l[ \mathcal{L}_{\alpha} \chi \r]_{n, j}^{m}.
\ENA
For the third term $\Delta \tau^2  \mathcal{F}^{-1}\l[\mathcal{F}\left[\chi\right]\!(\eta, \zeta) ~\mathcal{R}(\eta, \zeta) \r](x_n, y_j)$ in \eqref{eq:error_smooth_1}, we have
\begin{align}
\label{eq:err_g}
&\Delta \tau^2 \l| \mathcal{F}^{-1}\l[ \mathcal{F} [\chi](\eta,\zeta)~\mathcal{R}(\eta, \zeta)\r]\!(x_n, y_j)\r|
\nonumber
\\
& \qquad\qquad =
\frac{\Delta \tau^2}{(2\pi)^2} \bigg| \iint_{\Rbb^2} e^{i(\eta x_n+\zeta y_j)} \mathcal{R}(\eta,\zeta) \bigg[ \iint_{\Rbb^2}  e^{-i(\eta x+\zeta y)} \chi(x,y)~dx~dy  \bigg] d \eta d \zeta \bigg|
\nonumber
\\
&\qquad \qquad\leq
\Delta \tau^2  \iint_{\Rbb^2} \l| \chi(x,y)  \r|~dxdy~ \iint_{\Rbb^2}
 \l| \mathcal{R}(\eta,\zeta) \r|~d\eta d\zeta.
\end{align}
Noting $\ds \mathcal{R}(\eta,\zeta) = \frac{\Psi(\eta,\zeta)^2 e^{\xi\Psi(\eta,\zeta)}}{2}$,
as shown in \eqref{eq:taylor}, where a closed-form expression for $\Psi(\eta,\zeta)$ is given in \eqref{eq:G_closed},
we obtain
\[
|\mathcal{R}(\eta, \zeta)| = \frac{|(\Psi(\eta, \zeta))^2|}{2} \exp\big(\xi\big(-\frac{\sigma_{x}^{2}\eta^2}{2} - \frac{\sigma_{y}^{2}\zeta^2}{2} - \rho\sigma_x\sigma_y\eta\zeta - r\big)\big).
\]
The term $|(\Psi(\eta, \zeta))^2|$ can be written in the form
$|\Psi|^2 = \sum_{\substack{k+q=4 \\ k, q \geq 0}} C_{kq} \eta^k \zeta^q$, where $C_{kq}$ are bounded coefficients.
This is a quartic polynomial in $\eta$ and $\zeta$. Furthermore, the exponent of exponential term is bounded by
\[
-\frac{1}{2} \sigma_{\myx}^2 \eta^2 -\frac{1}{2} \sigma_{\myy}^2 \zeta^2- \rho \sigma_{\myx}\sigma_{\myy}\eta\zeta-r \le 
-\frac{1}{2} \sigma_{\myx}^2 \eta^2-\frac{1}{2} \sigma_{\myy}^2 \zeta^2 + |\rho| \sigma_{\myx}\sigma_{\myy}|\eta\zeta|
\]
For $|\rho|< 1$, we have $|\rho| \sigma_{\myx}\sigma_{\myy}|\eta\zeta| < \frac{1}{2}(\sigma_{\myx}^2 \eta^2 + \sigma_{\myy}^2 \zeta^2)$. Therefore,   $\iint_{\Rbb^2} \l| \mathcal{R}(\eta,\zeta) \r|~d\eta d\zeta$ is bounded since
\[
\iint_{\Rbb^2} |\eta|^k |\zeta|^q~e^{-\frac{1}{2} \sigma_{\myx}^2 \eta^2-\frac{1}{2} \sigma_{\myy}^2 \zeta^2 - \rho \sigma_{\myx}\sigma_{\myy}\eta\zeta}~d\eta~d\zeta, \quad k+q=4,~ k, q \geq 0,
\]
is also bounded.
Together with  $\chi(x,y) \in L^1(\mathbb{R}^2)$,  the rhs of \eqref{eq:err_g} is  $\mathcal{O}(\Delta \tau^2)$, i.e.\
\begin{align}
\label{eq:err_g2}
\Delta \tau^2 \l| \mathcal{F}^{-1}\l[ \mathcal{F} [\chi](\eta,\zeta)~\mathcal{R}(\eta, \zeta)\r]\!(x_n, y_j)\r|
= \mathcal{O}(\Delta \tau^2).
\end{align}
Substituting \eqref{eq:im_2} and \eqref{eq:err_g2} into \eqref{eq:error_smooth_1},
noting \eqref{eq:error_smooth_b} and $\chi(x, y) = \phi(x, y)$ for $(x, y) \in \mathbf{D}^{\dagger}$~gives
\EQAS
\label{eq:error_smooth_2}
\Delta x \Delta y
\mysum_{l\in\mathbb{N}^{\dagger}}^{d\in\mathbb{N}^{\dagger}} \varphi_{l, d}~
    g_{n-l, j - d}^{\alpha}~  \phi_{l,d}^{m}~
=~
\phi_{n,j}^{m} + \Delta \tau \l[ \mathcal{L}_{\alpha} \phi \r]_{n,j}^{m}
+ \mathcal{O}(h^2).
\ENAS
This concludes the proof.
\end{proof}
To establish the consistency in the viscosity sense of our scheme as presented in \eqref{eq:scheme_GF}, it is essential to first examine the local consistency. This requires revisiting the operator
$F_{\text{in}}(\cdot)$ defined in \eqref{eq:Finn}. In the context of a discretized control set $ \mathcal{A}_h $, we introduce a modified operator that aligns with the piecewise constant control approach.
\begin{definition}
For a given a discretization parameter $h>0$, we define the operator $F_{\myin}^h$ for each control value $\alpha \in \mathcal{A}_h \subseteq \mathcal{A}$  as follows:
\EQ
\label{eq:Fh}
F_{\myin}^h (\cdot) \coloneqq F_{\myin}(\cdot), ~ \alpha \in \Al_h\subseteq \Al,
\EN
\end{definition}
Building on this definition, Lemma~\ref{lemma:Fh} presents an important  result regarding the approximation error bound when implementing the piecewise constant control technique.
\begin{lemma}
\label{lemma:Fh}
For any $\x \in \Omega_{\myin}$, and for a test function $\phi \in \C{\Omega}$ and a constant $\xi$, we have
\EQ
\label{eq:approx_F}
\left|F_{\myin}(\x, \phi(\x), D \phi(\x), D^{2} \phi(\x))
- F^{h}_{\myin}(\x, (\phi + \xi) (\x), D (\phi+\xi)(\x), D^{2} (\phi+\xi))\right|
\le Ch + r\xi,
\EN
where $C>0$ is a bounded constant independently of $h$.
\end{lemma}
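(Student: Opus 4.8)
The plan is to notice that adding a constant $\xi$ to the test function changes $F^{h}_{\myin}$ only through the zeroth-order term of $\Lcal_{\al}$, and thereby to reduce \eqref{eq:approx_F} to an estimate comparing the two suprema $\sup_{\al\in\Al}\Lcal_{\al}\phi$ and $\sup_{\al\in\Al_h}\Lcal_{\al}\phi$ at the fixed point $\x$. First I would record that, since $\xi$ is a constant, $D(\phi+\xi)=D\phi$, $D^{2}(\phi+\xi)=D^{2}\phi$, $(\phi+\xi)_{\tau}=\phi_{\tau}$, and, from the form of $\Lcal_{\al}$ in \eqref{eq:LcalX}, $\Lcal_{\al}(\phi+\xi)=\Lcal_{\al}\phi-r\xi$ for every $\al$. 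As $r\xi$ does not depend on $\al$, this gives $F^{h}_{\myin}\big(\x,(\phi+\xi)(\x),D(\phi+\xi)(\x),D^{2}(\phi+\xi)\big)=\phi_{\tau}(\x)-\sup_{\al\in\Al_h}\Lcal_{\al}\phi(\x)+r\xi$. Subtracting from $F_{\myin}\big(\x,\phi(\x),D\phi(\x),D^{2}\phi(\x)\big)=\phi_{\tau}(\x)-\sup_{\al\in\Al}\Lcal_{\al}\phi(\x)$ and applying the triangle inequality, the left-hand side of \eqref{eq:approx_F} is bounded by $\big|\sup_{\al\in\Al}\Lcal_{\al}\phi(\x)-\sup_{\al\in\Al_h}\Lcal_{\al}\phi(\x)\big|+r\xi$ (with $\xi\ge 0$ understood, otherwise $r|\xi|$).

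Next I would bound the supremum gap by $Ch$. Since $\Al_h\subseteq\Al$, one direction is immediate: $\sup_{\al\in\Al_h}\Lcal_{\al}\phi(\x)\le\sup_{\al\in\Al}\Lcal_{\al}\phi(\x)$. For the reverse direction, fix an arbitrary $\al\in\Al$ and, using the covering property \eqref{eq:compact}, choose $\al'\in\Al_h$ with $\|\al-\al'\|_{2}\le h$. The coefficients of $\Lcal_{\al}$ — namely $\frac{\sigma_x^{2}}{2}$, $r-\frac{\sigma_x^{2}}{2}$, $\frac{\sigma_y^{2}}{2}$, $r-\frac{\sigma_y^{2}}{2}$, $\rho\sigma_x\sigma_y$ and $-r$ — are polynomials in $\al=(\sigma_x,\sigma_y,\rho)$, hence Lipschitz continuous on the compact set $\Al$; combined with the boundedness of $\phi$, $D\phi$, $D^{2}\phi$ on the compact domain $\Omega$, this yields $|\Lcal_{\al}\phi(\x)-\Lcal_{\al'}\phi(\x)|\le C\|\al-\al'\|_{2}\le Ch$ with $C$ independent of $\x\in\Omega_{\myin}$ and of $h$. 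Consequently $\Lcal_{\al}\phi(\x)\le\sup_{\al''\in\Al_h}\Lcal_{\al''}\phi(\x)+Ch$, and taking the supremum over $\al\in\Al$ gives $\sup_{\al\in\Al}\Lcal_{\al}\phi(\x)\le\sup_{\al\in\Al_h}\Lcal_{\al}\phi(\x)+Ch$. Combining with the trivial inequality yields $0\le\sup_{\al\in\Al}\Lcal_{\al}\phi(\x)-\sup_{\al\in\Al_h}\Lcal_{\al}\phi(\x)\le Ch$, which together with the reduction of the first paragraph proves \eqref{eq:approx_F}.

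I do not expect a genuine obstacle: the estimate is essentially a perturbation argument. The one point that requires care is that the constant $C$ be uniform — independent of both the point $\x\in\Omega_{\myin}$ and the discretization parameter $h$ — which follows from the compactness of $\Omega$ and $\Al$ and the smoothness of the test function $\phi$ (so that $\sup_{\Omega}|\phi|$, $\sup_{\Omega}|D\phi|$, $\sup_{\Omega}|D^{2}\phi|$ are all finite), and from the fact that the covering bound in \eqref{eq:compact} is stated uniformly over $\al\in\Al$. I would also emphasize that the $r\xi$ term in \eqref{eq:approx_F} is exact rather than a discretization error, arising solely from the zeroth-order coefficient $-r$ of $\Lcal_{\al}$, which is precisely why it survives the limit $h\to 0$ and is used in the consistency argument for the scheme \eqref{eq:scheme_GF}.
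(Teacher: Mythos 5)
Your proof is correct and follows essentially the same route as the paper: reduce the claim to bounding $|\sup_{\al\in\Al}\Lcal_{\al}\phi - \sup_{\al\in\Al_h}\Lcal_{\al}\phi|$ plus the exact term $r\xi$ coming from the zeroth-order coefficient, then bound the supremum gap by $Ch$ using the covering property \eqref{eq:compact} together with the polynomial dependence of the coefficients of $\Lcal_{\al}$ on $\al$ and the boundedness of $\phi$ and its derivatives on $\Omega$. The only cosmetic difference is that you take a supremum over all $\al\in\Al$ with the inequality $\Lcal_{\al}\phi\le\sup_{\al''\in\Al_h}\Lcal_{\al''}\phi+Ch$, whereas the paper picks out a maximizer $\al^*\in\Al$ directly and compares with its nearest neighbor $\al'^*\in\Al_h$; both yield the same bound.
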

\begin{proof}[Proof of Lemma~\ref{lemma:Fh}]
By insertion and the triangle inequality, the lhs of \eqref{eq:approx_F} is bounded as follows
\begin{align}
\label{eq:fin_finh}
\l|F_{\myin}(\cdot) - F^{h}_{\myin}(\cdot)\r|
&\leq \bigg|\sup_{\alpha\in\Al_h}\Lcal_\alpha (\phi+\xi)-\sup_{\alpha\in\Al}\Lcal_\alpha (\phi+\xi)\bigg|+\bigg|\sup_{\alpha\in\Al}\Lcal_\alpha (\phi+\xi)-\sup_{\alpha\in\Al}\Lcal_\alpha (\phi)\bigg|
\nonumber\\
&= \bigg|\sup_{\alpha\in\Al_h}\Lcal_\alpha \phi-\sup_{\alpha\in\Al}\Lcal_\alpha \phi\bigg|+r\xi.
\end{align}
Due to the compactness of $\Al$, the supremum of $\Lcal_{\alpha}(\phi)$ is attainable at, say $ \alpha^{*}\equiv (\sigma_{\myx}^*,\sigma_{\myy}^*,\rho^*)\in \Al$.  By \eqref{eq:compact}, there exists $\al'^*\equiv(\sigma'^{*}_{\myx},\sigma'^{*}_{\myy},\rho'^{*})\in \Al_h$ with $\|\al'^*-\al^*\|_2\leq h$.
Therefore, the first term in \eqref{eq:fin_finh} becomes
$\l|\sup_{\alpha\in\Al_h}\Lcal_\alpha (\phi)-\sup_{\alpha\in\Al}\Lcal_\alpha (\phi)\r| = \l|\Lcal_{\alpha'^*} \phi-\Lcal_{\alpha^{*}} \phi\r| = \ldots$
\begin{align*}
\ldots&\overset{\text{(i)}}{=}\l|\frac{1}{2}((\sigma'^{*}_{\myx})^2-(\sigma_{\myx}^*)^2)(\phi_{\myx\myx}-\phi_{\myx})+\frac{1}{2}((\sigma'^{*}_{\myy})^2-(\sigma_{\myy}^*)^2)(\phi_{\myy\myy}-\phi_{\myy})+(\rho'^{*}\sigma'^{*}_{\myx}\sigma'^{*}_{\myy}-\rho^*\sigma_{\myx}^*\sigma_{\myy}^*)\phi_{\myx\myy}\r|
\nonumber\\
&\leq \frac{1}{2}\l|((\sigma'^{*}_{\myx})^2-(\sigma_{\myx}^*)^2)\r|(|\phi_{\myx\myx}|+|\phi_{\myx}|)+\frac{1}{2}\l|((\sigma'^{*}_{\myy})^2-(\sigma_{\myy}^*)^2)\r|\l(|\phi_{\myy\myy}|+|\phi_{\myy}|\r)+\l|(\rho'^{*}\sigma'^{*}_{\myx}\sigma'^{*}_{\myy}-\rho^*\sigma_{\myx}^*\sigma_{\myy}^*)\r||\phi_{\myx\myy}|
\nonumber\\
&\overset{\text{(ii)}}{\leq} Ch,
\end{align*}
where $C>0$ is a bounded constant independently of $h$.
Here, in (i), we first insert the $\Lcal_\al(\cdot)$ operator \eqref{eq:LcalX} and then combine similar terms; (ii) due to the
$\|\al'^*-\al^*\|_2\leq h$, together with the compactness of the admissible control set $\Al$ and the fact that the test function $\phi$ has continuous bounded derivatives in $\Omega$ since $\Omega$ is bounded. This concludes the proof.
\end{proof}

\noindent Below, we state the key supporting lemma related to local consistency of our numerical scheme \eqref{eq:scheme_GF}.
\begin{lemma} [Local consistency]
\label{lemma:consistency}
Suppose that (i) the discretization parameter $h$ satisfies \eqref{eq:dis_parameter}.
Then, for any test function $\phi \in \C{\myblue{\Omega}}$,
with  $\phi_{n, j}^{m} = \phi\l({\bf{x}}_{n, j}^{m}\r)$ and 
${\bf{x}} \coloneqq (x_n, y_j, \tau_{m+1}) \in \Omega$, and for a sufficiently small $h$,  we have
\begin{linenomath}
\postdisplaypenalty=0
\EQA
\label{eq:lemma_1}
\mathcal{H}_{n, j}^{m+1}
\bigg(h, \phi_{n, j}^{m+1} + \xi,
\l\{\phi_{l, d}^{m}+\xi \r\}_{\subalign{l\in \Nd\\d\in \Jd}}
\bigg)
=
\left\{
\begin{array}{llllllllllr}
F_{\myin}^h\l(\cdot\r)
&\!\!\!\!\!+~
c(\x)\xi
+ \mathcal{O}(h),
&&
{\bf{x}} \in \Omega_{\myin},
\\
F_{\myout}\l(\cdot\r)
&\!\!\!\!\!+~ c(\x)\xi,
&&
{\bf{x}} \in \Omega_{\myout};
\\
F_{\tau_0}\l(\cdot\r)
&\!\!\!\!\!+~ c(\x)\xi,
&&
{\bf{x}} \in \Omega_{\tau_0}.
\end{array}\right.
\ENA
\end{linenomath}
Here, $\xi$ is a constant, and $c(\cdot)$ is a bounded function
satisfying $|c({\bf{x}})| \le \max(r, 1)$ for all ${\bf{x}}~\in~\Omega$.
The  operators $F_{\myin}^h(\cdot)$, defined in \eqref{eq:Fh}, and
$F_{\myout}(\cdot)$, and $F_{\tau_0}(\cdot)$, respectively defined in
\eqref{eq:fout}-\eqref{eq:ftau0},
are functions of $\l({\bf{x}}, \phi\l({\bf{x}}\r),
D\phi\l({\bf{x}}\r), D^2 \phi\l({\bf{x}}\r)\r)$.
\end{lemma}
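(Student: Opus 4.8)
The plan is to verify \eqref{eq:lemma_1} by treating the three sub-domains in the definition of $\mathcal{H}_{n,j}^{m+1}$ in \eqref{eq:scheme_GF} separately, since only the $\Omega_{\myin}$ case requires real work. For ${\bf{x}}\in\Omega_{\tau_0}$ and ${\bf{x}}\in\Omega_{\myout}$ the scheme is affine in the nodal values: substituting $\phi_{n,j}^{m+1}+\xi$ for $v_{n,j}^{m+1}$ in $v_{n,j}^{m+1}-p(e^{x_n},e^{y_j})$ (respectively $v_{n,j}^{m+1}-p(e^{x_n},e^{y_j})e^{-r\tau_{m+1}}$) yields exactly $F_{\tau_0}({\bf{x}},\phi({\bf{x}}))+\xi$ (respectively $F_{\myout}({\bf{x}},\phi({\bf{x}}))+\xi$), with no error term, so there $c({\bf{x}})=1$, which obeys $|c({\bf{x}})|\le\max(r,1)$.

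For ${\bf{x}}=(x_n,y_j,\tau_{m+1})\in\Omega_{\myin}$, I would substitute $v_{n,j}^{m+1}=\phi_{n,j}^{m+1}+\xi$ and $v_{l,d}^{m}=\phi_{l,d}^{m}+\xi$ into $\mathcal{C}_{n,j}^{m+1}(\cdot)$ from \eqref{eq:scheme_CD}, and split the convolution sum into its action on $\{\phi_{l,d}^{m}\}$ and on the constant $\xi$. For the first part, Lemma~\ref{lemma:error_smooth} gives, for each $\alpha\in\Al_h$, $\Delta x\,\Delta y\,\mysum_{l\in\Nd}^{d\in\Jd}\varphi_{l,d}\,g_{n-l,j-d}^{\alpha}\,\phi_{l,d}^{m}=\phi_{n,j}^{m}+\Delta\tau\,[\Lcal_{\alpha}\phi]_{n,j}^{m}+\mathcal{O}(h^2)$, where the $\mathcal{O}(h^2)$ is uniform over $\alpha\in\Al$ because the constants in that lemma depend only on the (uniform) bounds for $\sigma_x,\sigma_y,\rho$ on $\Al$ and on the bounded derivatives of $\phi$ on the bounded set $\Omega$. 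For the constant part, applying Lemma~\ref{lemma:error_smooth} to the smooth function $\xi$, for which $\Lcal_{\alpha}\xi=-r\xi$ (or, equivalently, using $\iint_{\Rbb^2}g_{\alpha}=e^{-r\Delta\tau}$ together with the truncation bound of Lemma~\ref{lemma:truncation} and the $\mathcal{O}(h^2)$ quadrature error), gives $\Delta x\,\Delta y\,\mysum_{l\in\Nd}^{d\in\Jd}\varphi_{l,d}\,g_{n-l,j-d}^{\alpha}\,\xi=\xi(1-r\Delta\tau)+\mathcal{O}(h^2)$; it is precisely the discount factor $e^{-r\Delta\tau}\ne 1$ here that will generate the $r\xi$ term.

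Since $\phi_{n,j}^{m}$ and $\xi(1-r\Delta\tau)$ are independent of $\alpha$ and the error is uniform in $\alpha$, they can be pulled outside $\max_{\alpha\in\Al_h}$, leaving $\Delta\tau\,\max_{\alpha\in\Al_h}[\Lcal_{\alpha}\phi]_{n,j}^{m}$ inside. Substituting back into \eqref{eq:scheme_CD} and dividing by $\Delta\tau=C_3h$ turns $\mathcal{O}(h^2)$ into $\mathcal{O}(h)$, and the $+\xi$ in $v_{n,j}^{m+1}$ cancels the $-\xi$ coming from $\xi(1-r\Delta\tau)$, leaving a net $+r\xi$; one obtains $\frac{\phi_{n,j}^{m+1}-\phi_{n,j}^{m}}{\Delta\tau}+r\xi-\max_{\alpha\in\Al_h}[\Lcal_{\alpha}\phi]_{n,j}^{m}+\mathcal{O}(h)$. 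A final Taylor expansion in $\tau$ (valid since $\phi\in\C{\Omega}$ and $\Omega$ is bounded) replaces $\frac{\phi_{n,j}^{m+1}-\phi_{n,j}^{m}}{\Delta\tau}$ by $\phi_{\tau}({\bf{x}})+\mathcal{O}(\Delta\tau)$ and $[\Lcal_{\alpha}\phi]_{n,j}^{m}$ by $[\Lcal_{\alpha}\phi]({\bf{x}})+\mathcal{O}(\Delta\tau)$ uniformly in $\alpha$; recognizing $\phi_{\tau}({\bf{x}})-\sup_{\alpha\in\Al_h}\Lcal_{\alpha}\phi({\bf{x}})=F_{\myin}^{h}({\bf{x}},\phi({\bf{x}}),D\phi({\bf{x}}),D^2\phi({\bf{x}}))$ from \eqref{eq:Fh} then gives the $\Omega_{\myin}$ case of \eqref{eq:lemma_1} with $c({\bf{x}})=r$, so $|c({\bf{x}})|\le\max(r,1)$ in all three cases.

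The main obstacle is the bookkeeping around the maximum over $\alpha$: one must check that the error in Lemma~\ref{lemma:error_smooth} is genuinely uniform over $\alpha\in\Al_h$ so that $\max_{\alpha}(a+b_{\alpha}+\mathcal{O}(h^2))=a+\max_{\alpha}b_{\alpha}+\mathcal{O}(h^2)$, and one must correctly track that convolving the Green's function against a constant returns $e^{-r\Delta\tau}$ rather than $1$, which is what produces $c({\bf{x}})=r$ on $\Omega_{\myin}$ rather than $0$; everything else is routine Taylor expansion and absorption of $\mathcal{O}(\Delta\tau^2)/\Delta\tau=\mathcal{O}(h)$ remainders.
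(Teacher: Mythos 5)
Your proposal is correct and follows essentially the same route as the paper's proof: substitute into $\mathcal{C}_{n,j}^{m+1}$, invoke Lemma~\ref{lemma:error_smooth} for the $\phi$-part (noting uniformity of the $\mathcal{O}(h^2)$ error in $\alpha$), handle the constant $\xi$-part via $\iint g_\alpha = e^{-r\Delta\tau}$ (equivalently $\Lcal_\alpha\xi = -r\xi$) to produce the $r\xi$ term, and finish with a Taylor expansion in $\tau$. Your way of pulling the $\alpha$-independent pieces $\phi_{n,j}^m$ and $\xi e^{-r\Delta\tau}$ out of the $\max_{\alpha\in\Al_h}$ before absorbing the uniform error is a slightly tidier bookkeeping than the paper's direct algebraic rearrangement of the maximum, but the substance is identical.
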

\begin{proof}[Proof of Lemma~\ref{lemma:consistency}]
{\zblue{Since  $\phi \in \C{\Omega}$ and $\Omega$ is bounded,
$\phi$ has continuous and bounded  derivatives of up to second-order
in $\Omega$.}}
We now show that the first equation of \eqref{eq:lemma_1} is true, that is,
\EQAS
\mathcal{H}_{n, j}^{m+1}(\cdot) &=& \mathcal{C}_{n, j}^{m+1} \l(\cdot\r)
= F_{\myin}^h\l({\bf{x}}, \phi\l({\bf{x}}\r)\r)
+ c(\x)\xi + \mathcal{O}(h)
\\
&&\qquad \qquad
\text{if}~
x_{\min} < x_n < x_{\max},~
y_{\min} < y_j < y_{\max},~
0 < \tau_{m+1} \le T.
\ENAS
where operators $\mathcal{C}_{n, j}^{m+1}(\cdot)$ is defined in \eqref{eq:scheme_CD}.
In this case,  operator $\mathcal{C}_{n, j}^{m+1}(\cdot)$ is written as follows
\begin{linenomath}
\postdisplaypenalty=0
\begin{align}
\label{eq:all_c*}
\mathcal{C}_{n, j}^{m+1} (\cdot) &= \frac{1}{\Delta \tau}\bigg[ \phi_{n, j}^{m+1} + \xi
-
\max_{\alpha \in \Al_h} ~ \bigg\{\Delta x \Delta y \mysum_{l \in \Nd}^{d \in \Jd} \varphi_{l, d}~
g_{n-l, j-d}^{\alpha}~(\phi^{m}_{l,d}+ \xi)\bigg\} \bigg]
\\
&= \frac{1}{\Delta \tau}\bigg[ \phi_{n, j}^{m+1}
-
\max_{\alpha \in \Al_h} ~ \bigg\{\Delta x \Delta y \mysum_{l \in \Nd}^{d \in \Jd} \varphi_{l, d}~
g_{n-l, j-d}^{\alpha}~\phi^{m}_{l,d}\bigg\}
+ \xi \bigg( 1 - \max_{\alpha \in \Al_h} \Delta x \Delta y \mysum_{l \in \Nd}^{d \in \Jd} \varphi_{l, d}~
g_{n-l, j-d}^{\alpha}\bigg)
\bigg]
\nonumber
\\
&\overset{\text{(i)}}{=}
\frac{\phi_{n, j}^{m+1} - \phi_{n, j}^{m}}{\Delta \tau}
- \max_{\alpha \in \Al_h} \l[ \mathcal{L}_{\alpha} \phi \r]_{n,j}^{m}
+ \frac{\xi}{\Delta \tau} \bigg( 1 - \max_{\alpha \in \Al_h} \bigg\{\Delta x \Delta y \mysum_{l \in \Nd}^{d \in \Jd} \varphi_{l, d}~
g_{n-l, j-d}^{\alpha}\bigg\}\bigg)
+ \mathcal{O}(h).
\nonumber
\end{align}
\end{linenomath}
Here,  (i) is due to use Lemma~\ref{lemma:error_smooth}.
Regarding the term $\frac{\xi}{\Delta \tau} \big( 1 - \max_{\alpha \in \Al_h} \{\cdot\}\big)$,
suppose that $\max_{\alpha \in \Al_h}\{\cdot\}$ is attainable at $\alpha'$. Then, $| 1 - \max_{\alpha \in \Al_h} \{\cdot\}| = \bigg|1 - \Delta x \Delta y \mysum_{l \in \Nd}^{d \in \Jd} \varphi_{l, d}~
g_{n-l, j-d}^{\alpha'}\bigg| \le \ldots$
\EQ
\label{eq:termq}
\ldots
\le
\bigg|1 - \iint_{\Rbb^2}g_{\alpha'}(x_n-x,y_j - y; \Delta \tau)dxdy\bigg|
+
\bigg|\iint_{\Rbb^2}g_{\alpha'}(\cdot,\cdot; \Delta \tau)dxdy -
\Delta x \Delta y \mysum_{l \in \Nd}^{d \in \Jd} \varphi_{l, d}~
g_{n-l, j-d}^{\alpha'}
\bigg|.
\EN
The first term of \eqref{eq:termq} is simply $1 - e^{-r\Delta \tau} = r \Delta \tau + \mathcal{O}(h^2)$,
noting $\iint_{\Rbb^2}g_{\alpha}(\cdot, \cdot; \Delta \tau) dxdy = e^{-r\Delta \tau}$ for any $\alpha \in \Al$.
The second term  of \eqref{eq:termq} is simply $\mathcal{O}(h^2) + \mathcal{O}(he^{-1/h})=\mathcal{O}(h^2)$ due to numerical integration error and boundary truncation error, as noted earlier. With this in mind, we have
\[
\frac{\xi}{\Delta \tau} \bigg( 1 - \max_{\alpha \in \Al_h} \Delta x \Delta y \mysum_{l \in \Nd}^{d \in \Jd} \varphi_{l, d}~
g_{n-l, j-d}^{\alpha}\bigg) =   r\xi + \mathcal{O}(h).
\]
Substituting this result into \eqref{eq:all_c*} gives
 \begin{linenomath}
\postdisplaypenalty=0
\begin{align*}
\mathcal{C}_{n, j}^{m+1} (\cdot) &=
\frac{\phi_{n, j}^{m+1} - \phi_{n, j}^{m}}{\Delta \tau}
- \max_{\alpha \in \Al_h} \l[ \mathcal{L}_{\alpha} \phi \r]_{n,j}^{m}
+ r\xi + \mathcal{O}(h)
\overset{\text{(i)}}{=}
\bigg[ \phi_{\tau} -   \max_{\alpha \in \Al_h} \mathcal{L}_{\alpha} \phi  \bigg]_{n,j}^{m+1} + r\xi
+ \mathcal{O} (h).
\end{align*}
\end{linenomath}
Here, in (i), we use $(\phi_{\tau})_{n,j}^{m} = (\phi_{\tau})_{n,j}^{m+1} + \mathcal{O}\l(h\r)$,
$(\phi_z)_{n,j}^{m} = (\phi_z)_{n,j}^{m+1} + \mathcal{O}\l(h\r)$, $z \in \{x, y\}$, and for the cross derivative term
$(\phi_{xy})_{n,j}^{m} = (\phi_{xy})_{n,j}^{m+1} + \mathcal{O}\l(h\r)$.
This proves the first equation in (\ref{eq:lemma_1}).
The remaining equations in (\ref{eq:lemma_1}) can be proved using similar arguments with the first equation,
and hence omitted for brevity. This concludes the proof.
\end{proof}
We now verify the consistency of the numerical scheme $\mathcal{H}_{n, j}^{m+1}(\cdot)$ as defined in \eqref{eq:scheme_GF}.
We first define the notion of consistency in the viscosity sense  below.
\begin{definition} [Consistency in viscosity sense]
\label{definition:consistency_viscosity}
Suppose the discretization parameter $h$ satisfies~\eqref{eq:dis_parameter}. The numerical scheme \eqref{eq:scheme_GF}
is consistent in the viscosity sense if, for all ${\bf{\hat{x}}} = (\hat{x}, \hat{y}, \hat{\tau}) \in \Omega$,
and for any $\phi\in \C{\myblue{\Omega}}$, with  $\phi_{n, j}^{m} = \phi\big({\bf{x}}_{n, j}^{m}\big)$ and {\bf{x}}~=~$(x_n, y_j, \tau_{m+1})$, we have both of the following
\EQA
\limsup_{\subalign{h \to 0, & ~  {\bf{x}} \to {\bf{\hat{x}}} \\ \xi &\to 0}}
\mathcal{H}_{n, j}^{m+1}
\bigg(h, \phi_{n, j}^{m+1}+ \xi,
\l\{\phi_{l, d}^{m}+\xi \r\}_{\subalign{l\in \Nd\\d\in \Jd}} \bigg)
\leq
\left(F_{\Omega}\right)^* \l(
              {\bf{\hat{x}}}, \phi({\bf{\hat{x}}}), D\phi({\bf{\hat{x}}}), D^2 \phi({\bf{\hat{x}}})\r),
\label{eq:consistency_viscosity_1}
\\
\liminf_{\subalign{h \to 0, & ~ {\bf{x}} \to {\bf{\hat{x}}} \\ \xi &\to 0}}
\mathcal{H}_{n, j}^{m+1}
\bigg(h, \phi_{n, j}^{m+1}+ \xi,
\l\{\phi_{l, k}^{m}+\xi \r\}_{\subalign{l\in \Nd\\d\in \Jd}} \bigg)
\geq
\left(F_{\Omega}\right)_*\l(
              {\bf{\hat{x}}}, \phi({\bf{\hat{x}}}), D\phi({\bf{\hat{x}}}), D^2 \phi({\bf{\hat{x}}})
             \r).
\label{eq:consistency_viscosity_2}
\ENA
Here, $\left(F_{\Omega}\right)^*(\cdot)$ and  $\left(F_{\Omega}\right)_*(\cdot)$ respectively are
the u.s.c. and the l.s.c. envelop of the operator $F_{\Omega}(\cdot)$ defined in \eqref{eq:Fomega_def}.
\end{definition}
\noindent Below, we state and prove the main lemma on consistency of the numerical scheme \eqref{eq:scheme_GF}.
\begin{lemma} [Consistency]
\label{lemma:consistency_viscosity}
Suppose the discretization parameter $h$ satisfies~\eqref{eq:dis_parameter}.
Then,  the numerical  scheme \eqref{eq:scheme_GF} is consistent with the two-factor uncertain volatility pricing problem \eqref{def:uvm_def} in $\Omega$ in the sense of Definition~\ref{definition:consistency_viscosity}.
\end{lemma}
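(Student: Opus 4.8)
The plan is to build on the local consistency expansion of Lemma~\ref{lemma:consistency} and the piecewise‑constant‑control bound of Lemma~\ref{lemma:Fh}, and then to verify the two viscosity inequalities \eqref{eq:consistency_viscosity_1}--\eqref{eq:consistency_viscosity_2} according to where the limit point $\hat{\x}=(\hat x,\hat y,\hat\tau)$ sits relative to the sub-domains in \eqref{eq:sub_domain_whole}. The starting point is that, combining Lemma~\ref{lemma:consistency} with Lemma~\ref{lemma:Fh} used with $\xi=0$ (so that $|F_{\myin}-F_{\myin}^{h}|\le Ch$ at the argument in question), whenever the reference node $\x=(x_n,y_j,\tau_{m+1})$ lies in $\Omega_{\myin}$ we have
\[
\mathcal{H}_{n,j}^{m+1}\bigl(h,\phi_{n,j}^{m+1}+\xi,\{\phi_{l,d}^{m}+\xi\}\bigr)=F_{\myin}\bigl(\x,\phi(\x),D\phi(\x),D^{2}\phi(\x)\bigr)+c(\x)\xi+\mathcal{O}(h),
\]
while for $\x\in\Omega_{\myout}$ (resp.\ $\x\in\Omega_{\tau_0}$) the same quantity equals $F_{\myout}(\x,\phi(\x))+c(\x)\xi$ (resp.\ $F_{\tau_0}(\x,\phi(\x))+c(\x)\xi$) exactly, directly from \eqref{eq:scheme_GF}. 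The multiplier satisfies $|c(\x)|\le\max(r,1)$, so the $c(\x)\xi$ term vanishes as $\xi\to 0$; the $\mathcal{O}(h)$ remainder vanishes as $h\to 0$, its implied constant depending only on bounds for the derivatives of $\phi$ on the compact set $\Omega$, hence uniformly in how $\x$ approaches $\hat\x$.

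First I would dispose of the case where $\hat\x$ lies in the interior of a single sub-domain. If $\hat\x\in\Omega_{\myin}$, then for $h$ small and $\x$ close to $\hat\x$ the node $\x$ is itself in $\Omega_{\myin}$, so the first display applies; since $F_{\myin}$ is (Lipschitz) continuous in $(v,Dv,D^{2}v)$ --- being $v_\tau$ minus a supremum over the compact set $\Al$ of the linear operators $\Lcal_\al$ whose coefficients depend continuously on $\al$ --- and $\phi\in\C{\Omega}$, we obtain $\mathcal{H}_{n,j}^{m+1}(\cdot)\to F_{\myin}(\hat\x,\phi(\hat\x),D\phi(\hat\x),D^{2}\phi(\hat\x))=F_{\Omega}(\hat\x,\ldots)$. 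Since $F_{\Omega}$ is continuous at such a point, $(F_{\Omega})^{*}=(F_{\Omega})_{*}=F_{\Omega}$ there and both \eqref{eq:consistency_viscosity_1}--\eqref{eq:consistency_viscosity_2} hold with equality. The cases $\hat\x$ in the interior of $\Omega_{\myout}$ or of $\Omega_{\tau_0}$ are identical, using the manifest continuity of $F_{\myout}$ and $F_{\tau_0}$.

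The main work --- and the step I expect to be the main obstacle --- is when $\hat\x$ lies on an interface between sub-domains: on one of the faces $\{x=x_{\min}\}$, $\{x=x_{\max}\}$, $\{y=y_{\min}\}$, $\{y=y_{\max}\}$ with $\tau>0$, where $\overline{\Omega_{\myin}}$ meets $\Omega_{\myout}$, or on the slice $\{\tau=0\}$ inside the inner rectangle, where $\overline{\Omega_{\myin}}$ meets $\Omega_{\tau_0}$ (note that the nodes of $\mathcal{H}_{n,j}^{m+1}$ always have $\tau_{m+1}=(m+1)\Delta\tau>0$, so a sequence $\x\to\hat\x$ with $\hat\tau=0$ necessarily lies in $\Omega_{\myin}\cup\Omega_{\myout}$). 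Here, depending on whether the grid indices put $(x_n,y_j)$ inside or outside $(x_{\min},x_{\max})\times(y_{\min},y_{\max})$, the quantity $\mathcal{H}_{n,j}^{m+1}(\cdot)$ is, up to the vanishing $c(\x)\xi+\mathcal{O}(h)$ terms, one of the finitely many numbers $F_{\myin}(\hat\x,\phi(\hat\x),\ldots)$, $F_{\myout}(\hat\x,\phi(\hat\x),\ldots)$, $F_{\tau_0}(\hat\x,\phi(\hat\x),\ldots)$ --- precisely those indexed by sub-domains whose closure contains $\hat\x$. Hence every subsequential limit of $\mathcal{H}_{n,j}^{m+1}(\cdot)$, as $h\to 0$, $\x\to\hat\x$, $\xi\to 0$, belongs to this finite set. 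By the definition \eqref{eq:envelop} of the semicontinuous envelopes applied to the piecewise operator \eqref{eq:fall}, and using the continuity of $F_{\myin}$, $F_{\myout}$, $F_{\tau_0}$ within their own sub-domains, $(F_{\Omega})^{*}(\hat\x,\ldots)$ is the maximum and $(F_{\Omega})_{*}(\hat\x,\ldots)$ the minimum of exactly these numbers; therefore $\limsup\mathcal{H}_{n,j}^{m+1}(\cdot)\le(F_{\Omega})^{*}(\hat\x,\ldots)$ and $\liminf\mathcal{H}_{n,j}^{m+1}(\cdot)\ge(F_{\Omega})_{*}(\hat\x,\ldots)$, which are \eqref{eq:consistency_viscosity_1}--\eqref{eq:consistency_viscosity_2}. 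The delicate points are to confirm that no spurious limiting value can arise --- that the scheme genuinely collapses in the limit to one of the three operators evaluated at $\hat\x$, which is read off the explicit form of \eqref{eq:scheme_GF} together with Lemma~\ref{lemma:consistency} --- and the bookkeeping at the edges and corners of the inner rectangle and at $\tau=0$, where up to three pieces can share $\hat\x$ and the envelope becomes a max/min over three values.
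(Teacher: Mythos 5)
Your proposal is correct and follows essentially the same route as the paper: combine the local consistency expansion of Lemma~\ref{lemma:consistency} with the piecewise-constant-control bound of Lemma~\ref{lemma:Fh} to show that $\mathcal{H}_{n,j}^{m+1}(\cdot)$ differs from $F_{\Omega}(\x,\phi(\x),\ldots)$ by terms that vanish with $h$ and $\xi$, then pass to $\limsup/\liminf$ along sequences $h_i\to0$, $\x_i\to\hat\x$, $\xi_i\to0$ to recover the upper and lower envelopes. Your treatment is more explicit than the paper's about the interface cases (where $\hat\x$ lies on the boundary between $\Omega_{\myin}$, $\Omega_{\myout}$, $\Omega_{\tau_0}$ and the limiting values form a finite set whose max and min match the envelopes); the paper only works the $\hat\x\in\Omega_{\myin}$ case in detail and dismisses the rest as "similar," so your added bookkeeping fills in exactly what that dismissal glosses over.
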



\begin{proof}[Proof of Lemma~\ref{lemma:consistency_viscosity}]
We first prove \eqref{eq:consistency_viscosity_1}.
Let ${\bf{\hat{x}}} \equiv (\hat{x}, \hat{y},  \hat{\tau})$ be an arbitrary, but fixed, point
in  $\Omega$. Consider $h \to 0$. There exists sequences of $\{h_i\}$, $\{m_i\}$, $\{\x_i\}$, and $\{\xi_i\}$,
such that
\EQA
\label{eq:consistency_1_F}
\text{as}~~i \to \infty, ~~~ h_i \to 0,~ \xi_i \to 0, ~{\bf{x}}_i \equiv (x_{n_i}, y_{j_i}, \tau_{m_i+1}) \to {\bf{\hat{x}}} \equiv (\hat{x}, \hat{y},  \hat{\tau}),
\ENA
and
\EQ
\label{eq:consistency_2_F}
\limsup_{i \to \infty} \mathcal{H}_{n_i, j_i}^{m_i+1}\l(h_i, \phi_{n_i, j_i}^{m_i+1} +\xi_i, \big\{\phi_{l_i, d_i}^{m_i} +\xi_i\big\}\r)
=\limsup_{\subalign{h \to 0, &~  {\x} \to {\xh}
\\ \xi &\to 0}} \mathcal{H}_{n, j}^{m+1}\l(h, \phi_{n, j}^{m+1} +\xi, \big\{\phi_{l, d}^{m} +\xi\big\}\r).
\EN
Now, we consider the case $\xh \in \Omega_{\myin}$. According to the first equation of \eqref{eq:lemma_1} (Lemma~\ref{lemma:consistency}), we have
\EQ
\label{eq:consistency_viscosity_4}
 \mathcal{H}_{n_i,j_i}^{m_i+1} \l(h_i, \phi_{n_i, j_i}^{m_i+1} + \xi_i,
\l\{\phi_{l_i, k_i}^{m_i}+\xi_i \r\}\r)
=
F_{\myin}^{h_i} \l({\bf{x}}_i, \phi\l({\bf{x}}_i\r), D \phi\l({\bf{x}}_i\r),
D^2\phi\l({\bf{x}}_i\r)\r)
+ r\xi_i + \mathcal{O}(h_i)
\EN
Using \eqref{eq:approx_F} and \eqref{eq:consistency_viscosity_4} gives, for each $i$,
\EQ
\label{eq:H2F_F}
\big|F(\x_i, \phi(\x_i), \cdot, \cdot)
-
 \mathcal{H}_{n_i,j_i}^{m_i+1} \big(h_i, \phi_{n_i, j_i}^{m_i+1} + \xi_i,
\big\{\phi_{l_i, k_i}^{m_i}+\xi_i \big\}\big)\big|
\le
C_i h_i + (r + c(\x_i)) \xi_i + \mathcal{O}(h_i)
\EN
Here, $C_i>0$ is a bounded constant and  $|c(\x_i)|\le \max(r, 1)$ for all $i$.
Thus, from \eqref{eq:H2F_F}, we have
\EQA
\label{eq:consistency_3_F}
 \mathcal{H}_{n_i,j_i}^{m_i+1} \big(h_i, \phi_{n_i, j_i}^{m_i+1} + \xi_i,
\big\{\phi_{l_i, k_i}^{m_i}+\xi_i \big\}\big)
\le
F\big(\x_i, \phi(\x_i), \cdot, \cdot\big)
+ C_i h_i + (r + c(\x_i)) \xi_i + \mathcal{O}(h_i)
\ENA
Combining \eqref{eq:consistency_2_F} and \eqref{eq:consistency_3_F}, with continuity of
$F(\cdot)$, we obtain
\begin{align*}
&\limsup_{\subalign{h \to 0, &~  {\x} \to {\xh}
\\ \xi &\to 0}} \mathcal{H}_{n, j}^{m+1}\l(h, \phi_{n, j}^{m+1} +\xi, \big\{\phi_{l, d}^{m} +\xi\big\}\r)
= \limsup_{i \to \infty} \mathcal{H}_{n_i, j_i}^{m_i+1}\l(h_i, \phi_{n_i, j_i}^{m_i+1} +\xi_i, \big\{\phi_{l_i, d_i}^{m_i} +\xi_i\big\}\r)
\\
&\qquad \qquad \qquad \qquad \leq
\limsup_{i \to \infty}
F\left(\x_i, \phi(\x_i), D\phi(\x_i), D^{2} \phi(\x_i)\right)
 +  \limsup_{i \to \infty} (C_ih_i + (r + c(\x_i)) \xi_i)
\nonumber
\\
&\qquad \qquad \qquad\qquad  =F^* \l(\xh, \phi(\xh), D\phi(\xh), D^2 \phi(\xh) \r).
\end{align*}
This proves \eqref{eq:consistency_viscosity_1} for $\xh \in \Omega_{\myin}$.
The case \eqref{eq:consistency_viscosity_1} for other sub-domains as well as
the case \eqref{eq:consistency_viscosity_2} can be proved in a similar fashion.
This concludes the proof.
\end{proof}

\subsection{Monotonicity}
We present a result on the monotonicity of scheme \eqref{eq:scheme_GF}.
\begin{lemma}{(Monotonicity)}
\label{lemma:mon}
Scheme \eqref{eq:scheme_GF} satisfies
 \EQA
\label{eq:mon}
\mathcal{H}^{m+1}_{n,j}\l(h,  v^{m+1}_{n,j}, \l\{u^{m}_{l,d}\r\}\r)\leq\mathcal{H}^{m+1}_{n,j}\l(h, v^{m+1}_{n,j}, \l\{z^{m}_{l,d}\r\}\r)
\ENA
for bounded $\l\{u^{m}_{l,d}\r\}$ and $\l\{z^{m}_{l,d}\r\}$ having $\l\{u^{m}_{l,d}\r\}\geq \l\{z^{m}_{l,d}\r\}$,
where the inequality is understood in the component-wise sense.
\end{lemma}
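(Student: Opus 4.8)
The statement asserts that the scheme operator $\mathcal{H}_{n,j}^{m+1}$ is monotone (non-increasing) in its "history" arguments $\{v_{l,d}^m\}$. The plan is to treat each of the three cases in the definition \eqref{eq:scheme_GF} separately, since the claim is local at each reference node $\x = (x_n, y_j, \tau_{m+1})$.

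First I would dispose of the two boundary cases. For $\x \in \Omega_{\tau_0}$ the operator is $v_{n,j}^{m+1} - p(e^{x_n}, e^{y_j})$, and for $\x \in \Omega_{\myout}$ it is $v_{n,j}^{m+1} - p(e^{x_n}, e^{y_j})e^{-r\tau_{m+1}}$; in both cases $\mathcal{H}_{n,j}^{m+1}$ does not depend on the history arguments $\{v_{l,d}^m\}$ at all, so \eqref{eq:mon} holds trivially with equality. This leaves the interior case $\x \in \Omega_{\myin}$, where $\mathcal{H}_{n,j}^{m+1}(\cdot) = \mathcal{C}_{n,j}^{m+1}(\cdot)$ as defined in \eqref{eq:scheme_CD}.

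For the interior case, the key observation is the sign structure already established in \eqref{eq:gkbound}: the quadrature weights $\varphi_{l,d}$ are positive and the Green's function values $g_{n-l,j-d}^\alpha$ are non-negative (from the closed-form expression in Lemma~\ref{lem:g_k}, being a scaled Gaussian density), so each coefficient $\Delta x \Delta y\, \varphi_{l,d}\, g_{n-l,j-d}^\alpha$ in the inner sum is $\geq 0$. Hence, for each fixed $\alpha \in \Al_h$, the linear functional $\{w_{l,d}\} \mapsto \Delta x \Delta y \sum_{l,d} \varphi_{l,d}\, g_{n-l,j-d}^\alpha\, w_{l,d}$ is monotone non-decreasing in its argument: if $\{u_{l,d}^m\} \geq \{z_{l,d}^m\}$ component-wise, then $\Delta x \Delta y \sum \varphi_{l,d} g_{n-l,j-d}^\alpha u_{l,d}^m \geq \Delta x \Delta y \sum \varphi_{l,d} g_{n-l,j-d}^\alpha z_{l,d}^m$. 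Taking the maximum over $\alpha \in \Al_h$ preserves this inequality (the max of a family of non-decreasing functions is non-decreasing), so $\max_\alpha\{\cdots u_{l,d}^m\} \geq \max_\alpha\{\cdots z_{l,d}^m\}$. Finally, $\mathcal{C}_{n,j}^{m+1}$ has the form $\frac{1}{\Delta\tau}[v_{n,j}^{m+1} - \max_\alpha\{\cdots\}]$, i.e.\ it depends on the history through $-\max_\alpha\{\cdots\}$ with a negative sign, so increasing $\{v_{l,d}^m\}$ decreases $\mathcal{C}_{n,j}^{m+1}$. This gives exactly \eqref{eq:mon}.

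I do not anticipate a genuine obstacle here; the result is essentially a one-line consequence of the non-negativity of the Green's function and the positivity of the quadrature weights, which is precisely why the "decompose and integrate" framework yields monotonicity for free — in sharp contrast to the delicate positive-coefficient discretizations of cross-derivative terms required in the finite difference setting. The only point requiring a moment's care is to state clearly that $\max_\alpha$ preserves the component-wise inequality, and to track the sign convention in \eqref{eq:scheme_CD} correctly so that the inequality in \eqref{eq:mon} comes out in the stated direction (with $\leq$, since the larger history argument produces the smaller value of $\mathcal{H}$).
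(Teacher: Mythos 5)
Your proposal is correct and follows essentially the same approach as the paper: both rest on the non-negativity of the quadrature weights and the Green's function values, and on the order-preserving property of $\max$. The only cosmetic difference is that you argue directly that $\{u_{l,d}^m\} \ge \{z_{l,d}^m\}$ implies $\max_\alpha\{\cdots u\} \ge \max_\alpha\{\cdots z\}$, whereas the paper's proof bounds the difference $\mathcal{C}(\cdot,\{u\}) - \mathcal{C}(\cdot,\{z\})$ via the elementary inequality $\max_\alpha f_1(\alpha) - \max_\alpha f_2(\alpha) \le \max_\alpha\bigl(f_1(\alpha) - f_2(\alpha)\bigr)$; these are equivalent statements of the same fact, and your handling of the boundary sub-domains and the sign convention in $\mathcal{C}_{n,j}^{m+1}$ is likewise in line with the paper.
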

\begin{proof}[Proof of Lemma~\ref{lemma:mon}]
Since scheme \eqref{eq:scheme_GF} is defined case-by-case, to establish \eqref{eq:mon}, we will show that each
case satisfies \eqref{eq:mon}. It is straightforward that the scheme satisfies \eqref{eq:mon} in
${\Omega}_{\tau_0}$) and $ {\Omega}_{{\myout}}$. Now we establish
that  $\mathcal{C}_{n, j}^{m+1}\l(\cdot\r)$, as defined in \eqref{eq:scheme_CD}
for $\Omega_{\myin}$, also satisfies \eqref{eq:mon}. We have
\begin{align}
\label{eq:mon_p}
&\mathcal{C}^{m+1}_{n,j}\l(h,  v^{m+1}_{n,j}, \l\{u^{m}_{l,d}\r\}\r)-\mathcal{C}^{m+1}_{n,j}\l(h, v^{m+1}_{n,j}, \l\{z^{m}_{l,d}\r\}\r)\nonumber
\\
&\quad=\frac{1}{\Delta \tau}\bigg[\max_{\al} \Delta x \Delta y \mysum_{l\in\mathbb{N}^{\dagger}}^{d\in\mathbb{J}^{\dagger}}\varphi_{l, d} ~g^\al_{n-l,j-d}~z^{m}_{l,d}-\max_{\al} \Delta x \Delta y \mysum_{l\in\mathbb{N}^{\dagger}}^{d\in\mathbb{J}^{\dagger}}\varphi_{l, d} ~g^\al_{n-l,j-d}~u^{m}_{l,d}\bigg]\nonumber
\\
&\quad \overset{\text{(i)}}{\le} \frac{1}{\Delta \tau}\l[\max_{\al} \Delta x \Delta y \mysum_{l\in\mathbb{N}^{\dagger}}^{d\in\mathbb{N}^{\dagger}}\varphi_{l, d} ~g^\al_{n-l,j-d}~\l(z^{m}_{l,d}-u^{m}_{l,d}\r)\r] \leq 0.
\end{align}
Here, (i) is due to the fact that,  $\max_{\alpha \in \Al} f_1(\alpha) - \max_{\alpha \in \Al} f_2(\alpha) \le
\max_{\alpha} (f_1(\alpha) - f_2(\alpha))$ for  two real-valued functions $f_1, f_2$ of $\alpha$.
This concludes the proof.
\end{proof}

\label{sssec:FRC}
\begin{theorem} [Convergence to viscosity solution in $\Omega_{\myin}$]
\label{thm:convergence}
Suppose that all the conditions for Lemmas~\ref{lemma:stability}), \ref{lemma:consistency}
and \ref{lemma:mon} are satisfied. Our scheme \eqref{eq:scheme_GF} converges in $\Omega_{\myin}$ to the unique continuous viscosity solution of the two-factor uncertain volatility model pricing problem given in Definition \eqref{def:vis_def_common}.
\end{theorem}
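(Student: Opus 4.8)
The plan is a standard Barles--Souganidis argument \cite{barles-souganidis:1991} via half-relaxed limits, built on the three structural properties already in hand --- $\ell_\infty$-stability (Lemma~\ref{lemma:stability}), consistency in the viscosity sense (Lemma~\ref{lemma:consistency_viscosity}), and monotonicity (Lemma~\ref{lemma:mon}) --- together with the strong comparison principle valid in $\Omega_{\myin}$ (Remark~\ref{rm:strong}). First I would let $v^{h}$ denote the extension to $\Omega$ of the grid function $\{v_{n,j}^{m}\}$ produced by scheme \eqref{eq:scheme_GF}, and define the half-relaxed limits
\[
\overline{v}(\xh) = \limsup_{h\to 0,\ \x\to\xh} v^{h}(\x),
\qquad
\underline{v}(\xh) = \liminf_{h\to 0,\ \x\to\xh} v^{h}(\x),
\qquad \xh\in\Omega .
\]
By Lemma~\ref{lemma:stability} these are finite; $\overline{v}$ is u.s.c., $\underline{v}$ is l.s.c., and $\underline{v}\le\overline{v}$ on $\Omega$.

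Next I would show that $\overline{v}$ is a viscosity subsolution and $\underline{v}$ a viscosity supersolution of $F_{\Omega}=0$ on $\Omega$ in the sense of Definition~\ref{def:vis_def_common}. For the subsolution property, fix $\phi\in\C{\Omega}$ and a point $\xh$ at which $\overline{v}-\phi$ attains a global maximum; after the usual reduction (adding $\varepsilon|\x-\xh|^{4}$ to $\phi$, which leaves $D\phi(\xh)$ and $D^{2}\phi(\xh)$ unchanged, plus a constant so that $\overline{v}(\xh)=\phi(\xh)$) one obtains mesh points $\x_{i}=(x_{n_i},y_{j_i},\tau_{m_i+1})\to\xh$ and $h_i\to 0$ along which $v^{h_i}(\x_i)\to\overline{v}(\xh)$ and $\x_i$ maximizes $v^{h_i}-\phi$ over the mesh, so that with $\xi_i := (v^{h_i}-\phi)(\x_i)\to 0$ one has $v^{m_i}_{l,d}\le\phi^{m_i}_{l,d}+\xi_i$ at time level $m_i$ while $v^{m_i+1}_{n_i,j_i}=\phi^{m_i+1}_{n_i,j_i}+\xi_i$. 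Substituting these into $0=\mathcal{H}_{n_i,j_i}^{m_i+1}(h_i,v^{m_i+1}_{n_i,j_i},\{v^{m_i}_{l,d}\})$ and invoking the monotonicity inequality \eqref{eq:mon} yields
\[
\mathcal{H}_{n_i,j_i}^{m_i+1}\!\Big(h_i,\ \phi^{m_i+1}_{n_i,j_i}+\xi_i,\ \big\{\phi^{m_i}_{l,d}+\xi_i\big\}\Big)\ \le\ 0 \qquad\text{for all } i .
\]
Hence the $\liminf$ of the left-hand side over admissible sequences is $\le 0$, and the consistency inequality \eqref{eq:consistency_viscosity_2} then gives $\big(F_{\Omega}\big)_{*}(\xh,\phi(\xh),D\phi(\xh),D^{2}\phi(\xh))\le 0$, which is exactly the subsolution condition. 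The supersolution property of $\underline{v}$ follows symmetrically (replace ``maximum'' by ``minimum'', reverse the inequality coming from \eqref{eq:mon}, and use \eqref{eq:consistency_viscosity_1}). Because $F_{\Omega}$ in \eqref{eq:fall} encodes the boundary and initial data as separate cases --- for which $\mathcal{H}$ in \eqref{eq:scheme_GF} reproduces the data exactly and is trivially monotone and consistent --- this step treats $\Omega_{\myin}$, $\Omega_{\myout}$ and $\Omega_{\tau_0}$ uniformly.

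It remains to invoke comparison: since $\overline{v}$ is a subsolution and $\underline{v}$ a supersolution of the localized problem, and a strong comparison principle holds in $\Omega_{\myin}$ (Remark~\ref{rm:strong}), we get $\overline{v}\le\underline{v}$ there; combined with $\underline{v}\le\overline{v}$ this forces $v:=\overline{v}=\underline{v}$ on $\Omega_{\myin}$, so $v$ is continuous and coincides with the unique viscosity solution of Definition~\ref{def:vis_def_common}. The definition of the half-relaxed limits then upgrades this pointwise identity to locally uniform convergence $v^{h}\to v$ on $\Omega_{\myin}$ as $h\to 0$.

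The main obstacle is the limit passage in the sub/supersolution step, and this is exactly where the additive shift $\xi$ in Lemmas~\ref{lemma:consistency} and \ref{lemma:consistency_viscosity} does its work: the scheme $\mathcal{H}$ is not literally translation invariant (the quadrature weights of the Green's function do not sum to exactly one; cf.\ \eqref{eq:gkbound}), so one cannot simply substitute $\phi$ for $v^{h}$. Instead the constant $\xi_i=(v^{h_i}-\phi)(\x_i)$ must be carried through the monotonicity comparison and the resulting error $c(\x)\xi+\mathcal{O}(h)$ from \eqref{eq:lemma_1} absorbed as $h,\xi\to 0$, which is precisely what the consistency inequalities \eqref{eq:consistency_viscosity_1}--\eqref{eq:consistency_viscosity_2} are tailored to handle. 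A secondary technical point is that convergence is claimed only on the interior sub-domain $\Omega_{\myin}$, so the comparison principle is applied only there and no compatibility between the artificial Dirichlet condition \eqref{eq:vxymax} and the true solution is needed --- consistent with the cancellation property of the Green's function discussed in Remark~\ref{rm:bd_e}.
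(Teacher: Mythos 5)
Your proposal is correct and takes essentially the same route as the paper: the paper's proof of Theorem~\ref{thm:convergence} is a one-line appeal to the Barles--Souganidis framework citing $\ell_\infty$-stability, consistency, monotonicity and strong comparison, while you simply unpack what that framework does (half-relaxed limits, sub/supersolution verification via monotonicity plus the $\xi$-augmented consistency inequalities, then comparison on $\Omega_{\myin}$). Your observation that the additive constant $\xi$ in Lemmas~\ref{lemma:consistency} and \ref{lemma:consistency_viscosity} is precisely what compensates for the lack of exact discrete translation invariance (since the quadrature weights do not sum exactly to one, cf.\ \eqref{eq:gkbound}) is an accurate explanation of a detail the paper leaves implicit.
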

\begin{proof}[Proof of Theorem~\ref{thm:convergence}]
Our scheme is $\ell_\infty$-stable (Lemma \ref{lemma:stability}),
and consistent in the viscosity sense (Lemma \ref{lemma:consistency})
and monotone (Lemma~\ref{lemma:mon}). Since a strong comparison holds in $\Omega_{\myin}$ (Remark~\ref{rm:strong}),
by \cite{barles-souganidis:1991}, convergence in $\Omega_{\myin}$ to the unique continuous vicosity solution of the HJB equation
is ensured.
\end{proof}

\section{Numerical experiments}
\label{sec:Numeri_exp}
This section presents the selected numerical results of our monotone piecewise constant control integration method (MPCCI)
applied to the two-factor uncertain volatility model pricing problem.
The modelling parameters for the tests carried out are given in Table~\ref{tab:parameter02}, reproduced from \cite{MaForsyth2015}[Table~3]. We note that specific ranges for $\sigma_{\myx}$, $\sigma_{\myy}$, and the correlation coefficient  $\rho$ are given.
\subsection{Preliminary}
Prior to initiating our experiments, it is essential to define a sufficiently large computational domain, guided by
the boundary truncation error bound provided in Lemma~\ref{lemma:truncation}. Specifically, we follow steps outlined
in Remark~\ref{rm:bd_e} to determine $x^{\dagger}_{\mymin}$, $x^{\dagger}_{\mymax}$, $y^{\dagger}_{\mymin}$ and $y^{\dagger}_{\mymax}$. In particular, in \eqref{eq:eps_b}, $\epsilon =  10^{-10}$ is used.
With the model parameters given Table~\ref{tab:parameter02}, this procedure gives $x_{\min} = \ln(X_0)-1.2$, $x_{\max} = \ln(X_0)+1.2$, $y_{\min} = \ln(Y_0)-1.2$, $y_{\max} = \ln(Y_0)+1.2$.
Furthermore, for $z \in \{x, y\}$, the values of $z_{\min}^{\dagger}$, $z_{\max}^{\dagger}$, $z_{\min}^{\ddagger}$ and $z_{\max}^{\ddagger}$ are determined as in \eqref{eq:w_choice_green_jump_form}-\eqref{eq:w_choice_green_jump_form_dd}.
Extensive testing indicates that larger intervals have negligible impact on numerical solutions, whereas smaller domains exhibit minor variations. These findings are numerically validated in Subsection~\ref{ssc:padding}
Unless noted otherwise, the specifics of mesh size and timestep refinement levels utilized in all experiments are detailed in Table~\ref{tab:step01}.

All numerical experiments were performed on a system equipped with an Intel Core i7-11700 CPU (11th Gen, 2.50 GHz, 8 cores / 16 threads) and 32GB of RAM (dual-channel, 3200 MHz). The system operates on Windows 11 (64-bit) with a 512GB SSD. No GPU acceleration was used. The implementation was carried out in MATLAB R2022b (Version 9.13) with the Statistics and Machine Learning Toolbox.

\begin{minipage}{0.55\textwidth}
\strut\vspace*{-\baselineskip}\newline
\center
\begin{tabular}{cl|cl}
\hline
Parameter                         & Value/    &Parameter                         & Value/ \\
                                & Range    &                         & Range  \\
\hline
$T$                               & 0.25 (years) & $X_0$                     & 40
\\
$r$                               & 0.05  &$Y_0$                     & 40
\\
$\sigma_{x}$                      & [0.3, 0.5] & $K$                               & 40
\\
$\sigma_{y}$                      & [0.3, 0.5] &$K_1$                             & 34
\\
$\rho$                            & [0.3, 0.5] & $K_2$                             & 46
\\
\hline
\end{tabular}
\captionof{table}{Model parameters used in numerical experiments for two-factor uncertain volatility model-
reproduced from \cite{MaForsyth2015}~Table~3.}
\label{tab:parameter02}
\end{minipage}
\begin{minipage}{0.45\textwidth}
\strut\vspace*{-\baselineskip}\newline
\center
\begin{tabular}{crrrr}
\hline
Level & $N$ &  $J$                & $M$  &$Q$    \\
      & ($x$) & ($y$)               & ($\tau$) & ($\alpha$)  \\ \hline
0     & $2^{7}$ & $2^{7}$          & 50    & 8 \\
1     & $2^{8}$ & $2^{8}$          & 100   & 24   \\
2     & $2^{9}$ & $2^{9}$        & 200     & 56   \\
3     & $2^{10}$& $2^{10}$        & 400    & 120  \\
4     & $2^{11}$&$2^{11}$        & 800     & 248   \\
\hline
\end{tabular}
\captionof{table}{Grid and timestep refinement levels for numerical tests.
}
\label{tab:step01}
\end{minipage}

\vspace*{+0.25cm}
Our MPCCI numerical prices are verified against those produced by:
(i) closed-form solutions (for certain European rainbow options),
(ii) FD methods reported in the literature, particularly the unconditionally monotone FD method of \cite{MaForsyth2015},
(iii) tree-grid (TG) methods of \cite{kossaczky20192d},
and (iii) Monte Carlo (MC) simulation. The Monte Carlo validation is carried out in two steps
\begin{enumerate}
 \item Step 1: we solve the two-factor uncertain volatility pricing problem using the proposed MPCCI on a fine computational grid (comprising of $2^{10}$ $x$-nodes, $2^{10}$ $y$-nodes, and  400 timesteps).
     At each time-$\tau_m$, we store the optimal controls or all pair of discrete  states $(x_n, y_j)$,
     denoted as $\big\{(\al^*)_{n, j}^{m}\big\} \equiv \big\{(\sigma_{\myx}^*, \sigma_{\myy}^*,\rho^*)_{n, j}^{m}\big\}$,
     where $n\in  \mathbb{N}^{\dagger}$, $j \in \mathbb{J}^{\dagger}$, and $m = 0, \ldots, M$.

 \item Step 2: we conduct Monte Carlo simulations of the 2D dynamics \eqref{eq:2SDE_GBM} from $t=0$ to $t=T$, following the stored MPCCI-computed optimal controls. For a given pair of simulated values of $X$ and $Y$,
     linear interpolation, if necessary, is used to determine the control.
     Specifically, for the $\gamma$-th $X$ and $Y$ simulated values at time-$\tau_m$, denoted by $\hat{X}_{m}^{(\gamma)}$ and $\hat{Y}_{m}^{(\gamma)}$,
     and given  $x_{n'} \le \hat{X}_m^{(\gamma)}\le x_{n'+1}$ and  $y_{j'} \le \hat{Y}_m^{(\gamma)}\le y_{j'+1}$, we interpolate the optimal control
    $(\al^*)_{n', j'}^{m}$, $(\al^*)_{n'+1, j}^{m}$, $(\al^*)_{n', j'+1}^{m}$, and $(\al^*)_{n'+1, j'+1}^{m}$
    to determine the volatilities $\tilde{\sigma}_x^m$ and $\tilde{\sigma}_y^m$  and the correlation coefficient $\tilde{\rho}_m$ for the interval $[\tau_m, \tau_{m+1}]$. The Euler-Maruyama discretization is then applied for each the interval $[\tau_m, \tau_{m+1}]$ as follows:
     \begin{align*}
     \hat{X}_{m+1}^{(\gamma)} &= \hat{X}_m^{(\gamma)} \l(1 + r \Delta t + \tilde{\sigma}_x^m  \sqrt{\Delta t}~\xi_x^{(\gamma)}\r),
     \\
    \hat{Y}_{m+1}^{(\gamma)} &= \hat{Y}_m^{(\gamma)} \l(1+  r \Delta t + \tilde{\sigma}_y^m  \sqrt{\Delta t}~(\tilde{\rho}_m \xi_x^{(\gamma)} + \sqrt{1- (\tilde{\rho}_m)^2}~\xi_y^{(\gamma)})\r),
     \end{align*}
     where $\xi_x^{(\gamma)}$ and $\xi_y^{(\gamma)}$ are independent standard normal random variables.
     The option value is approximated by $\frac{e^{-rT}}{\Gamma} \sum_{\gamma=1}^{\Gamma} p\big(\hat{X}_{M}^{(\gamma)}, \hat{Y}_{M}^{(\gamma)}\big)$,
     with $p(\cdot, \cdot)$ as the payoff function, using a total of  $\Gamma= 10^6$ simulation paths.

 \end{enumerate}
\subsection{Validation examples}
\subsubsection{European call options}
\label{sssec:Eu}
Our first test case evaluates a European call option on the maximum of two assets, as described in \cite{MaForsyth2015}.
The payoff function $p(e^x,e^y)$ is given by
\EQA
\label{eq:BSO_P1}
p(e^x,e^y) = \max(\max(e^x,e^y)-K,0), \quad K>0.
\ENA
We consider the worst-case value for the short position, for which an analytical solution exists, as noted in \cite{MaForsyth2015}. Specifically, since the payoff function \eqref{eq:BSO_P1} is convex and convexity is preserved \cite{janson2004preservation}, the worst-case price of the short position is attained at the fixed parameters $\sigma_x^* = \sigma^{x}{\mymax}$, $\sigma_y^* = \sigma^{y}{\mymax}$, and $\rho^* = \rho_{\mymin}$. The exact option price can be computed analytically using the closed-form expression from \cite{Stulz1982}. Using the parameters from Table~\ref{tab:parameter02},
where $\sigma_x^* = \sigma_y^* = 0.5$ and $\rho^* = 0.3$, we obtain the closed-form solution of 6.84769986, accurate to 8 decimal places.

\vspace*{-0.5cm}
\begin{minipage}{0.5\textwidth}
\strut\vspace*{-\baselineskip}
\flushleft
\begin{linenomath}
\begin{table}[H]
\center
\begin{tabular}{llll}
\hline
 Level     & Price        & Abs.\ error      & Ratio           \\
\hline
     \quad   0         &6.84492756 &2.77e-03 &            \\
    \quad    1         &6.84700690 &6.93e-04 & 4.0        \\
    \quad 2            &6.84752662 &1.73e-05  &4.0         \\
      \quad  3         &6.84765654 &4.33e-05  &4.0         \\
   \quad  4            &6.84768902 &1.08e-05  &4.0          \\
\hline
Ref.\ \cite{Stulz1982}&  6.84769986 & &
\\     \hline
 MC: 95\%-CI                 & \multicolumn{3}{l}{[6.8319, 6.8618]}   \\ \hline
\end{tabular}
\captionof{table}{Convergence study for a European call option on the maximum of two assets under the two-factor uncertain volatility model (\underline{worst-case}, composite trapezoidal rule). Payoff given by \eqref{eq:BSO_P1}. The reference value is the closed-form solution from \cite{Stulz1982} with $\sigma_x^* = \sigma_y^* = 0.5$ and $\rho^* = 0.3$.}
\label{tab:result01}
\end{table}
\end{linenomath}
\end{minipage}
\begin{minipage}{0.5\textwidth}
\vspace*{-0.9cm}
\begin{figure}[H]
\includegraphics[width=0.95\linewidth]{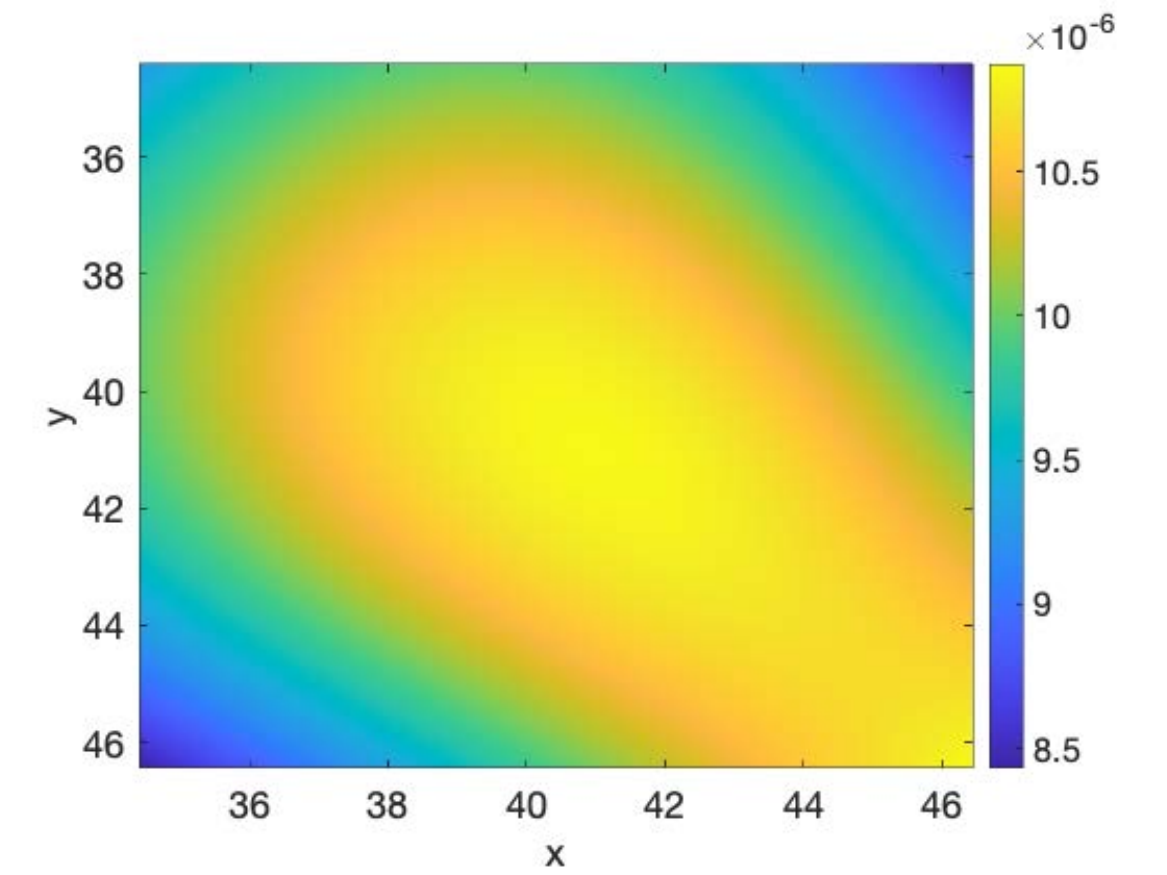}
    \caption{Absolute error on $\Omega$ associated with test case reported in Table~\ref{tab:result01}.}
    \label{fig:sub4}
\end{figure}
\end{minipage}

\vspace*{+0.25cm}
Despite knowing the optimal control, we discretize the admissible control set in our experiments for generality, i.e.\ $\Al_h$ is used and it contains the optimal control $\{(\alpha^*)\} = \{\sigma^{x}_{\mymax}, \sigma^{y}_{\mymax}, \rho_{\mymin}\}$ at all refinement levels. It is observed that the proposed MPCCI scheme accurately yielded the aforementioned optimal control for all $(x_n, y_j, \tau_m)$. Table~\ref{tab:result01} shows the convergence results for the time $t = 0$ option price at $e^x = X_0$, $e^y = Y_0$. To provide an estimate of the convergence rate of the proposed MPCCI~method, we compute the ``Abs.\ error'' as the absolute error between the exact option \cite{Stulz1982} and numerical option prices, and the ``Ratio'' as the ratio of successive absolute errors. These results  indicate excellent agreement with the analytic solution from \cite{Stulz1982}, as do the results from MC simulations.  Notably, the MPCCI method exhibits second-order convergence.

We further demonstrates the accuracy of the MPCCI method in the entire domain $\Omega$.
In Figure~\ref{fig:sub4}, we present the absolute error at time $t = 0$ on grid points obtained with refinement Level~4.
The absolute error, computed as
\[
\big| v(x_n, y_j, \tau_M) - v_{n, j}^M\big|, \quad n \in \Nbb^{\dagger}, \quad j \in \Jbb^{\dagger}, \quad\tau_M = T,
\]
 is very small across the computational domain, typically of the order of $10^{-5}$  or less, with higher errors concentrated near the strike $K = 40$, as expected.

\begin{minipage}{0.525\textwidth}
\strut\vspace*{-\baselineskip}
\flushleft
\begin{linenomath}
 \begin{table}[H]
\center
\begin{tabular}{llll}
\hline
Level     & Price        & Abs. error      & Ratio           \\
\hline
     \quad   0         &3.96880850 &4.80e-03 &            \\
    \quad    1         &3.97240621 &1.20e-03 &  4.0        \\
    \quad 2            &3.97330502 &3.00e-04  &4.0         \\
      \quad  3         &3.97352968 &7.49e-05  &4.0         \\
   \quad  4            &3.97358584 &1.87e-05  &4.0          \\
\hline
 MC: 95\%-CI                 & \multicolumn{3}{l}{[3.9657, 3.9840]}
\\     \hline
Ref.\ \cite{Stulz1982}&  3.97360457 & &
\\ \hline
\end{tabular}
\captionof{table}{Convergence study for a European call option on the maximum of two assets under the two-factor uncertain volatility model (\underline{best-case}, composite trapezoidal rule). Payoff given by \eqref{eq:BSO_P1}. The reference value is the closed-form solution from \cite{Stulz1982} with $\sigma_x = \sigma_y = 0.3$ and $\rho = 0.5$.
}
\label{tab:result01b}
\end{table}
 \end{linenomath}
\end{minipage}
\begin{minipage}{0.475\textwidth}
\vspace*{-0.75cm}
\begin{figure}[H]
\includegraphics[width=0.95\linewidth]{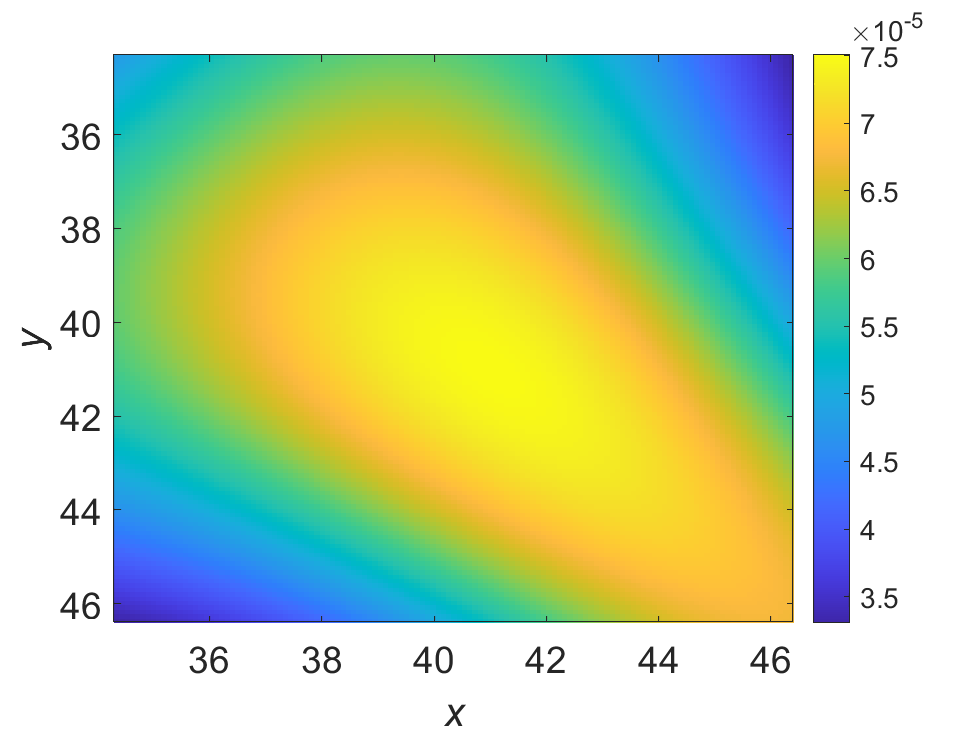}
    \caption{Absolute error on $\Omega$ associated with test case reported in Table~\ref{tab:result01b}.}
    \label{fig:sub4b}
\end{figure}
\end{minipage}

\vspace*{+0.25cm}
\noindent In Table~\ref{tab:result01b} and Figure~\ref{fig:sub4b}, we display the best-case results for the short position. The outcomes closely mirror those of the worst-case scenario, showing excellent agreement with the closed-form solution, exhibiting
 second-order of convergence.

\vspace*{+0.25cm}
\noindent \textbf{Discussion of observed convergence orders.}
While general uncertain volatility problems typically exhibit at most first-order convergence (as established in Lemma~\ref{lemma:consistency}), the results in Tables~\ref{tab:result01} and \ref{tab:result01b} demonstrate second-order convergence in scenarios with the convex payoff function \eqref{eq:BSO_P1}, namely the ``worst-case'' (resp.\ ``best-case'') European call option. In these cases, as noted earlier, theory indicates that the optimal control is constant throughout $\Omega_{\myin}$; for example, $\alpha^* = \{\sigma^x_{\max}, \sigma^y_{\max}, \rho_{\min}\}$ in the worst-case. Our numerical results show that the MPCCI method consistently identifies this global control correctly at each grid point in $\Omega_{\myin}$. Extensive tests across a wide range of parameter sets confirm this behavior. We conjecture that this may reflect an intrinsic property of the scheme when dealing with convex payoffs, though a formal proof is beyond the scope of this paper. In contrast, standard FD methods approximate the PDE's nonlinearity locally at each grid point and timestep, and often fail to converge to a single uniform control~\cite{MaForsyth2015}.

As a consequence of the fixed optimal control correctly identified at each grid point in $\Omega_{\myin}$, the PDE coefficients become constant, and therefore our scheme uses the same Green's function at each timestep to compute the 2D convolution. Due to the time additivity of the integral representation, the numerical solution at $\tau = T$ remains unchanged whether multiple timesteps or a single timestep ($M = 1$) is used. As a result, there is no timestep error, and the overall convergence rate depends solely on the quadrature used to evaluate the convolution, as well as the treatment of nonsmooth features in the payoff.

This dependency is clearly reflected in the observed convergence rates reported below.
Recall that using the composite trapezoidal rule with multiple timesteps yields $\mathcal{O}(h^2)$ convergence, as previously seen in Tables~\ref{tab:result01} and \ref{tab:result01b}. Corresponding results for $M = 1$ (a single timestep), reported in Tables~\ref{tab:result01b*} and \ref{tab:result02b} below, show virtually identical prices and the same second-order behavior.

\noindent\begin{minipage}{0.5\textwidth}
\strut\vspace*{-\baselineskip}
\flushleft
\small
\begin{linenomath}
\begin{table}[H]
\setlength{\tabcolsep}{2.5pt}
\center
{\color{black}
\begin{tabular}{clllr}
\hline
Level     & Price        & Abs.\      & Ratio & CPU    \\
$(M=1)$ &        & error      & &time           \\
\hline
   0         &6.84492757 &2.77e-03 &   &0.11s         \\
    1        &6.84700691 &6.93e-04 & 4.0 &0.36s        \\
 2          &6.84752663 &1.73e-04  &4.0 &1.52s        \\
  3        &6.84765655 &4.33e-05  &4.0 &6.44s        \\
 4          &6.84768903 &1.08e-05  &4.0  &27.44s        \\
\hline
Ref.\ \cite{Stulz1982}&  6.84769986 & &
\\     \hline
\end{tabular}
}
\captionof{table}{{\color{black}
Convergence study with a single timestep ($M = 1$) for the same test case reported in Table~\ref{tab:result01}, (worst-case, composite trapezoidal rule).}
\label{tab:result01b*}
}
\end{table}
\end{linenomath}
\end{minipage}
 \noindent\begin{minipage}{0.5\textwidth}
\strut\vspace*{-\baselineskip}
\flushleft
\small
\begin{linenomath}
\begin{table}[H]
\setlength{\tabcolsep}{2.5pt}
\center
{\color{black}
\begin{tabular}{clllr}
\hline
Level     & Price        & Abs.\      & Ratio &time          \\
$(M=1)$ &        & error      & &time           \\
   \hline
     0        &3.96880849  &4.80e-03  &         &0.10s        \\
       1        &3.97240621  &1.20e-03  &4.0     &0.35s        \\
       2        &3.97330502  &3.00e-04  &4.0     &1.51s        \\
       3        &3.97352968  &7.49e-05  &4.0     &6.20s        \\
       4        &3.97358585  &1.87e-05  &4.0     &26.35s       \\
    \hline
Ref.\ \cite{Stulz1982}& 3.97360457 & &
\\
\hline
\end{tabular}
}
\captionof{table}{{\color{black}Convergence study with a single timestep ($M = 1$) for the same test case reported in Table~\ref{tab:result01b}, (best-case, composite trapezoidal rule).}
}
\label{tab:result02b}
\end{table}
\end{linenomath}
\end{minipage}

Furthermore, replacing the trapezoidal rule with Simpson's rule leads to $\mathcal{O}(h^4)$ convergence (with a single time step), as shown in Tables~\ref{tab:result_simpson_1} and \ref{tab:result_simpson_2}, provided the payoff kinks are sufficiently resolved. In fact, the composite Simpson's rule yields significantly smaller absolute errors (on the order of $10^{-6}$ at the coarsest level down to $10^{-11}$ at the finest) than the trapezoidal rule ($10^{-3}$ to $10^{-5}$, respectively), despite using the same spatial grid resolution and number of time steps.
\noindent
\begin{minipage}{0.5\textwidth}
\strut\vspace*{-\baselineskip}
\flushleft
\small
\begin{linenomath}
\begin{table}[H]
\setlength{\tabcolsep}{3pt}
\centering
{\color{black}
\begin{tabular}{llll}
\hline
Level     & Price        & Abs.\      & Ratio \\
$(M=1)$ &        & error      &\\
\hline
\qquad 0 & 6.84770210 & 2.23e-06 & \\
\qquad 1 & 6.84770000 & 1.39e-07 & 16.04 \\
\qquad 2 & 6.84769987 & 8.70e-09 & 16.01 \\
\qquad 3 & 6.84769986 & 5.44e-10 & 16.00 \\
\qquad 4 & 6.84769986 & 3.40e-11 & 16.00 \\
\hline
Ref.\ \cite{Stulz1982} & 6.84769986 & & \\
\hline
MC: 95\%-CI & \multicolumn{3}{l}{[6.8319, 6.8618]} \\
\hline
\end{tabular}
}
\captionof{table}{{\color{black}
Convergence study with a single timestep ($M = 1$) for the same test case reported in Table~\ref{tab:result01} (worst-case, \underline{composite Simpson's rule}).}}
\label{tab:result_simpson_1}
\end{table}
\end{linenomath}
\end{minipage}
\hfill
\begin{minipage}{0.5\textwidth}
\strut\vspace*{-\baselineskip}
\flushleft
\small
\begin{linenomath}
\begin{table}[H]
\setlength{\tabcolsep}{3pt}
\centering
{\color{black}
\begin{tabular}{llll}
\hline
Level     & Price        & Abs.\      & Ratio\\
$(M=1)$ &        & error      & \\
\hline
\qquad 0 & 3.97361470 & 1.01e-05 & \\
\qquad 1 & 3.97360520 & 6.28e-07 & 16.12 \\
\qquad 2 & 3.97360461 & 3.92e-08 & 16.03 \\
\qquad 3 & 3.97360457 & 2.45e-09 & 16.01 \\
\qquad 4 & 3.97360457 & 1.53e-10 & 16.00 \\
\hline
Ref.\ \cite{Stulz1982} & 3.97360457 & & \\
\hline
MC: 95\%-CI & \multicolumn{3}{l}{[3.9657, 3.9840]} \\
\hline
\end{tabular}
}
\captionof{table}{{\color{black}Convergence study with a single timestep ($M = 1$) for the same test case reported in Table~\ref{tab:result01b} (best-case, \underline{composite Simpson's rule})}}
\label{tab:result_simpson_2}
\end{table}
\end{linenomath}
\end{minipage}

\vspace*{+0.25cm}
To achieve this, we partition the domain into subregions whose boundaries align with the payoff's non-differentiable lines (namely $x = \ln(K)$, $y = \ln(K)$, and $x = y$), so that the integrand is piecewise smooth within each subdomain. This alignment allows Simpson's rule to achieve its full fourth-order accuracy. This high-order performance stands in sharp contrast to standard FD-based methods for uncertain volatility problems, which rarely capture a global optimal control and generally remain limited to first-order accuracy, even for the same convex payoff functions.

\vspace*{+0.25cm}
\noindent \textbf{Accuracy and run-time comparison with \cite{MaForsyth2015}.}
We now compare our method against the unconditionally monotone FD approaches reported in \cite{MaForsyth2015}---to the best of our knowledge, the only published FD schemes for multi-dimensional uncertain volatility that guarantee unconditional monotonicity.
We note that the tree-grid method proposed in \cite{kossaczky20192d} is \emph{not} unconditionally monotone, and thus falls outside the scope of this comparison.

In Table~3 of \cite{MaForsyth2015}, for example, the ``hybrid scheme with rotation''—which appears to be among the more efficient schemes considered therein —on a $361\times361$ spatial grid with 100 timesteps, computes a worst-case option price of about $6.8542$ in $4{,}300.73\,\text{s}$ on a workstation with a 2.83\,GHz Intel Xeon CPU, yielding an absolute error of approximately $6.05\times10^{-3}$ relative to the exact solution $6.84769986$.
Refining the spatial grid to $721\times721$ with 200 timesteps
reduces the error to about $2.9\times10^{-3}$ (with a price of 6.8506) , but raises the run time drastically, up to $41{,}046.12\,\text{s}$.
Table~4 shows similar behavior for the ``pure wide-stencil scheme with rotation,''
with run times ranging from tens of thousands to over a hundred thousand CPU-seconds
(e.g.\ 116443.90\,\text{s} for the finest grid), yet still reporting errors on
the order of $10^{-2}$ to $10^{-3}$ for the worst-case option.

By contrast, our implementation in MATLAB on a standard desktop PC (Intel Core~i7 at 2.50\,GHz) achieves about $10^{-5}$ accuracy in a single timestep in under $30\,\text{s}$ using the composite trapezoidal rule, and about $10^{-11}$ accuracy in slightly more time with Simpson's rule (Tables~\ref{tab:result01b*}, \ref{tab:result02b}, \ref{tab:result_simpson_1}, and \ref{tab:result_simpson_2}).

We note, however, that single-step integration is only applicable in special cases,
such as convex payoffs, where the optimal control remains constant.
For general non-convex payoffs, multiple time steps are required. Even then, our method remains significantly faster.

More specifically, although the payoffs in Tables~\ref{tab:result01} and \ref{tab:result01b} are convex,
we employed multiple timesteps to demonstrate the method's runtime under a multi-step variant.
At the finest refinement levels in these tables (with $M=800$ timesteps), the total runtime
is about 21{,}000\,\text{s}---roughly half the 41{,}046.12\,\text{s} reported for the ``hybrid
scheme with rotation''---while achieving an error on the order of $10^{-5}$ (rather than only $10^{-3}$ with the hybrid FD scheme).

Although hardware, language, and code-optimization differences make a direct match approximate, the disparity in both accuracy and computational cost is striking. Our scheme bypasses the policy iteration and complex stenciling that FD methods require, offering a robust and efficient alternative for this class of problems.

Finally, we observe that the convergence behavior of the FD methods reported in
\cite{MaForsyth2015} is often erratic, with apparent orders varying inconsistently
across refinement levels (e.g.\ from 5.6 to 2.6, or from 1.3 to 1.5). In contrast,
our MPCCI method exhibits smooth and consistent convergence behavior across all
refinement levels, indicating that it offers greater stability in convergence trends—
a significant practical advantage.

\subsubsection{Butterfly options}
In the second test, we consider a butterfly option on the maximum of two assets.
For this option, the payoff function $p(e^x,e^y)$ is given by
\begin{align}
\label{eq:BSO_P2}
p(e^x,e^y) = \max\big(\max(e^x,e^y)-K_1,0\big)&-2\max\big(\max(e^x,e^y)-(K_1+K_2)/2,0\big)
\nonumber
\\
                    &+\max\big(\max(e^x,e^y)-K_2,0\big), \quad K_1, K_2>0.
\end{align}
For the butterfly payoff function \eqref{eq:BSO_P2}, a closed-form expression for the option price is unknown.
To estimate the convergence rate of the proposed MPCCI method, we calculate the ``Change'' as the difference in values from coarser to finer grids and the ``Ratio'' as the ratio of changes between successive grids.
We compare our prices against reference prices obtained by a FD method with pure wide stencil  rotation  developed in \cite{MaForsyth2015}, and by a tree-grid (TG) method of \cite{kossaczky20192d}.

Tables~\ref{tab:result02} and \ref{tab:result03} display the numerical prices for both the worst-case and best-case scenarios (short position), indicating that our method approximates first-order convergence. The comparison with reference prices shows minimal differences: against FD method prices, the discrepancies are around $6\times 10^{-3}$ for the worst-case and $1\times 10^{-3}$ for the best-case scenarios. When compared with TG prices, the differences are $2\times 10^{-3}$ and $4\times 10^{-3}$, respectively, highlighting the MPCCI method's precision.

\begin{figure}[htbp]
\centering
\begin{minipage}[t]{0.5\textwidth}
\centering
\begin{tabular}{lllll}
\hline
Level & Price & Change & Ratio \\ \hline
0 & 2.65092717 & & \\
1 & 2.66374754 & 0.0128 & \\
2 & 2.67280480 & 0.0091 & 1.42 \\
3 & 2.67793762 & 0.0051 & 1.76 \\
4 & 2.68070303 & 0.0028 & 1.86 \\ \hline
MC: 95\%-CI & \multicolumn{3}{l}{[2.6735, 2.6832]} \\
\hline
FD\ \cite{MaForsyth2015}&  2.6744 & &
\\
\hline
TG\ \cite{kossaczky20192d}&  2.6784 & &
\\
\hline
\end{tabular}
\captionof{table}{
Convergence study for a butterfly option (\underline{worst-case}, composite trapezoidal rule) under a two-factor uncertain volatility model  - payoff function in \eqref{eq:BSO_P2}. Reference prices: by FD method is 2.6744 \cite{MaForsyth2015} (finest level in Table~6 therein, Pure wide stencil (with rotation)), by TG method is 2.6784 \cite{kossaczky20192d} (finest level in Table~3 therein)}
\label{tab:result02}
\end{minipage}\hfill
\begin{minipage}[t]{0.5\textwidth}
\centering
\begin{tabular}{lllll}
\hline
Level & Price & Change & Ratio \\ \hline
0 & 0.94015237 & & \\
1 & 0.92418409 & -0.0138 & \\
2 & 0.91794734 & -0.0062 & 2.56 \\
3 & 0.91473085 & -0.0032 & 1.94 \\
4 & 0.91308945 & -0.0016 & 1.96 \\  \hline
MC : 95\%-CI & \multicolumn{3}{l}{[0.9120, 0.9182]}
\\
\hline
FD\ \cite{MaForsyth2015}&  0.9148 & &
\\
\hline
TG\ \cite{kossaczky20192d}&  0.9173  & &
\\
\hline
\end{tabular}
\captionof{table}{
Convergence study for a butterfly option (\underline{best-case}, composite trapezoidal rule) under a two-factor uncertain volatility model  - payoff function in \eqref{eq:BSO_P2}. Reference prices: (i) by FD method is 0.9148 \cite{MaForsyth2015} (finest level in Table~8 therein, Pure wide stencil (with rotation)), (ii) by TG method is 0.9173 \cite{kossaczky20192d} (finest level in Table~4 therein).}
\label{tab:result03}
\end{minipage}
\end{figure}

\subsection{Impact of spatial domain sizes}
\label{ssc:padding}
{\dangblue{In this subsection, we numerically validate the adequacy of our selected spatial domain for the experiments. We revisit the scenarios from Tables~\ref{tab:result01}, \ref{tab:result02}, and \ref{tab:result03},
this time doubling the lengths of the spatial domains. Specifically, we extend the spatial domain boundaries to
$x_{\min} = \ln(X_0)-2.4$, $x_{\max} = \ln(X_0)+2.4$, $y_{\min} = \ln(Y_0)-2.4$, $y_{\max} = \ln(Y_0)+2.4$,
with the number of intervals $N$ and $J$ also doubled to maintain the same $\Delta x$ and $\Delta y$.

The numerical prices from this extended domain, shown in Table~\ref{tab:result04}, are virtually identical with those obtained from the original smaller domain (reproduced under columns marked Tab.~\ref{tab:result01}, Tab.~\ref{tab:result02}, and Tab.~\ref{tab:result03}). This  indicates that enlarging the spatial computational domain further has a negligible effect on the numerical prices. Additionally, for a comprehensive analysis,
we conducted tests on smaller spatial domains with the boundaries set to $x_{\min} = \ln(X_0)-0.9$, $x_{\max} = \ln(X_0)+0.9$, $y_{\min} = \ln(Y_0)-0.9$, and $y_{\max} = \ln(Y_0)+0.9$, using the same $\Delta x$ and $\Delta y$ as in previous tests. The prices, presented in Table~\ref{tab:result08}, show slight discrepancies (from the fourth decimal digits) when compared to those
obtained from original domain size.

These findings affirm the adequacy of our computational domain, whose size was carefully chosen based on the upper bound for the boundary truncation error of the Green's function provided in \eqref{eq:boundgd}. This approach balances the need for demonstrating theoretical convergence and computational efficiency
in our analysis.
}}
\label{sec:lar_z}
\begin{table}[htb!]
\center
\begin{tabular}{lllllll}
\hline
\multirow{3}{*}{Level} & \multicolumn{6}{c}{Two-factor uncertain volatility model}          \\ \cline{2-7}
 & \multicolumn{2}{c}{European} & \multicolumn{2}{c}{Butterfly (worst)}    & \multicolumn{2}{c}{Butterfly (best)}                                   \\ \cline{2-7}
                & Price        & Price    & Price        & Price  & Price        & Price
                       \\
                        &         & (Tab.~\ref{tab:result01})    &        & (Tab.~\ref{tab:result02})  &        & (Tab.~\ref{tab:result03})
                            \\ \hline
0                      & 6.84492758    &  6.84492756        & 2.65092717    &   2.65092717      & 0.94015237    &   0.94015237                                                            \\
1                      & 6.84700691     &  6.84700690       & 2.66374754    &  2.66374754   & 0.92418409    &  0.92418409                   \\
2                      & 6.84752663     &  6.84752662    & 2.67280480     & 2.67280480  & 0.91794734     & 0.91794734    \\
3                      & 6.84765655     &  6.84765654     & 2.67793762    & 2.67793762   & 0.91473085   & 0.91473085    \\
4                      & 6.84768903     &  6.84768902    & 2.68070303      & 2.68070303   & 0.91308945    & 0.91308945      \\ \hline
\end{tabular}
\caption{Prices obtained using a \underline{larger} spatial computational domain:
$x_{\min} = \ln(X_0)-2.4$, $x_{\max} = \ln(X_0)+2.4$, $y_{\min} = \ln(Y_0)-2.4$, $y_{\max} = \ln(Y_0)+2.4$,
in comparison with prices in Table~\ref{tab:result01}, \ref{tab:result02}, \ref{tab:result03}
obtained with the original smaller domain $z_{\min} = \ln(Z_0)-1.2$, $z_{\max} = \ln(Z_0)+1.2$, for $z\in \{x,y\}$.
}
\label{tab:result04}
\end{table}

\label{sec:small_z}
\begin{table}[htb!]
\center
\begin{tabular}{lllllll}
\hline
\multirow{3}{*}{Level} & \multicolumn{6}{c}{Two-factor uncertain volatility model}          \\ \cline{2-7}
 & \multicolumn{2}{c}{European} & \multicolumn{2}{c}{Butterfly (worst)}    & \multicolumn{2}{c}{Butterfly (best)}                                   \\ \cline{2-7}
                & Price        & Price    & Price        & Price  & Price        & Price
                       \\
                        &         & (Tab.~\ref{tab:result01})    &        & (Tab.~\ref{tab:result02})  &        & (Tab.~\ref{tab:result03})
                            \\ \hline
0                      & 6.84490660    &  6.84492756        & 2.65092891    &   2.65092717      & 0.94014513    &   0.94015237                                                            \\
1                      & 6.84698614     &  6.84700690       & 2.66375021    &  2.66374754   & 0.92417329    &  0.92418409                   \\
2                      & 6.84750559     &  6.84752662    & 2.67280820     & 2.67280480  & 0.91793374     & 0.91794734    \\
3                      & 6.84763508     &  6.84765654     & 2.67794155    & 2.67793762   & 0.91471524   & 0.91473085    \\
4                      & 6.84766711     &  6.84768902    & 2.68070724      & 2.68070303   & 0.91307245    & 0.91308945      \\ \hline
\end{tabular}
\caption{
Prices obtained using a smaller computational domain: $x_{\min} = \ln(X_0)-0.9$, $x_{\max} = \ln(X_0)+0.9$.
$y_{\min} = \ln(Y_0)-0.9$, and $y_{\max} = \ln(Y_0)+0.9$.  Compare with prices in Table~\ref{tab:result01}, \ref{tab:result02}, \ref{tab:result03}, where $z_{\min} = \ln(Z_0)-1.2$, $z_{\max} = \ln(Z_0)+1.2$, for $z\in \{x,y\}$.
}
\label{tab:result08}
\end{table}

\subsection{Impact of boundary conditions}
\label{sec:constant_pad}
In this subsection, we numerically demonstrate that our straightforward approach of employing discounted payoffs for boundary sub-domains is adequate. We revisited previous experiments reported in Tables~\ref{tab:result01}, \ref{tab:result02}, and \ref{tab:result03}, introducing sophisticated boundary conditions based on the asymptotic behavior of the HJB equation \eqref{eq:omega_inf_all} as $z \to -\infty$ and $z \to \infty$ for $z\in\{x,y\}$ as proposed in \cite{MaForsyth2015}.
Specifically,  the HJB equation \eqref{eq:omega_inf_all} simplifies to the 1D forms shown in \eqref{eq:hjbxymin} when $x$ or $y$ tends to $-\infty$:
\EQ
\label{eq:hjbxymin}
\begin{aligned}
v_{\tau} - \sup_{\sigma_y \in \Al_y }\big\{ \big(r-(\sigma_{y})^2/2\big)v_{y}  +   (\sigma_{y})^2/2 v_{yy} \big\} + rv &= 0, \quad x \to -\infty,
\\
v_{\tau} - \sup_{\sigma_x \in \Al_x}\big\{ \big(r-(\sigma_{x})^2/2\big)v_{x} +   (\sigma_{x})^2/2v_{xx} \big\} +rv &= 0, \quad y \to -\infty.
\end{aligned}
\EN
As  $x, y \to -\infty$, the HJB equation \eqref{eq:omega_inf_all} simplifies to the ordinary differential equation $v_{\tau} + rv = 0$.

To adhere to these asymptotic boundary conditions, we choose a much large spatial domain: $x_{\min} = \ln(X_0)-9.6$, $x_{\max} = \ln(X_0)+9.6$, $y_{\min} = \ln(Y_0)-9.6$, $y_{\max} = \ln(Y_0)+9.6$,  and adjust the number of intervals $N$ and $J$ accordingly to maintain the same grid resolution ($\Delta x$ and $\Delta y$).
Employing the monotone piecewise constant control integration technique, tailored for the 1D case, we solved the 1D HJB equations in \eqref{eq:hjbxymin}. The ordinary differential equation $v_{\tau} + rv = 0$ is solved directly and efficiently. The scheme's convergence to the viscosity solution can be rigorously established in the same fashion as the propose scheme.

The resulting option prices, listed in Table~\ref{tab:result10}, are virtually identical with those from the original settings (under columns marked with Tab.~\ref{tab:result01}, Tab.~\ref{tab:result02}, and Tab.~\ref{tab:result03}).
These results confirm the effectiveness of our simple boundary conditions, demonstrating that they are both easy to implement and sufficient for the theoretical and practical demands of our numerical experiments.

\begin{table}[htb!]
\center
\begin{tabular}{lllllll}
\hline
\multirow{3}{*}{Level} & \multicolumn{6}{c}{Two-factor uncertain volatility model}          \\ \cline{2-7}
 & \multicolumn{2}{c}{European} & \multicolumn{2}{c}{Butterfly (worst)}    & \multicolumn{2}{c}{Butterfly (best)}                                   \\ \cline{2-7}
                & Price        & Price    & Price        & Price  & Price        & Price
                       \\
                        &         & (Tab.~\ref{tab:result01})    &        & (Tab.~\ref{tab:result02})  &        & (Tab.~\ref{tab:result03})
                            \\ \hline
0                      & 6.84492760    &  6.84492756        & 2.65092717    &   2.65092717      & 0.94015237    &   0.94015237                                                            \\
1                      & 6.84700690     &  6.84700690       & 2.66374754    &  2.66374754   & 0.92418410    &  0.92418409                   \\
2                      & 6.84752663     &  6.84752662    & 2.67280480      & 2.67280480  & 0.91794734     & 0.91794734    \\
3                      & 6.84765655     &  6.84765654     & 2.67793762   & 2.67793762   & 0.91473085   & 0.91473085    \\
4                      & 6.84768903     &  6.84768902    & 2.68070303     & 2.68070303   & 0.91308945    & 0.91308945      \\ \hline
\end{tabular}
\caption{Results using sophisticated boundary conditions~\eqref{eq:hjbxymin}. Compare with results in Table~\ref{tab:result01}, \ref{tab:result02}, \ref{tab:result03} where simple boundary conditions based on discounted payoffs are used.}
\label{tab:result10}
\end{table}

\section{Conclusion}
\label{sc:conclude}

In this paper, we have presented a novel and streamlined approach for solving 2D HJB PDEs arising from two-factor uncertain volatility models with uncertain correlation.


Departing from the traditional ``discretize, then optimize'' strategy, our
``decompose and integrate, then optimize'' method leverages a piecewise constant
control technique, which, over each timestep, yields a set of independent 2D linear
PDEs---each corresponding to a discretized control value---that are solved using
Green's function convolution. The resulting solutions are then combined to obtain
the value function and optimal control, effectively addressing the nonlinearity of
the HJB equation and significantly simplifying the optimization process.

Our main contributions include the development of a monotone piecewise constant
control numerical integration scheme that uses closed-form Green's functions to
evaluate these convolution integrals. This avoids discretizing spatial derivatives
and, in particular, simplifies the treatment of cross-derivative terms—an advantage
over conventional finite difference methods. We have also implemented our scheme
efficiently using FFT and circulant convolution, exploiting the Toeplitz structure
of the convolution kernels to accelerate both inner and double summations via 2D FFTs.

We have mathematically demonstrated the {\imapurple{$\ell_\infty$-stability}} and
consistency of the scheme in the viscosity sense, along with its pointwise convergence
to the viscosity solution of the HJB equation. Extensive numerical experiments show excellent agreement with benchmark solutions, while also demonstrating significantly improved accuracy and run time compared to unconditionally monotone finite difference methods, thereby highlighting the robustness and efficiency of our approach.

Although our focus has been on uncertain volatility models, the overall framework—
built on piecewise constant control, monotone integration, and Green's function methods—
is general and may be adapted to a broader class of HJB equations in finance, offering
a promising direction for future research.

\section*{Acknowledgments}
The authors are grateful to the two anonymous referees and the Associate Editor for their constructive comments and suggestions, which have significantly improved the quality of this work.


\section*{Appendices}
\appendix
\section{Special case $\boldsymbol{\rho = \pm 1}$}
\label{app:rho_ex}
\subsection{Approximation of $\boldsymbol{\delta(\cdot)}$}
\label{app:app_del}
In this appendix, we detail key elements of the proposed scheme for the case $\rho = \pm 1$ as highlighted in Remark~\ref{rm:rho}, and provide selected numerical results. The key challenge is that, for computational purposes, the Dirac delta function $\delta(y - (a + \rho bx))$, where $a = \mu_y - \rho b \mu_x$ with $b = \frac{\sigma_y}{\sigma_x}$, needs to be approximated. We focus on the case $\rho = 1$. The analysis for $\rho = -1$ follows similarly and is omitted. Using a conditional density approach, $\delta(y - (a + \rho bx))$ is approximated by a Gaussian (a conditional density) when the correlation coefficient is $\hat{\rho}$ with $\hat{\rho} \nearrow 1$ \cite{glazunov2012note}:
\EQ
\label{eq:vareptwo}
\delta(y - (a + \rho bx)) = \lim_{\hat{\rho} \nearrow 1} \delta_{\hat{\rho}} (y - (a + \rho bx)), \text{ where } \delta_{\hat{\rho}}(\gamma) = \frac{\exp\left(-\frac{\gamma^2}{2 \kappa_y^2 (1 - \rhoh^2)}\right)}{ \sqrt{ 2 \pi} \kappa_y\sqrt{1 -\rhoh^2}} , \text{ and } \kappa_y = \sigma_y \sqrt{\Delta \tau}.
\EN
For a fixed $\alpha$, recall the exact Green's function $g_\alpha(x, y; \Delta \tau)$ defined in \ref{eq:g_k_rho}. We define by $\hat{g}_\alpha(x, y; \Delta \tau)$ an approximation to $g_{\alpha}(x, y; \Delta \tau)$ obtained by replacing $\delta(y - (a + \rho bx))$ by $\delta_{\hat{\rho}}(y - (a + \rho bx))$. Formally,
\EQ
\label{gh}
\hat{g}_{\al}(x, y; \Delta \tau) = e^{-r\Delta \tau} \frac{1}{\sqrt{2\pi} \kappa_x} \exp\left( -\frac{(x - \mu_x)^2}{2\kappa_x^2} \right) {{\delta_{\rhoh} (y - (a +\rho bx))}},
\EN
where $\delta_{\rhoh}(\cdot)$ is defined in \eqref{eq:vareptwo}, and { {$a = \mu_y - \rho b \mu_x$}} and $b = \frac{\sigma_y}{\sigma_x}$. The function $\hat{g}_\alpha(x, y; \Delta \tau)$ is the weight function for our scheme. {\imablue{Using the same techniques as those employed in Lemma \ref{lemma:truncation},
we can show that, for each fixed $\hat{\rho}$, the bound \eqref{eq:boundgn} also holds for $\hat{g}_{\al}(x, y; \Delta \tau)$ defined in \eqref{gh}, i.e.\
\EQS
\iint_{\Rbb^2\setminus \mathbf{D}^{\dagger}}\hat{g}_{\alpha}\left(x,y;\Delta \tau\right) dxdy
<
C \sqrt{\Delta \tau} e^{-\frac{1}{2\Delta \tau}},
\quad
\mathbf{D}^{\dagger}\equiv [x_{\min}^{\dagger}, x_{\max}^{\dagger}] \times [y_{\min}^{\dagger}, y_{\max}^{\dagger}],
\ENS
where $C$ is a bounded constant independently of $\Delta \tau$ and $\rhoh$.}}
In this case, our scheme is monotone, and it is straightforward to show that it is $\ell_{\infty}$-stable.
The selection of $\rhoh$ is crucial for the scheme's consistency. Below, we show that choosing $\rhoh$ appropriately can achieve first-order consistency for the scheme.

\subsection{Consistency}
For the rest of the proof,  we let {\zblue{$C$}} be generic bounded constant independent of the discretization parameter $h$,
which may take different values from line to line.
We re-examine the proof of Lemma~\ref{lemma:error_smooth}, now utilising $\hat{g}_{\al}(x, y; \Delta \tau)$ from \eqref{gh} instead of $g_{\al}(x, y; \Delta \tau)$.
{\zblue{For a smooth test function $\phi$, and recalling the smooth function $\chi \in L^1(\Rbb^2)$
with bounded derivatives up to second-order in $\Rbb^2$, a mollified version of $\phi$, we have}}
\begin{align}
\label{eq:doublesum}
&\Delta x \Delta y
\mysum_{l\in\mathbb{N}^{\dagger}}^{d\in\mathbb{N}^{\dagger}} \varphi_{l, d}~
    \hat{g}_{n-l, j - d}^{\alpha}~  \phi_{l,d}^{m}~
= \iint_{\Rbb^{2}} \hat{g}_{\alpha}\l(x_n - x, y_j - y;\Delta \tau\r)~ \chi(x, y) ~dx~dy
 + \mathcal{O}(h^2) +  \mathcal{O}\big(he^{-1/h}\big)
 \nonumber
 \\
 & = \iint_{\Rbb^{2}} g_{\alpha}\l(x_n - x, y_j - y;\Delta \tau\r)~ \chi(x, y) ~dx~dy
 \nonumber
 \\
 &+ \iint_{\Rbb^{2}} (\hat{g}_{\alpha}(x_n - x, y_j - y;\Delta \tau) - g_{\alpha}(x_n - x, y_j - y;\Delta \tau))~ \chi(x, y) ~dx~dy
 + \mathcal{O}(h^2).
 \end{align}
We now focus on the error term (the second term) in \eqref{eq:doublesum}, expressed through substitutions as
 \begin{align}
 \label{eq:mid}
  \int_{\Rbb} \frac{e^{-r \Delta \tau}}{\sqrt{2\pi} \kappa_x} \exp\left( -\frac{(x - \mu_x)^2}{2\kappa_x^2} \right)
\bigg(\int_{\Rbb} (\delta_{\rhoh}{\zblue{(y - (a + \rho bx))}} - \delta(y - (a + \rho bx))) ~ \chi(x_n-x, y_j-y) dy\bigg) dx.
  \end{align}
Regarding the inner integral, we have $\int_{\Rbb} (\delta_{\rhoh}{\zblue{(y - (a + \rho bx))}} - \delta(y - (a + \rho bx))) ~ \chi(x_n-x, y_j-y) dy = \ldots$
\begin{align}
\label{eq:first}
\ldots=
\int_{\Rbb} \delta_{\rhoh}{\zblue{(y - (a + \rho bx))}} ~\chi(x_n-x, y_j-y)~dy - \chi(x_n - x, y_j - (a + \rho bx)).
\end{align}
Here, the second term in \eqref{eq:first} is due to from the sifting property of the Delta function.
Letting $\gamma = y_j -{\zblue{ (a + \rho bx)}}$ and applying a change of variables, the integral in \eqref{eq:first} is reformulated as
\begin{align}
\label{eq:reform}
\int_{\Rbb} \delta_{\rhoh}\big(y - \gamma\big) \chi(x_n -x, y) dy.
\end{align}
By Taylor's series expansion, we have
\begin{align*}
\chi(\cdot, y) = \chi(\cdot, \gamma + (y - \gamma))= \chi(\cdot, \gamma)
 + (y - \gamma) \frac{\partial \chi}{\partial y}(\cdot, \gamma) +
\frac{(y - \gamma)^2}{2} \frac{\partial^2 \chi}{\partial y^2}(\cdot, \gamma) + o((y - \gamma)^2).
\end{align*}
So $\ds \int_{\Rbb} \delta_{\rhoh}\big(y - \gamma\big) \chi(x_n-x, y) dy = \ldots$
\begin{align}
\label{eq:tay}
\ldots
= \int_{\Rbb} \delta_{\rhoh}\big(y - \gamma\big)
\left( \chi(x_n-x, \gamma) + (y - \gamma) \frac{\partial \chi}{\partial y}(x_n-x, \gamma)
+
\frac{(y - \gamma)^2}{2} \frac{\partial^2 \chi}{\partial y^2}(x_n-x, \gamma)
+ o((y - \gamma)^2)\right) dy.
\end{align}
Terms in \eqref{eq:tay} are further simplified as follows
\EQ
\label{eq:terms}
\begin{aligned}
\int_{\Rbb} \delta_{\rhoh}\big(y - \gamma\big)~\chi(\cdot, \gamma)~dy & = \chi(\cdot, \gamma) \int_{\Rbb} \delta_{\rhoh}\big(y - \gamma\big)~dy  = \chi(\cdot, \gamma),
\\
\int_{\Rbb} \delta_{\rhoh}\big(y - \gamma\big) (y - \gamma) \frac{\partial \chi}{\partial y}(\cdot, \gamma)~dy
&= \frac{\partial \chi}{\partial y}(\cdot, \gamma) \int_{\Rbb} \delta_{\rhoh}\big(y - \gamma\big) (y - \gamma) ~dy = 0,
\\
\int_{\Rbb} \delta_{\rhoh}\big(y - \gamma\big)
\frac{(y - \gamma)^2}{2} \frac{\partial^2 \chi}{\partial y^2}(\cdot, \gamma)~dy
& =
\frac{\partial^2 \chi}{\partial y^2}(x, \gamma) \int_{\Rbb} \delta_{\rhoh}\big(y - \gamma\big)
\frac{(y - \gamma)^2}{2} ~dy = \frac{\kappa_y^2 (1 - \rhoh^2)}{4}~ \frac{\partial^2 \chi}{\partial y^2}(\cdot, \gamma).
\end{aligned}
\EN
{\dangblue{Next, we substitute \eqref{eq:terms} into \eqref{eq:reform} which is the first term in \eqref{eq:first}, {\zblue{noting
 that the term  $\chi(\cdot, \gamma)$ in \eqref{eq:terms} is indeed $ \chi(x_n - x , y_j - (a + \rho bx))$
 and it  cancels with the term $\chi(x_n - x , y_j - (a + \rho bx))$ in \eqref{eq:first}.
Therefore, due to boundedness of the derivatives of $\chi(\cdot)$, the error term \eqref{eq:mid} becomes
$C\kappa_y^2 (1 - \rhoh^2)$.
Therefore, noting $\kappa_y = \sigma_y \sqrt{\Delta \tau}$, \eqref{eq:doublesum}  becomes
\begin{align}
\label{eq:doubleTwo}
\Delta x \Delta y
\mysum_{l\in\mathbb{N}^{\dagger}}^{d\in\mathbb{N}^{\dagger}} \varphi_{l, d}~
    \hat{g}_{n-l, j - d}^{\alpha}~  \phi_{l,d}^{m}~
=
\iint_{\Rbb^{2}} g_{\alpha}\l(x_n - x, y_j - y;\Delta \tau\r)~ \chi(x, y) ~dx~dy
+ C\Delta \tau (1 - \rhoh^2)+\Ocal(h^2).
\end{align}
Now we re-examine Lemma~\ref{lemma:consistency} with \eqref{eq:doubleTwo} in mind. Here,
since we need to achieve $C\Delta \tau (1 - \rhoh^2)/h \to 0$ as $h \to 0$,
$\rhoh$ needs to be such that $(1- \rhoh^2) \to 0$ as $h\to 0$.
A possible choice  is $\rhoh = \sqrt{1 - Ch}$, which gives
$(1 - \rhoh^2)  = \Ocal(h)$, and we obtain the same overall
$\Ocal(h)$ error as in Lemma~\ref{lemma:consistency} for scenarios $|\rho|<1$.
}}
}}

\subsection{Select numerical experiments}
\begin{figure}[H]
\centering
\begin{minipage}[t]{0.5\textwidth}
\centering
\begin{tabular}{lllll}
\hline
Level & Price & Error & Ratio \\ \hline
0 & 8.41173784 & 3.67e-03 & \\
1 & 8.41450094 & 9.07e-04 & 4.00 \\
2 & 8.41519144 & 2.16e-04 & 4.19 \\
3 & 8.41536400 & 5.15e-05 & 4.20\\
4 & 8.41540714 & 1.17e-05&  4.40\\
\hline
 Ref.\ \cite{Stulz1982}  &8.41540757 & &
\\       \hline
 MC: 95\%-CI     & \multicolumn{3}{l}{[8.3991, 8.4314]}   \\ \hline
\end{tabular}
\captionof{table}{
Convergence study for a European call option on the maximum of two risky assets under two-factor uncertain volatility model (worst case) with $\rho\in [-1, 1]$ - payoff function in \eqref{eq:BSO_P1}. The closed form solution is obtained using \cite{Stulz1982} with fixed parameters $\sigma_x^* = 0.5$, $\sigma_y^* = 0.5$ and $\rho^* = -1$.}
\label{tab:result07}
\end{minipage}\hfill
\begin{minipage}[t]{0.5\textwidth}
\centering
\begin{tabular}{lllll}
\hline
Level & Price & Error & Ratio \\ \hline
0 & 4.20586189 & 1.84e-03& \\
1 & 4.20724358 & 4.60e-04 & 4.00\\
2 & 4.20758890 & 1.15e-04 & 4.00 \\
3 & 4.20767522 & 2.86e-05 & 4.02 \\
4 & 4.20769680 & 7.02e-06 & 4.07 \\
\hline
 Ref.\ \cite{Stulz1982}  &4.20770382 & &
\\       \hline
 MC: 95\%-CI      & \multicolumn{3}{l}{[4.2026, 4.2208]}   \\ \hline
\end{tabular}
\captionof{table}{
Convergence study for a European call option on the maximum of two risky assets under two-factor uncertain volatility model (best case) with $\rho\in [-1, 1]$ - payoff function in \eqref{eq:BSO_P1}. The closed form solution is obtained using \cite{Stulz1982}
with fixed parameters $\sigma_x^* = 0.5$, $\sigma_y^* = 0.5$ and $\rho^* = 1$.
}
\label{tab:result08*}
\end{minipage}
\end{figure}
{\zblue{In the above, we show that a possible choice for $\rhoh$ is  $\sqrt{1 - Ch}$, where $C$ is a bounded constant
independently of $h$.  We now present a heuristic method to determine $C$.
In the numerical experiments for these special cases, we choose $\Delta y = 6 \kappa_y\sqrt{1-\rhoh^2}$, resulting in
$\rhoh = \sqrt{1-\frac{\Delta y^2 }{(6\kappa_y)^2}}$. To avoid the necessity for interpolation, for a given $\al\in\Al_h$, we adjust the partition for $y$-direction so that each of pair $(x_n,~ a+\rho b x_n)$, where $n \in \mathbb{N}$, aligns with the grid points.
In Tables~\ref{tab:result07} and \ref{tab:result08*}, we present the worst-case and best-case prices for the short position in the case of European rainbow options with payoff function in \eqref{eq:BSO_P1} with $\rho\in [-1, 1]$. Other parameters given in Table~\ref{tab:parameter02}, and the mesh size and timestep refinement levels are in Table~\ref{tab:step01}.
The closed-form solution is obtained using \cite{Stulz1982} with fixed parameters $\sigma_x = 0.5$, $\sigma_y = 0.5$ and
$\rho = \{-1, 1\}$.  It is evident that the numerical prices show excellent agreement with the closed-form solutions,
and also exhibit approximately second-order of convergence, which aligns with our explanations
in  Section~\ref{sssec:Eu}.
}}

\section{Details of padding matrices}
\label{app:mats}
In this appendix, we provide details of the padding matrices $\tilde{\gb}^{q,\al}_{-1,0}$, $\tilde{\gb}^{q,\al}_{1,0}$, $\tilde{\gb}^{q,\al}_{-1,1}$, $\tilde{\gb}^{q,\al}_{0,1}$ and $\tilde{\gb}^{q,\al}_{1,1}$ for the circulant matrix $\tilde{\gb}^{\al}_{q}$ defined in \eqref{eq:cir_matrix_1D}. These padding matrices are defined as follows
\EQAS
\tilde{\gb}^{q,\al}_{-1,0} &=&
\l[\begin{array}{cccccccc}
    g^{\al}_{-N/2+1,q} & g^{\al}_{-N/2,q} & \dots & g^{\al}_{-3N/2+1,q} & g^{\al}_{3N/2-1,q} & g^{\al}_{3N/2-2,q} & \dots & g^{\al}_{N/2,q} \\
    g^{\al}_{-N/2+2,q} & g^{\al}_{-N/2+1,q} & \dots & g^{\al}_{-3N/2+2,q} & g^{\al}_{-3N/2+1,q} & g^{\al}_{3N/2-1,q} & \dots & g^{\al}_{N/2+1,q}\\
    \vdots & \vdots & & \vdots & \vdots & \vdots & & \vdots \\
    g^{\al}_{N/2,q} & g^{\al}_{N/2-1,q} & \dots & g^{\al}_{-N/2,q} & g^{\al}_{-N/2-1,q} & g^{\al}_{-N/2-2,q} & \dots & g^{\al}_{3N/2-1,q}
\end{array}\r]_{N \times (2N+1)},
\ENAS
\EQAS
\tilde{\gb}^{q,\al}_{1,0} &=&
\l[\begin{array}{cccccccc}
    g^{\al}_{-3N/2+1,q} & g^{\al}_{3N/2-1,q} & g^{\al}_{3N/2-2,q} & \dots & g^{\al}_{N/2+1,q} & g^{\al}_{N/2,q} & \dots & g^{\al}_{-N/2,q} \\
    g^{\al}_{-3N/2+2,q} & g^{\al}_{-3N/2+1,q} & g^{\al}_{3N/2-1,q}& \dots & g^{\al}_{N/2+2,q} & g^{\al}_{N/2+1,q} & \dots & g^{\al}_{-N/2+1,q}\\
    \vdots & \vdots & \vdots & & \vdots & \vdots & & \vdots \\
    g^{\al}_{-N/2,q} & g^{\al}_{-N/2-1,q} & g^{\al}_{-N/2-2,q} & \dots & g^{\al}_{-3N/2+1,q} & g^{\al}_{3N/2-1,q}  & \dots & g^{\al}_{N/2-1,q}
\end{array}\r]_{N \times (2N+1)},
\ENAS
\EQAS
\tilde{\gb}^{q,\al}_{-1,1} &=&
\l[\begin{array}{cccc}
    g^{\al}_{N/2-1,q} & g^{\al}_{N/2-2,q} & \dots & g^{\al}_{-N/2+2,q} \\
    g^{\al}_{N/2,q} & g^{\al}_{N/2-1,q} & \dots & g^{\al}_{-N/2+3,q} \\
    \vdots & \vdots & & \vdots \\
    g^{\al}_{3N/2-2,q} & g^{\al}_{3N/2-3,q} & \dots & g^{\al}_{N/2+1,q}
\end{array}\r]_{N \times (N-2)},
\nonumber
\ENAS
\EQAS
\tilde{\gb}^{q,\al}_{0,1} &=&
\l[\begin{array}{ccccc}
    g^{\al}_{3N/2-1,q} & g^{\al}_{3N/2-2,q}& \dots & g^{\al}_{N/2+3,q} & g^{\al}_{N/2+2,q}\\
    g^{\al}_{-3N/2+1,q} & g^{\al}_{3N/2-1,q} & \dots & g^{\al}_{N/2+4,q} & g^{\al}_{N/2+3,q} \\
    \vdots & \vdots &  & \vdots & \vdots  \\
	g^{\al}_{-N/2-2,q} & g^{\al}_{-N/2-3,q} & \dots & g^{\al}_{-3N/2+2,q} & g^{\al}_{-3N/2+1,q}
\end{array}\r]_{(N-1) \times (N-2)},
\ENAS
\EQAS
\tilde{\gb}^{q,\al}_{1,1} &=&
\l[\begin{array}{cccc}
    g^{\al}_{-N/2-1,q} & g^{\al}_{-N/2-2,q}  & \dots & g^{\al}_{-3N/2+2,q} \\
    g^{\al}_{-N/2,q} & g^{\al}_{-N/2-1,q}  & \dots & g^{\al}_{-3N/2+3,q}\\
    \vdots & \vdots & & \vdots \\
    g^{\al}_{N/2-2,q} & g^{\al}_{N/2-1,q} & \dots & g^{\al}_{-N/2+1,q}
\end{array}\r]_{N \times (N-2)}.
\nonumber
\ENAS
Next, we provide details of padding matrices $\Bs_{-1,0}$, $\Bs_{1,0}$, $\Bs_{-1,1}$, $\Bs_{0,1}$ and $\Bs_{1,1}$
for the 2D circulant matrix $\tilde{\gb}^{\al}$ defined in \eqref {cir_t}. These matrices are defined as follows
\begin{align*}
\Bs_{-1,0} &=  \l[~~
 \begin{matrix}
    \tilde{\gb}^{\al}_{-J/2+1} &\tilde{\gb}^{\al}_{-J/2} &\ldots &\tilde{\gb}^{\al}_{-3J/2+1} & \tilde{\gb}^{\al}_{3J/2-1}&\tilde{\gb}^{\al}_{3J/2-2}&\ldots &\tilde{\gb}^{\al}_{J/2}\\
    \tilde{\gb}^{\al}_{-J/2+2} &\tilde{\gb}^{\al}_{-J/2+1} &\ldots &\tilde{\gb}^{\al}_{-3J/2+2} & \tilde{\gb}^{\al}_{-3J/2+1}&\tilde{\gb}^{\al}_{3J/2-1}&\ldots &\tilde{\gb}^{\al}_{J/2+1}\\
    \vdots &\vdots &  &\vdots &\vdots&\vdots&  &\vdots\\
    \tilde{\gb}^{\al}_{J/2} &\tilde{\gb}^{\al}_{J/2-1} &\ldots &\tilde{\gb}^{\al}_{-J/2} & \tilde{\gb}^{\al}_{-J/2-1}&\tilde{\gb}^{\al}_{-J/2-2}&\ldots &\tilde{\gb}^{\al}_{3J/2-1}
\end{matrix}
~~\r]_{(3N-1)J\times (3N-1)(2J+1)},
\end{align*}
\begin{align*}
\Bs_{1,0} &=  \l[~~
 \begin{matrix}
    \tilde{\gb}^{\al}_{-3J/2+1} &\tilde{\gb}^{\al}_{3J/2-1} &\tilde{\gb}^{\al}_{3J/2-2} &\ldots &\tilde{\gb}^{\al}_{J/2+1} & \tilde{\gb}^{\al}_{J/2}&\ldots &\tilde{\gb}^{\al}_{-J/2}\\
      \tilde{\gb}^{\al}_{-3J/2+2} &\tilde{\gb}^{\al}_{-3J/2+1} &\tilde{\gb}^{\al}_{3J/2-1} &\ldots &\tilde{\gb}^{\al}_{J/2+2} & \tilde{\gb}^{\al}_{J/2+1}&\ldots &\tilde{\gb}^{\al}_{-J/2+1}\\
    \vdots &\vdots   &\vdots& &\vdots&\vdots&  &\vdots\\
    \tilde{\gb}^{\al}_{-J/2} &\tilde{\gb}^{\al}_{-J/2-1} &\tilde{\gb}^{\al}_{-J/2-2} &\ldots & \tilde{\gb}^{\al}_{-3J/2+1}&\tilde{\gb}^{\al}_{3J/2-1}&\ldots &\tilde{\gb}^{\al}_{J/2-1}
\end{matrix}
~~\r]_{(3N-1)J\times (3N-1)(2J+1)},
\end{align*}
\begin{align*}
\Bs_{-1,1} &=  \l[~~
 \begin{matrix}
    \tilde{\gb}^{\al}_{J/2-1} &\tilde{\gb}^{\al}_{J/2-2} &\ldots &\tilde{\gb}^{\al}_{-J/2+2}\\
    \tilde{\gb}^{\al}_{J/2} &\tilde{\gb}^{\al}_{J/2-1} &\ldots &\tilde{\gb}^{\al}_{-J/2+3}\\
    \vdots &\vdots& &\vdots\\
    \tilde{\gb}^{\al}_{3J/2-2} &\tilde{\gb}^{\al}_{3J/2-3} &\ldots &\tilde{\gb}^{\al}_{J/2+1}
\end{matrix}
~~\r]_{(3N-1)J\times(3N-1)(J-2)},
\end{align*}
\begin{align*}
\Bs_{0,1} &=  \l[~~
 \begin{matrix}
    \tilde{\gb}^{\al}_{3J/2-1} &\tilde{\gb}^{\al}_{3J/2-2} &\ldots &\tilde{\gb}^{\al}_{J/2+3} &\tilde{\gb}^{\al}_{J/2+2}\\
    \tilde{\gb}^{\al}_{-3J/2+1} &\tilde{\gb}^{\al}_{3J/2-1} &\ldots &\tilde{\gb}^{\al}_{J/2+4} &\tilde{\gb}^{\al}_{J/2+3}\\
    \vdots &\vdots& &\vdots&\vdots\\
    \tilde{\gb}^{\al}_{-J/2-2} &\tilde{\gb}^{\al}_{-J/2-3} &\ldots &\tilde{\gb}^{\al}_{-3J/2+2} &\tilde{\gb}^{\al}_{-3J/2+1}
\end{matrix}
~~\r]_{(3N-1)(J-1)\times(3N-1)(J-2)},
\end{align*}
\begin{align*}
\Bs_{1,1} &=  \l[~~
 \begin{matrix}
    \tilde{\gb}^{\al}_{-J/2-1} &\tilde{\gb}^{\al}_{-J/2-2} &\ldots &\tilde{\gb}^{\al}_{-3J/2+2}\\
    \tilde{\gb}^{\al}_{-J/2} &\tilde{\gb}^{\al}_{-J/2-1} &\ldots &\tilde{\gb}^{\al}_{-3J/2+3}\\
    \vdots &\vdots& &\vdots\\
    \tilde{\gb}^{\al}_{J/2-2} &\tilde{\gb}^{\al}_{J/2-1} &\ldots &\tilde{\gb}^{\al}_{-J/2+1}
\end{matrix}
~~\r]_{(3N-1)J\times(3N-1)(J-1)}.\nonumber
\end{align*}

\end{document}